\newtheorem{thm}{Theorem}[section]
\newtheorem{prop}[thm]{Proposition}
\newtheorem{lemma}[thm]{Lemma}
\newtheorem{claim}[thm]{Claim}
\newtheorem{cor}[thm]{Corollary}
\newtheorem{obs}[thm]{Observation}
\newtheorem{rem}[thm]{Remark}
\newtheorem{defn}[thm]{Definition}
\newtheorem{fact}[thm]{Fact}
\newtheorem{example}[thm]{Example}
\newcommand{\set}[2]{\{#1_1,#1_2, \ldots, #1_{#2}\}} 
\newcommand{\ov}[1]{\overline{#1}}
\newcommand{\B}{\mathcal{B}}
\newcommand{\M}{\mathcal{M}}
\newcommand{\h}{\mathcal{H}}
\newtheorem*{SankAlg}{Sankoff's Algorithm}
\newtheorem*{FitAlg}{Fitch's Algorithm}
\title[Computational complexity of calculating partition functions of medians]{The computational complexity of calculating partition \\functions of
optimal medians with Hamming distance}
\author{Istv\'an Mikl\'os$^{*,\dagger}$}
\author{Heather Smith$^\ddagger$}
\thanks{Email:  {\tt miklos.istvan@renyi.mta.hu} (Istv\'an Mikl\'os) {\tt heather.smith@math.gatech.edu} (Heather Smith)}
\address{$^*$ MTA R\'enyi Institute, Re\'altanoda u. 13-15, 1053 Budapest, Hungary}
\address{$^\dagger$ MTA SZTAKI, L\'agym\'anyosi u. 11, 1111 Budapest, Hungary}
\address{$^\ddagger$ School of Mathematics, Georgia Institute of Mathematics, Atlanta, GA 30332, USA}
\begin{document}

\begin{abstract}
In this paper, we show that calculating the partition function of optimal medians of binary strings with Hamming distance is \#P-complete for several weight functions. The case when the weight function is the factorial function has application in bioinformatics. In that case, the partition function counts the most parsimonious evolutionary scenarios on a star tree under several models in bioinformatics. The results are extended to binary trees and we show that it is also \#P-complete to calculate the most parsimonious evolutionary scenarios on an arbitrary binary tree under the substitution model of biological sequences and under the Single Cut-or-Join model for genome rearrangements.
\end{abstract}

\maketitle

 \section{The Partition Function}
 \label{sec:intro}
 
 For $n,m\in \mathbb{Z}^+$, fix a multiset of binary strings $B=\{\nu_1, \nu_2, \ldots, \nu_m\}$ where $\nu_i\in \{0,1\}^n$ for each $i\in [m]$. An optimal median $\mu$ is a binary string also in $\{0,1\}^n$ which minimizes $\sum_{i=1}^m H(\nu_i,\mu)$ where $H(\nu_i,\mu)$ is the Hamming distance between $\nu_i$ and $\mu$. Define $\mathcal{M}(B)$ to be the set of all optimal medians for the multiset $B$. 
 
Take $f$ to be a non-negative, real-valued function. In this paper, we examine the partition function 
\[Z(B,f(x)) = \sum_{\mu\in\mathcal{M}(B)} \prod_{i\in[m]} f(H(\nu_i, \mu)).\]
 
 The case when $f(x) := x!$ has application to phylogenetic trees and genome rearrangement.  Under the 
\textit{Single Cut-or-Join} \parencite{Feijao}  model for genome rearrangement, 
genomes are represented as edge labelled directed graphs forming paths and cycles, the direction of the edges along any path and cycle might vary. Such a graph can be encoded in binary strings, where each possible pair of edge endpoints (adjacency) is represented with one bit. The bit is 1 if the adjacency is presented in the genome and 0 otherwise. 
Note that although any genome can be represented with such a binary string, not all binary strings represent a genome since  two adjacencies might be in conflict if they share a common edge endpoint, and thus, their bits cannot be both 1.
A mutation is a bit flip: a flip from 0 to 1 represents a join, a flip from 1 to 0 represents a cut. Any flip from 1 to 0 is possible, however, flipping 0 to 1 is possible only if it does not cause a conflict.
Still, it can be proved that if two binary strings $\mu$ and $\nu$ represent two genomes $G_1$ and $G_2$ under the Single Cut-or-Join model, the fewest number of mutations to transform $G_1$ into $G_2$ is $H(\mu,\nu)$ \parencite{Feijao}. A \emph{scenario} is an ordering of the mutations necessary such that each intermediate string obtained from performing the cuts and joins one at a time represents a valid genome.  An upper bound on the number of scenarios is $H(\mu,\nu)!$. This upper bound is precisely the number of scenarios if there is no conflict in the presented adjacencies in genomes $G_1$ and $G_2$; that is, there is no constraint that some of the adjacencies first must be cut before some other adjacencies are created with joins. 

Next fix a multiset, $B$, of $m$ binary strings from $\{0,1\}^n$. If these strings label the leaves of a star tree $K_{1,m}$, the center of the star, or common ancestor, should be labeled with a median from $\mathcal{M}(B)$ which minimizes the number of mutations required, summed over all edges of the star tree. A \textit{most parsimonious scenario} for $K_{1,m}$ with $B$ labeling the leaves consists of a median $\mu$ from $\mathcal{M}(B)$ and a scenario transforming $\mu$ to $\nu_i$ for each $i\in[m]$ to label the edges of the star tree. The partition function $Z(B, x!)$ counts the number of most parsimonious scenarios if there is no conflict in the presented adjacencies. In this paper, we show that counting the most parsimonious scenarios is computationally hard already for these special cases.

Other bioinformatics models also use strings and changes of characters in them, for example when the strings represent biological sequences (DNA, RNA or protein) and changes of characters represent substitutions. We will refer to these models as \textit{substitution models} of biological sequences \parencite{Felsensteinbook}. The negative results represented in this paper also holds for these models, too.
 
 In Section~\ref{sec:complexity}, we establish some basics about computational complexity classes. Sections ~\ref{sec:set-up factorial} and ~\ref{sec:factorial} are devoted to computing the value of $Z(B, x!)$. In Section~\ref{sec:factorial approx}, we explore the possibility of stochastic approximations for $Z(B, x!)$. Then we turn our attention to the more general $Z(B,f(x))$ in Section~\ref{sec:general f(x)}.
When $\log f(x)$ is strictly concave up or strictly concave down, we obtain some further computational complexity results under mild restrictions. Section~\ref{sec:general approx} is devoted to stochastic approximations for $Z(B,f(x))$ when $\log f(x)$ is strictly concave down. In Sections ~\ref{sec:binary} and ~\ref{sec: bin result}, we extend our exploration of $Z(B,x!)$ from star trees to binary trees and define a similar partition function.

  
 \section{Computational complexity}
 \label{sec:complexity}
While P and NP are complexity classes for decision problems, the following classes are for counting problems. 
The classes \#P, \#P-hard, and \#P-complete were first defined by \textcite{valiant}. The definition for \#P that we give here, while not the original, is an equivalent definition. 
\begin{defn}[\cite{welsh}]
The class \emph{\#P} contains those functions $f: \Sigma^* \rightarrow \mathbb{N}$, for some alphabet $\Sigma$, such that both of the following hold:
\begin{itemize}
\item There is a polynomial $p$, a relation $R$, and a polynomial time algorithm which, for each input $w\in \Sigma^*$ and each $y\in \Sigma^*$ with $|y| \leq p(|w|)$, determines if $R(w,y)$. 
\item For any input $w$, $f(w) = |\{y: |y| \leq p(|w|) \text{ and } R(w,y) \}|.$
\end{itemize}
\label{sharpP}
\end{defn}

\begin{defn}[\cite{valiant}]
A counting problem is in \emph{\#P-hard} if there is a polynomial time reduction to it from every problem in \#P. A counting problem is in \emph{\#P-complete} if it is in \#P and is in \#P-hard.
\end{defn}
 
 Next we give a few known computational complexity results. To state these result, we establish some terminology. 

Conjunctive normal form (CNF) is a standard format in which to express Boolean formulas. A \emph{3CNF} is a Boolean formula $\Gamma$ which is the conjunction of clauses and each clause is the disjunction of 3 literals. 
A 3CNF, $\Gamma$, with $n$ variables $\set{v}{n}$ and $k$ clauses takes the form $\Gamma = c_1 \wedge c_2 \wedge \ldots \wedge c_k$ where each $c_i$ is a clause which is the disjunction of three literals and the literals are from $\{v_i\}_{i=1}^n \cup \{\overline{v_i}\}_{i=1}^n$. 
Because $\Gamma$ was said to have $n$ variables, we may assume that, for each $i\in [n]$, $v_i$ or $\overline{v_i}$ appears in some clause of $\Gamma$.
 Each $v_i$ is a \emph{positive literal} while each $\overline{v_i}$ is a \emph{negative literal}. The negative literal $\overline{v_i}$ is the negation of $v_i$. We identify $\ov{\ov{v_i}}$ with the literal $v_i$ and we refer to $\{v_i\}_{i=1}^m$ as the \emph{variables} of $\Gamma$ and always assume that the set of variables has an ordering.
 
 A \emph{truth assignment} for $\Gamma$ is a function $f: \{v_i\}_{i=1}^n \rightarrow \{T, F\}^n$ which assigns a value of true or false to each variable.  If a truth assignment makes $\Gamma$  true, we say it satisfies $\Gamma$. Otherwise, a truth assignment does not satisfy $\Gamma$ in which case there is at least one clause which is not satisfied.

\begin{defn}[3SAT]
Given an arbitrary $\Gamma$ in 3CNF with $n$ variables and $k$ clauses, decide if there is a truth assignment for $\Gamma$ which satisfies $\Gamma$.  
\end{defn}

\begin{defn}[\#3SAT]
Given an arbitrary Boolean formula $\Gamma$ in 3CNF with $n$ variables and $k$ clauses, count the number of truth assignments which satisfy $\Gamma$.
\end{defn}

\begin{thm} [\cite{cook}]
3SAT $\in$ NP-complete.
\end{thm}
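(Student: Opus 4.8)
The plan is to verify the two defining properties of NP-completeness separately: membership in NP, which is routine, and NP-hardness, which is the substance of the Cook--Levin theorem.

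First, to see that $3\mathrm{SAT}\in\mathrm{NP}$: given a 3CNF $\Gamma$ with $n$ variables, a truth assignment $f\colon\{v_i\}_{i=1}^n\to\{T,F\}$ serves as a certificate of size $O(n)$, and one checks in time linear in $|\Gamma|$ whether every clause contains a literal made true by $f$. Thus a satisfiable $\Gamma$ has a short, polynomial-time-checkable witness, matching Definition~\ref{sharpP}'s relational framework in its NP analogue.

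The real work is NP-hardness, where I would exhibit a polynomial-time reduction to $3\mathrm{SAT}$ from an arbitrary language $L\in\mathrm{NP}$. Fix a nondeterministic Turing machine $M$ deciding $L$ within $p(|x|)$ steps for some polynomial $p$. On input $x$ with $|x|=n$, build a Boolean formula $\Gamma_x$ whose variables describe an $O(p(n))\times O(p(n))$ \emph{tableau} of a purported accepting computation: variables $T_{i,j,s}$ ($=$ ``cell $j$ at time $i$ holds symbol $s$''), $H_{i,j}$ ($=$ ``the head is at cell $j$ at time $i$''), and $Q_{i,q}$ ($=$ ``the state at time $i$ is $q$''). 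The conjuncts of $\Gamma_x$ encode: (i) well-formedness---each cell has exactly one symbol, each row exactly one head position and exactly one state; (ii) the first row spells out the start configuration on input $x$; (iii) each row follows from the previous one by a legal transition of $M$, which is a \emph{local} condition: the value at position $(i+1,j)$ depends only on a bounded window around $(i,j)$ in the previous row, so it is expressible by constant-size constraints over constantly many such windows; (iv) the last row is accepting. Each constraint is a Boolean function of $O(1)$ variables, hence rewritable as a constant-size CNF; conjoining over all $O(p(n)^2)$ positions yields a CNF $\Gamma_x$ of size polynomial in $n$, computable in polynomial time, with $\Gamma_x$ satisfiable if and only if $M$ has an accepting computation on $x$, i.e.\ if and only if $x\in L$. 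This shows CNF-SAT is NP-hard. Then I reduce CNF-SAT to $3\mathrm{SAT}$ clause by clause: a one-literal clause $\ell$ becomes $(\ell\vee\ell\vee\ell)$, a two-literal clause $(\ell_1\vee\ell_2)$ becomes $(\ell_1\vee\ell_2\vee\ell_2)$, and a clause $(\ell_1\vee\cdots\vee\ell_t)$ with $t\ge 4$ becomes the chain
\[(\ell_1\vee\ell_2\vee y_1)\wedge(\overline{y_1}\vee\ell_3\vee y_2)\wedge\cdots\wedge(\overline{y_{t-3}}\vee\ell_{t-1}\vee\ell_t)\]
with fresh variables $y_1,\dots,y_{t-3}$; one checks that, for a fixed assignment to the $\ell_i$, the chain can be satisfied by some choice of the $y_j$ exactly when some $\ell_i$ is true. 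This step only enlarges the formula linearly, so the composition is a polynomial-time reduction from every $L\in\mathrm{NP}$ to $3\mathrm{SAT}$; together with membership in NP this gives $3\mathrm{SAT}\in\mathrm{NP}$-complete.

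The step I expect to be the main obstacle is the tableau construction in the NP-hardness argument---specifically, arguing carefully that a single step of $M$ is genuinely a local condition (each tableau cell's next value depends only on a bounded neighborhood in the preceding row, including the treatment of the cell under or adjacent to the head), that this local condition is faithfully captured by a constant-width CNF, and that summing these constraints over the whole tableau simultaneously keeps the formula size polynomial and enforces ``$M$ accepts $x$'' in both directions.
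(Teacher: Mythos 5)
Your proposal is correct and is the standard Cook--Levin argument (NP membership via a polynomial-size assignment certificate, NP-hardness via the computation-tableau encoding with local window constraints, then the clause-splitting reduction from CNF-SAT to 3SAT). The paper does not prove this statement at all --- it simply cites Cook --- so there is nothing to compare beyond noting that your sketch faithfully reproduces the classical proof of the cited result.
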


\begin{thm} [\cite{valiant}]
\#3SAT $\in$ \#P-complete.
\end{thm}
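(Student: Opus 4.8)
The plan is to check the two defining conditions for \#P-completeness separately. Membership in \#P is routine: take the relation $R(\Gamma,\sigma)$ to hold exactly when $\sigma$ encodes a truth assignment of the variables of $\Gamma$ that satisfies every clause. A truth assignment of an $n$-variable 3CNF has an encoding of length linear in $|\Gamma|$, so there is a polynomial $p$ with $|\sigma|\le p(|\Gamma|)$, and deciding $R(\Gamma,\sigma)$ only requires evaluating each clause, which takes polynomial time. Hence the number of satisfying assignments of $\Gamma$ equals $|\{\sigma : |\sigma|\le p(|\Gamma|),\ R(\Gamma,\sigma)\}|$, which has exactly the form in Definition~\ref{sharpP}, so \#3SAT $\in$ \#P.

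For \#P-hardness I would start from an arbitrary $f\in$ \#P, with its polynomial $p$, relation $R$, and polynomial-time verifier, and build in polynomial time, from each input $w$, a 3CNF $\Gamma_w$ with exactly $f(w)$ satisfying assignments. First turn the verifier into a nondeterministic polynomial-time Turing machine $M$ that on input $w$ guesses a string $y$ with $|y|\le p(|w|)$ and then runs the verifier on $(w,y)$, accepting iff $R(w,y)$; the guessing should be arranged (for example, guess the length in unary first and commit to a canonical padding) so that accepting computations of $M$ on $w$ are in bijection with the witnesses $y$ counted by $f(w)$. Then apply the Cook--Levin tableau construction to $M$ and $w$: introduce Boolean variables for the guessed bits and for every cell of the $t(|w|)\times t(|w|)$ computation tableau, where $t$ is a polynomial bounding the running time, and write clauses asserting that this tableau encodes a valid accepting computation of $M$ on $w$. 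Since the only nondeterminism lies in the guessed bits and every later cell is forced by the transition function, each accepting computation of $M$ produces exactly one satisfying assignment, so the resulting CNF formula $\varphi_w$ has exactly $f(w)$ satisfying assignments. Finally convert $\varphi_w$ to a 3CNF $\Gamma_w$ clause by clause, adding only auxiliary variables whose values are uniquely determined by the original ones, so that the number of satisfying assignments is unchanged; then $\Gamma_w$ has $f(w)$ satisfying assignments, and this is the desired reduction.

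The main obstacle is maintaining \emph{parsimony} --- an exact equality of counts rather than mere preservation of satisfiability --- at every step, since the usual constructions for the corresponding decision problems are designed only to preserve satisfiability. Two points need care. The bound $|y|\le p(|w|)$ must be handled so that each witness is counted once, not once per admissible padded length; guessing the length first and fixing a canonical padding achieves this. More delicately, the standard SAT-to-3SAT gadgets are not automatically parsimonious: padding a short clause with fresh literals, or splitting a wide clause $(\ell_1\vee\cdots\vee\ell_k)$ with fresh connector variables $z_i$, can leave those variables free and thereby multiply the count. The remedy is to add clauses pinning each auxiliary variable to a fixed Boolean function of the original literals (for instance $z_i\leftrightarrow\neg(\ell_1\vee\cdots\vee\ell_{i+1})$), keeping every auxiliary value determined; alternatively one notes that the Cook--Levin clauses already have bounded width, so only that case of the splitting is ever needed. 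The remaining details --- the polynomial time and size bounds --- are routine bookkeeping.
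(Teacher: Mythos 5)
The paper does not prove this statement at all: it is quoted as a known result of Valiant, so there is no in-paper argument to compare against. Your sketch is essentially the canonical proof from the literature that the citation points to: membership via a polynomial-length witness relation, hardness via a parsimonious Cook--Levin tableau reduction, with the two standard delicacies (counting each witness $y$ exactly once despite the length bound $|y|\le p(|w|)$, and making the CNF-to-3CNF conversion parsimonious by keeping auxiliary variables functionally determined) correctly identified and correctly repaired. The only nit is your illustrative pinning $z_i\leftrightarrow\neg(\ell_1\vee\cdots\vee\ell_{i+1})$: written literally this is not itself of width $3$, so one should chain it as, say, $z_i\leftrightarrow(z_{i-1}\vee\ell_{i+1})$ to stay in 3CNF; your alternative observation that the tableau clauses already have constant width (and that short clauses can be padded by repeating literals, which the paper's definition of 3CNF permits) disposes of this anyway. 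So the proposal is a sound rendering of the standard argument behind the cited theorem.
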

  
Define \emph{D3CNF} to be the subset of 3CNF containing only those $\Gamma = \bigwedge_{i\in[k]} c_i$ such that for each $i\in [k]$, 
\begin{itemize}
\item $c_i$ contains three distinct literals, and
\item $c_i$ does not contain both $v_j$ and $\overline{v_j}$ for any $j\in[n]$.
\end{itemize} 
 This defines the following two problems. 

\begin{defn}[D3SAT]
For an arbitrary $\Gamma$ in D3CNF with $n$ variables and $k$ clauses, decide if there is a truth assignment which satisfies $\Gamma$. 
\end{defn}

\begin{defn}[\#D3SAT]
For an arbitrary $\Gamma$ in D3CNF with $n$ variables and $k$ clauses, count the number of truth assignments which satisfy $\Gamma$.
\label{def: sharpD3SAT}
\end{defn}

The following two results are proven through reductions from \#3SAT and 3SAT.

\begin{lemma}
\#D3SAT $\in$ \#P-complete.
\label{Lemma: Distinct}
\end{lemma}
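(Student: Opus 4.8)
The goal is to show \#D3SAT is \#P-complete. Membership in \#P is routine: given a 3CNF $\Gamma$ in D3CNF form and a candidate truth assignment $y\in\{T,F\}^n$, one checks in polynomial time whether $y$ satisfies $\Gamma$, and the number of satisfying assignments is exactly the count of such witnesses, matching Definition~\ref{sharpP}. So the substance is \#P-hardness, which I would establish by a polynomial-time (parsimonious, or at least count-preserving) reduction from \#3SAT.

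\textbf{The reduction.} Starting from an arbitrary 3CNF $\Gamma$, I would transform it clause by clause into a D3CNF $\Gamma'$ while preserving the number of satisfying assignments. Two defects can occur in a clause $c_i$: (i) it contains a repeated literal, or (ii) it contains a complementary pair $v_j$ and $\overline{v_j}$. For defect (ii), the clause is a tautology, so it can simply be deleted — but deleting it changes the variable set only if some $v_j$ occurred \emph{only} in tautological clauses, which is a minor bookkeeping point handled by the standing assumption that every variable appears in some clause (one can pad with a trivial satisfiable gadget on such a variable, e.g. the three clauses $(v_j\vee a\vee b)$, $(v_j\vee \overline{a}\vee b)$, $(v_j\vee a\vee \overline{b})$ together with clauses forcing nothing else, chosen so the count is unchanged up to a controllable factor). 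For defect (i), a clause like $(v_j\vee v_j\vee \ell)$ is logically $(v_j\vee \ell)$, a 2-clause; I would expand each shortened clause back to a 3-clause over distinct literals using a fresh variable, via the standard equivalence $(\ell_1\vee\ell_2)\equiv (\ell_1\vee\ell_2\vee w)\wedge(\ell_1\vee\ell_2\vee\overline{w})$ with $w$ new, and a clause that is a single literal $(\ell)$ via $(\ell\vee w_1\vee w_2)\wedge(\ell\vee \overline{w_1}\vee w_2)\wedge(\ell\vee w_1\vee \overline{w_2})\wedge(\ell\vee\overline{w_1}\vee\overline{w_2})$. Each such gadget is itself in D3CNF, and each introduced variable is forced (its value is irrelevant to satisfiability of the gadget only when $\ell$ is already true, and when $\ell$ is false the gadget is unsatisfiable), so crucially the number of satisfying assignments is unchanged: for every satisfying assignment of $\Gamma$ there is exactly one extension to the new variables, and non-satisfying assignments stay non-satisfying. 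I would verify that the total size grows by only a constant factor per clause, so the reduction runs in polynomial time.

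\textbf{Main obstacle.} The delicate point is not the clause-expansion gadgets — those are standard — but ensuring the count is preserved \emph{exactly} (parsimoniously), including the edge case where removing tautological clauses or collapsing repeated literals would orphan a variable and thereby silently double the count by freeing that variable. I expect the bulk of the careful argument to be: (a) choosing gadgets whose auxiliary variables are uniquely determined by the original assignment so no spurious factor of $2$ creeps in; (b) handling variables that become unconstrained, either by the padding gadget above or by dividing out a known power of two at the end (a count-preserving, not necessarily parsimonious, reduction still suffices for \#P-hardness since $n\mapsto n/2^t$ is polynomial-time computable). Once the gadget bookkeeping is pinned down, \#P-hardness follows from Valiant's theorem that \#3SAT is \#P-complete, and combined with membership in \#P we conclude \#D3SAT $\in$ \#P-complete.
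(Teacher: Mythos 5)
Your overall strategy (reduce from \#3SAT by repairing defective clauses, then correct the count) is sound and close in spirit to the paper, but one central claim in your clause-expansion step is false. For the gadget $(\ell_1\vee\ell_2)\equiv(\ell_1\vee\ell_2\vee w)\wedge(\ell_1\vee\ell_2\vee\overline{w})$ with $w$ fresh, the variable $w$ is \emph{not} forced: whenever $\ell_1\vee\ell_2$ is true (which it must be in any satisfying assignment), both clauses hold for either value of $w$, so every satisfying assignment of the original formula has exactly \emph{two} extensions, not one; similarly your single-literal gadget contributes a factor of $4$ (two free variables $w_1,w_2$). So the assertion ``the number of satisfying assignments is unchanged'' is wrong as stated, and the reduction is not parsimonious. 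It is still count-preserving up to the known factor $2^{f}$, where $f$ is the number of fresh variables introduced, and dividing by that factor (the trick you invoke only for orphaned variables) rescues \#P-hardness; but the proof as written rests on the incorrect invariant. Your padding gadget for orphaned variables has the same defect: the clauses $(v_j\vee a\vee b)$, $(v_j\vee\overline{a}\vee b)$, $(v_j\vee a\vee\overline{b})$ force nothing and introduce further free variables, so it does not keep the count fixed; the cleaner fix is simply to record the orphaned variables and multiply (or divide) by the corresponding power of two.

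For comparison, the paper avoids fresh variables altogether: since $\Gamma$ has $n\ge 3$ variables, a clause with a repeated literal is replaced by two (or four) clauses built from \emph{existing} variables, e.g.\ $(v_\alpha\vee v_\beta\vee v_\beta)$ becomes $(v_\alpha\vee v_\beta\vee v_\gamma)\wedge(v_\alpha\vee v_\beta\vee\overline{v_\gamma})$, which is logically equivalent over the same variable set, so the replacement is exactly parsimonious; tautological clauses are deleted, and the only correction needed is the factor $2^{n-n'}$ for variables that disappear. If you rewrite your expansion gadgets this way (or keep fresh variables but state and divide out the exact factor $2^{f}$), your argument goes through.
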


\begin{proof}
This is a reduction from \#3SAT. Let $\Gamma$ be a 3CNF with $n$ variables and $k$ clauses, $n\geq 3$. 
Let $v_\alpha, v_\beta, v_\gamma$ be literals in $\Gamma$ with $\alpha\neq\beta\neq\gamma \neq \alpha$. 
Observe that each of the following pairs have the same satisfying truth assignments. 
\begin{align*}
 (v_\alpha \vee v_\beta\vee v_\beta) &\text{ and } (v_\alpha\vee v_\beta \vee v_\gamma) \wedge ( v_\alpha \vee v_\beta \vee \overline{ v_\gamma}). \\
 (v_\alpha \vee v_\alpha \vee v_\alpha) & \text{ and } (v_\alpha \vee v_\beta \vee v_\gamma) \wedge (v_\alpha \vee \overline{ v_\beta} \vee v_\gamma) \wedge (v_\alpha \vee v_\beta \vee \overline{ v_\gamma}) \wedge (v_\alpha \vee \overline{ v_\beta} \vee \overline{ v_\gamma}). 
 \end{align*}
Further, a clause of the form $(v_\alpha \vee \overline{v_\alpha} \vee v_\beta)$ is alway true, so it can be removed.  
 
 Making these replacements in $\Gamma$ will result in a D3CNF $\Gamma'$ with $n'$ variables ($n' \leq n$) and at most $4k$ clauses. Because some clauses like $(v_\alpha \vee \overline{v_\alpha} \vee v_\beta)$ are in $\Gamma$ but not in $\Gamma'$, it is possible that $n' <n$. 
 
Given a satisfying truth assignment for $\Gamma'$, we may extend it to a satisfying truth assignment for $\Gamma$ in $2^{n-n'}$ ways. This is because the variables in $\Gamma$ which are not in $\Gamma'$ do not affect the ability of a truth assignment to satisfy $\Gamma$. On the other hand, each satisfying truth assignment for $\Gamma$, restricted to the variables of $\Gamma'$, will be a satisfying truth assignment for $\Gamma'$. 
\end{proof}

\begin{lemma}
D3SAT $\in$ NP-complete.
\end{lemma}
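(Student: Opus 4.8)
The plan is to reuse the reduction from the proof of Lemma~\ref{Lemma: Distinct} almost verbatim, observing that it preserves not merely the (weighted) count of satisfying assignments but, in particular, satisfiability itself, which is all the decision version requires.

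First I would verify that D3SAT $\in$ NP. Given an instance $\Gamma$ in D3CNF together with a candidate truth assignment, one evaluates each of the $k$ clauses and takes their conjunction in time polynomial in $n+k$; thus a satisfying assignment serves as a polynomial-size certificate and D3SAT $\in$ NP.

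Then I would prove NP-hardness by exhibiting a polynomial-time many-one reduction from 3SAT. Starting from an arbitrary 3CNF $\Gamma$ with $n$ variables and $k$ clauses (padding with dummy variables if $n<3$ so that the three auxiliary literals $v_\alpha,v_\beta,v_\gamma$ used in the replacements are available), I would apply the three rewriting steps from the proof of Lemma~\ref{Lemma: Distinct}: replace a clause with a twice-repeated literal by the conjunction of two distinct-literal clauses, replace a clause with a thrice-repeated literal by the conjunction of four distinct-literal clauses, and delete any tautologous clause of the form $(v_\alpha \vee \overline{v_\alpha} \vee v_\beta)$. This yields, in time polynomial in $n+k$, a D3CNF $\Gamma'$ with $n'\le n$ variables and at most $4k$ clauses. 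Since each rewriting step replaces a subformula by a logically equivalent one, and deleting an always-true clause changes neither the truth value of $\Gamma$ nor the set of satisfying assignments on the remaining variables, every satisfying assignment of $\Gamma$ restricts to a satisfying assignment of $\Gamma'$, and conversely any satisfying assignment of $\Gamma'$ extends (by assigning the dropped variables arbitrarily) to a satisfying assignment of $\Gamma$. Hence $\Gamma$ is satisfiable if and only if $\Gamma'$ is, so the reduction is correct, and D3SAT is NP-hard.

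Combining the two parts gives D3SAT $\in$ NP-complete. I expect no genuine obstacle here: the assertion is logically weaker than the counting statement already established in Lemma~\ref{Lemma: Distinct}, and the only points demanding a little care are the degenerate inputs — instances with fewer than three variables, and clauses that are tautologies or contain repeated literals — all of which are disposed of by the initial padding and by the three rewriting rules.
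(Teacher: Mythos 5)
Your proposal is correct and follows essentially the same route as the paper: NP-membership via the obvious certificate, and NP-hardness via the same 3SAT-to-D3CNF rewriting used in the proof of Lemma~\ref{Lemma: Distinct}, noting that it preserves satisfiability. The extra care you take (padding when $n<3$, spelling out the certificate check) only elaborates what the paper leaves implicit.
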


\begin{proof}
As described in the last proof, for any 3CNF $\Gamma$, there is a corresponding D3CNF $\Gamma'$ which is computable in polynomial time such that $\Gamma'$ has at least one satisfying truth assignment exactly when $\Gamma$ has at least one satisfying truth assignment. 
\end{proof}

Next, we return our attention to the partition functions which were introduced in Section~\ref{sec:intro}.
The complexity results in this paper address subquestions and analogues of the following problems. 

\begin{defn}[\#SPS] Given a tree $T$ and a labeling $\varphi$ of the leaves of $T$ with binary strings, \#SPS (\emph{Small Parsimony Substitution}) asks for the exact number of most parsimonious scenarios where the scenario on an edge is an ordering of the substitutions that must take place to transform the median sequence into the sequence at the leaf.
\end{defn}

\begin{defn}[\#SPSCJ]
Given a tree $T$ and a labeling $\varphi$ of the leaves of $T$ with binary strings representing genomes under the SCJ model, \#SPSCJ (\emph{Small Parsimony Single Cut-or-Join}) asks for the exact number of most parsimonious scenarios where the scenario on an edge is an ordering of the cuts and joins that must take place to transform the median sequence into the sequence at the leaf such that the sequence produced after each cut or join represents a valid genome.
\label{def: SPSCJ}
\end{defn}

Clearly, \#SPS is a special case of \#SPSCJ, the case when there is no conflict in the  adjacencies present in the genomes assigned to the leaves of the evolutionary tree. While $Z(B,x!)$ is only an upper bound for an instance of \#SPSCJ, it is the exact answer to each instance of \#SPS. 

\begin{lemma}
The problem of calculating $Z(B,x!)$ is in \#P.
\label{lemma: sharpP}
\end{lemma}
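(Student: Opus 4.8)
To show that calculating $Z(B, x!)$ is in \#P, I need to exhibit a polynomial $p$, a polynomial-time verifiable relation $R$, and a bijection between the ``witnesses'' $y$ counted by $R$ and the terms summed by the partition function. The plan is to recognize that $Z(B, x!) = \sum_{\mu \in \M(B)} \prod_{i\in[m]} H(\nu_i,\mu)!$ is literally a count of combinatorial objects, namely pairs consisting of an optimal median $\mu$ together with, for each leaf $i\in[m]$, an ordering of the $H(\nu_i,\mu)$ coordinates in which $\mu$ and $\nu_i$ disagree. So a witness $y$ should encode such a pair: the string $\mu \in \{0,1\}^n$ (which has length $n$), plus $m$ sequences of indices, the $i$-th being a permutation of the disagreement set between $\mu$ and $\nu_i$. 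Each such permutation has length at most $n$, written with indices of bit-length $O(\log n)$, so the total encoding size is $O(nm\log n)$, which is polynomial in the input size $|w|$ (the input $w$ encodes $B$, which already has size $\Theta(nm)$); this fixes the polynomial $p$.

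Next I would specify the relation $R(w,y)$ and argue it is polynomial-time decidable. Given $w$ (encoding $B = \{\nu_1,\dots,\nu_m\}$) and a candidate witness $y$, the verifier parses $y$ into $\mu$ and the $m$ index-sequences $\sigma_1,\dots,\sigma_m$, and checks: (i) $\mu\in\{0,1\}^n$ is well-formed; (ii) $\mu$ is an optimal median, i.e.\ $\sum_{i} H(\nu_i,\mu) \le \sum_i H(\nu_i,\mu')$ for all $\mu'$ --- and here the key point is that the optimal median can be computed coordinatewise by majority vote, so one can compute $\min_{\mu'}\sum_i H(\nu_i,\mu')$ in $O(nm)$ time and compare; (iii) for each $i$, $\sigma_i$ is a sequence of distinct indices from $[n]$ whose underlying set is exactly $\{j : \mu_j \ne (\nu_i)_j\}$. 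All three checks run in polynomial time. Then $R(w,y)$ holds iff all checks pass, and by construction the number of $y$ with $R(w,y)$ equals $\sum_{\mu\in\M(B)} \prod_{i\in[m]}(\text{number of orderings of the disagreement set for leaf } i) = \sum_{\mu\in\M(B)}\prod_{i\in[m]} H(\nu_i,\mu)! = Z(B,x!)$.

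The one genuine subtlety --- the step I'd expect to need the most care --- is condition (ii): verifying that $\mu$ is \emph{optimal}, not merely \emph{some} median, within the \#P framework. This is why I single out the coordinatewise majority characterization of optimal medians: because Hamming distance decomposes over coordinates, $\sum_i H(\nu_i,\mu) = \sum_{j=1}^n \#\{i : (\nu_i)_j \ne \mu_j\}$, and this is minimized exactly when, in each coordinate $j$, $\mu_j$ equals the majority value among $(\nu_1)_j,\dots,(\nu_m)_j$ (with either value allowed in case of a tie). Thus the verifier computes, for each coordinate, the majority count and hence the optimal total distance $D^* := \sum_j \min(\#\{i: (\nu_i)_j = 0\}, \#\{i : (\nu_i)_j = 1\})$, and accepts the $\mu$-part of the witness iff $\sum_i H(\nu_i,\mu) = D^*$. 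One minor caveat to record is that $Z(B,x!)$ is integer-valued and $R$ produces a nonnegative integer count, matching the codomain $\mathbb{N}$ required by Definition~\ref{sharpP}; I would also note the edge case where some $H(\nu_i,\mu)=0$, so the empty sequence $\sigma_i$ is the unique valid ordering and contributes a factor $0! = 1$, consistent with the count. Assembling these pieces shows $Z(B,x!)\in\#P$.
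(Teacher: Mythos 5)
Your proposal is correct and follows essentially the same route as the paper: the witness is a median together with an ordering of the disagreement positions for each leaf, medianhood is verified via the coordinatewise majority characterization (your $D^*$ is exactly the distance achieved by the paper's majority-vote median $\mu'$), and each ordering is checked against the disagreement set in polynomial time. Your write-up is somewhat more explicit about the encoding size and the counting bijection, but there is no substantive difference.
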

\begin{proof}
The input includes a multiset  $B = \{\nu_i\}_{i=1}^m$ where each $\nu_i$ is from $\{0,1\}^n$. Viewing this in the sense of phylogenetic trees, a witness consists of a median $\mu$ to label the center of the star and a scenario to label each edge of the tree. The size of the input is $O(m n)$.

Recall that a median $\mu'$ minimizes the quantity $\sum_{i=1}^m H(\nu_i, \mu')$. We can find a single median $\mu'$ in $O(mn)$ time by examining the $k^{th}$ coordinate of each string in $B$ and making the $k^{th}$ coordinate of $\mu'$ the value that appears in a majority of the strings in $B$, breaking ties arbitrarily. To verify that the given binary string $\mu$ is indeed a median, we need only compare $\sum_{i=1}^m H(\nu_i, \mu)$ and $\sum_{i=1}^m H(\nu_i, \mu')$. If they are the same, then $\mu$ is a median. For each edge, we can verify that the given permutation is a scenario for that edge in $O(m)$ time by comparing the bits of $\mu$ and $\nu_i$.  
By Definition~\ref{sharpP},  \#SPS is in \#P.
\end{proof}

Most of the complexity results are reductions from \#D3SAT. In other words, given a D3CNF $\Gamma$ with $n$ variables and $k$ clauses, we create a multiset of $m$ binary strings of length $2n+t$ (where $t$ and $m$ are polynomials of $n$ and $k$) to label the leaves of the tree. These strings will be chosen so that the number of most parsimonious substitution scenarios is related to the number of satisfying truth assignments for $\Gamma$. 

\section{Set-up for results on star trees}
\label{sec:set-up factorial}

The discussion in this section and the next is specific to $Z(B,x!)$. The first section details some tools and constructions that will be needed for the proof of our main result in Section~\ref{sec:factorial}. This main result, Theorem~\ref{thm: number SCJ scenarios}, states that computing $Z(B,x!)$ is \#P-complete.

The proof of Theorem~\ref{thm: number SCJ scenarios} will define a polynomial reduction from \#D3SAT (Definition~\ref{def: sharpD3SAT}) to compute $Z(B,x!)$. Fix an arbitrary D3CNF, $\Gamma$, with $n$ variables and $k$ clauses. Fix a prime $p\leq 5\max\{300,n+5\}$ which will be utilized later. 

Our task is to define a multiset of binary strings $\mathcal{D}(p)$ to encode $\Gamma$. The multiset $\mathcal{D}(p)$ will be chosen so that the set of medians $\mathcal{M}(\mathcal{D}(p))$ will have a list of desired characteristics. First, each string in $\mathcal{D}(p)$ and each median in $\mathcal{M}(\mathcal{D}(p))$ will have length $2n+t$ with coordinates 
\[(x_1, y_1, x_2, y_2, \ldots, x_n, y_n, e_1, e_2,\ldots, e_t)\]
where $n$ is the number of variables in $\Gamma$ and the $t$ is a polynomial of $n$ and $k$ which will be defined later.
Second, $\mathcal{M}(\mathcal{D}(p))$ will be the set of all binary strings $\mu$ of length $2n+t$ that have $\mu[e_i]=0$ for each $i\in[t]$. In other words, $\mathcal{D}(p)$ will be defined so that $\mathcal{M}(\mathcal{D}(p))$ equals $\{0,1\}^{2n} \times \{0\}^t$. 
Let $\mathcal{M}'(\mathcal{D}(p))$ denote the subset of $\mathcal{M}(\mathcal{D}(p))$ with the additional property that $\mu[x_i] \neq \mu[y_i]$ for all $i\in [n]$. Once we have established that $\mathcal{M}(\mathcal{D}(p)) =\{0,1\}^{2n} \times \{0\}^t$, we can conclude $\mathcal{M}'(\mathcal{D}(p))=\{01,10\}^n \times \{0\}^t$. This allows for a connection with truth assignments for $\Gamma$. 

\begin{defn}
Let $n\in \mathbb{Z}^+$. For arbitrary $\Gamma$ in D3CNF with $n$ variables, let $S$ be a multiset of binary strings on the coordinates $(x_1, y_1,  \ldots, x_n, y_n, e_1, \ldots, e_t).$
There is an injective function $f$ which assigns to each median $\mu\in\mathcal{M}'(S)$ a truth assignment for $\Gamma$. In particular, $f(\mu)$ will assign a value of true to the $i^{th}$ variable of $\Gamma$ if $\mu[x_i]=1$ and false if $\mu[x_i]=0$. 
\label{defi: bijection}
\end{defn}

\begin{rem}
If multiset $S$ is chosen so that $\mathcal{M}'(S) = \{01,10\}^n \times \{0\}^t$, then Definition~\ref{defi: bijection} provides a bijection between $\mathcal{M}'(S)$ and the truth assignments for $\Gamma$. 
\label{defi: bijection2}
\end{rem}

\begin{defn}
Let $n\in \mathbb{Z}^+$. Given an arbitrary D3CNF, $\Gamma$, with $n$ variables, let $S$ be an arbitrary multiset of binary strings  on the coordinates $(x_1, y_1,  \ldots, x_n, y_n, e_1, \ldots, e_t)$ for some $t\in \mathbb{Z}^+$. 
Define $\mathcal{M}'_{\Gamma}(S)$ to be a subset of $\mathcal{M}'(S)$, containing only those medians which, through the bijection in Definition~\ref{defi: bijection}, correspond to a satisfying truth assignment for $\Gamma$. Since a single clause $c$ in $\Gamma$ is also a D3CNF, this defines $\mathcal{M}'_{c}(S)$ as well.
\label{defi: MGamma}
\end{defn}

To calculate
\[Z(\mathcal{D}(p), x!)=\sum_{\mu\in \mathcal{M}(\mathcal{D}(p))} \prod_{i\in [m]} H(\mu, \nu_i)!,\]
 we first calculate $\prod_{i\in [m]} H(\mu, \nu_i)!$ for each median $\mu\in\mathcal{M}(\mathcal{D}(p)).$ 
The multiset $\mathcal{D}(p)$ will be constructed so that there is a constant $K(p)$ (specified in Claim~\ref{all props}) which is not a multiple of $p$ and such that for any $\mu\in\mathcal{M}'_{\Gamma}(\mathcal{D}(p))$ 
$\prod_{i\in [m]} H(\mu, \nu_i)! = K(p)$. Each median $\mu \in \mathcal{M}(\mathcal{D}(p)) \setminus \mathcal{M}'_{\Gamma}(\mathcal{D}(p))$ will have $\prod_{i\in [m]} H(\mu, \nu_i)! \equiv 0 \mod p$. As a result
\[\sum_{\mu\in \mathcal{M}(\mathcal{D}(p))} \prod_{i\in [m]} H(\mu, \nu_i)! \equiv 
|\mathcal{M}'_{\Gamma}(\mathcal{D}(p))| K(p) \!\! \mod p.\]
Repeating this construction for sufficiently many primes $p\leq 5 \max\{ 300, n+5\}$, we obtain enough congruences, which together with the knowledge that there are at most $2^n$ satisfying truth assignment for $\Gamma$, uniquely determine the size of $\mathcal{M}'_{\Gamma}(\mathcal{D}(p))$ which is equal to the number of satisfying truth assignments for $\Gamma$.

Later we will see that the main work goes into developing a multiset $\mathcal{D}(p)$ with the property that for any $\mu\in \mathcal{M}'_{\Gamma}(\mathcal{D}(p))$ and any $\mu' \in \mathcal{M}'(\mathcal{D}(p))\setminus \mathcal{M}'_{\Gamma}(\mathcal{D}(p))$ have \[\prod_{i\in [m]} H(\mu, \nu_i)! \neq \prod_{i\in [m]} H(\mu', \nu_i)!.\] In Section~\ref{section: define tables}, we define the strings $\mathcal{D}(p)$ which are used in the proofs to distinguish medians in $\mathcal{M}'_{\Gamma}(\mathcal{D}(p))$ from medians in $\mathcal{M}'(\mathcal{D}(p))\setminus \mathcal{M}'_{\Gamma}(\mathcal{D}(p))$. 

\subsection{Encoding Boolean clauses in binary strings.}\label{section: define tables}
A truth assignment satisfies $\Gamma$ if and only if it satisfies every clause in $\Gamma$. Hence, we will encode each clause $c_i$ of $\Gamma$ in a set of 50 strings 
\[\mathcal{C}_i:=\{\nu_1^i, \nu_2^i, \ldots, \nu_{50}^i\}\]
 which will be defined through Table~\ref{table: n+5}. These 50 strings are designed to distinguish those medians in $\mathcal{M}'_{c_i}(\mathcal{C}_i)$ from those in $\mathcal{M}'(\mathcal{C}_i)\setminus \mathcal{M}'_{c_i}(\mathcal{C}_i)$. Confirmation of this will come in Section~\ref{distinguish}. Because every truth assignment that does not satisfy $\Gamma$ has at least one clause in $\Gamma$ that is does not satisfy, we will see that the disjoint union $\biguplus_{i\in[k]} \mathcal{C}_i$ distinguishes between  $\mathcal{M}'_{\Gamma}\left(\biguplus_{i\in[k]} \mathcal{C}_i\right)$ and $\mathcal{M}'\left(\biguplus_{i\in[k]} \mathcal{C}_i\right)\setminus \mathcal{M}'_{\Gamma}\left(\biguplus_{i\in[k]} \mathcal{C}_i\right)$. 

The following definition gives a guide for defining a multiset of binary strings. 
\begin{defn}[Defining strings]
For arbitrary $m,n\in \mathbb{Z}^+$ and $t\in \mathbb{Z}^{+} \cup \{0\}$, to define a multiset of binary strings $\{\eta_1, \eta_2, \ldots, \eta_m\}$ on coordinates $(x_1, y_1, \ldots, x_n, y_n, e_1, \ldots, e_t)$, it suffices to
\begin{itemize}
\item define $\eta_j[x_\ell]$ and $\eta_j[y_\ell]$ for each $j\in[m]$ and $\ell \in [n]$, and 
\item define a function $e:[m] \rightarrow \mathbb{Z}^{+} \cup \{0\}$.
\end{itemize}
 We say $\eta_j$ has $e(j)$ additional ones. In order to infer the values $\eta_j[e_\ell]$ for each $j\in[m]$ and $\ell\in[t]$, follow this procedure: 

\begin{adjustwidth}{.5in}{}
Partition $[t]$ into subsets $E, E_1, E_2, \ldots, E_{m}$ so that the size of $E_{j}$ is precisely $e(j)$,  and ${E}=[t]\setminus \bigcup_{j\in[m]} E_j$. For each $j\in[m]$ and each $\ell\in [t]$, set $\eta_j[e_\ell]=1$ if and only if $\ell\in E_j$ and $\eta_{j'}[e^\ell]=0$ otherwise.
\end{adjustwidth}
\label{defi: ones}
\end{defn}

\begin{rem}
Let $m\in \mathbb{Z}^+$. For an arbitrary multiset $\{\eta_j\}_{j=1}^m$ of binary strings built using Definition~\ref{defi: ones}, 
 for each $\ell\in [t]$, there is a unique $j\in [m]$ such that $\eta_j[e_\ell]=1$. Consequently, each $\mu\in \mathcal{M}(\{\eta_j\}_{j=1}^m)$ will have $\mu[e_\ell]=0$ for all $\ell\in[t]$ because $\mu$ must minimize $\sum_{j\in[m]} H(\eta_j, \mu)$. 
 \label{fact: ones}
\end{rem}

\begin{defn}
Let $n\in \mathbb{Z}^+$ and $t\in \mathbb{Z}^{+} \cup \{0\}$ be arbitrary. Two binary strings $\eta$ and $\ov{\eta}$ with  coordinates $(x_1, y_1, \ldots, x_n, y_n, e_1, \ldots, e_{t})$, are said to be \emph{complementary on the first $2n$ coordinates} if $\eta[x_i]=1-\overline{\eta}[x_i]$ and $\eta[y_i]=1-\overline{\eta}[y_i]$ for each $i\in [n]$. 
\label{complements}
\end{defn}

The following fact will be useful. 
\begin{fact}
Let $\eta$  and $\ov{\eta}$ be binary strings on coordinates \[(x_1, y_1, \ldots, x_n, y_n, e_1, \ldots, e_t).\] Set $e(\eta):=\sum_{i\in[t]} \eta[e_i]$, the number of additional ones in $\eta$. Define $e(\ov{\eta})$ similarly. If $\eta$ and $\ov{\eta}$ are complementary on the first $2n$ coordinates, then for any $\mu\in \{0,1\}^{2n} \times \{0\}^t$, 
\[H(\mu,\eta) + H(\mu, \overline\eta) = 2n + e(\eta) + e(\overline\eta).\]
\vspace{-.3in}
\label{fact: Hcomp}
\end{fact}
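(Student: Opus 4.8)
The plan is to split each Hamming distance according to the two blocks of coordinates: the first $2n$ coordinates $(x_1,y_1,\dots,x_n,y_n)$ and the last $t$ coordinates $(e_1,\dots,e_t)$. Writing $H(\mu,\eta)=H_{xy}(\mu,\eta)+H_e(\mu,\eta)$, where $H_{xy}$ counts the coordinates among $x_1,y_1,\dots,x_n,y_n$ on which $\mu$ and $\eta$ disagree and $H_e$ counts the coordinates among $e_1,\dots,e_t$ on which they disagree, and doing the same for $\overline\eta$, it suffices to evaluate the two blocks separately and add.

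For the first block I would argue one coordinate at a time. Fix $z\in\{x_1,y_1,\dots,x_n,y_n\}$. Since $\eta$ and $\overline\eta$ are complementary on the first $2n$ coordinates (Definition~\ref{complements}), we have $\eta[z]=1-\overline\eta[z]$, so whatever the value $\mu[z]\in\{0,1\}$ is, exactly one of $\eta[z],\overline\eta[z]$ equals $\mu[z]$ and the other differs from it. Hence coordinate $z$ contributes exactly $1$ to the sum $H_{xy}(\mu,\eta)+H_{xy}(\mu,\overline\eta)$, and summing over all $2n$ coordinates of the first block gives $H_{xy}(\mu,\eta)+H_{xy}(\mu,\overline\eta)=2n$.

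For the second block I would use the hypothesis $\mu\in\{0,1\}^{2n}\times\{0\}^t$, i.e.\ $\mu[e_i]=0$ for every $i\in[t]$. Then $H_e(\mu,\eta)$ is the number of $i\in[t]$ with $\eta[e_i]=1$, which is exactly $e(\eta)$ by definition, and likewise $H_e(\mu,\overline\eta)=e(\overline\eta)$. Adding the contributions of the two blocks yields $H(\mu,\eta)+H(\mu,\overline\eta)=2n+e(\eta)+e(\overline\eta)$, as claimed. I do not expect any real obstacle here: the statement is a direct computation, and the only thing requiring a little care is fixing the notation for the coordinate split cleanly; no additional idea is needed.
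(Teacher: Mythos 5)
Your proposal is correct and follows essentially the same argument as the paper: each of the first $2n$ coordinates contributes exactly $1$ to the combined sum because $\eta$ and $\overline\eta$ are complementary there, and since $\mu[e_i]=0$ for all $i\in[t]$, the remaining coordinates contribute exactly $e(\eta)+e(\overline\eta)$. No gaps.
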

\begin{proof}
For each $i\in [n]$, either $\mu[x_i] = \eta[x_i]$ or $\mu[x_i]=\overline\eta[x_i]$, but not both. This is also true for each $y_i$. This accounts for the $2n$ in the sum. Because $\mu[e_i]=0$ for all $i\in [t]$, each $i\in [t]$ with $\eta[e_i]=1$ will contribute one to the sum. Also each $i\in [t]$ with $\overline\eta[e_i]$ will contribute one to the sum. This completes the proof.
\end{proof}

\begin{defn}
Given an arbitrary multiset of binary strings $S$, we say that a coordinate $s$ is \emph{ambiguous} if there are exactly $\frac{1}{2}|S|$ binary strings $\eta\in S$, counted with multiplicity, such that $\eta[s]=0$.
Consequently, if you change the value of a median at an ambiguous coordinate, you obtain another median. Note that if $|S|$ is odd, then there are no ambiguous coordinates and there is exactly one median.
\label{defn: ambig}
\end{defn}

\begin{fact}
Let $S$ be a multiset of binary strings which are defined on the  coordinates \[(x_1, y_1, \ldots, x_n, y_n, e_1, \ldots, e_t).\] If $S$ can be partitioned into pairs of strings where the two strings in a pair are complementary on the first $2n$ coordinates, then each $x_i$ and each $y_i$ is an ambiguous coordinate.
\label{fact: comp ambig}
\end{fact}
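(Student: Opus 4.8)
The plan is to work one coordinate at a time and exploit the pairing directly; the statement is essentially immediate once the right quantity is counted. Fix $i\in[n]$ and consider the coordinate $x_i$ (the argument for $y_i$ will be verbatim the same). First I would record that, since $S$ is partitioned into pairs, $|S|$ is even; write $|S|=2q$ and let $\{\eta^{(1)},\ov{\eta^{(1)}}\}, \ldots, \{\eta^{(q)},\ov{\eta^{(q)}}\}$ be the pairs of the partition, where within each pair the two strings are complementary on the first $2n$ coordinates in the sense of Definition~\ref{complements}.

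The key step is the per-pair count. For each $j\in[q]$, Definition~\ref{complements} gives $\eta^{(j)}[x_i]=1-\ov{\eta^{(j)}}[x_i]$, so exactly one of the two values $\eta^{(j)}[x_i]$, $\ov{\eta^{(j)}}[x_i]$ equals $0$ and the other equals $1$. Hence each pair contributes exactly one string (with multiplicity) having a $0$ in coordinate $x_i$. Summing over the $q$ pairs, the number of strings $\eta\in S$, counted with multiplicity, with $\eta[x_i]=0$ is exactly $q=\tfrac12|S|$. By Definition~\ref{defn: ambig}, this says precisely that $x_i$ is an ambiguous coordinate. Repeating the identical argument with $y_i$ in place of $x_i$ shows every $y_i$ is ambiguous as well, which completes the proof.

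I do not anticipate any real obstacle here: the only thing to be careful about is bookkeeping the "counted with multiplicity" convention so that the sum over pairs genuinely returns $\tfrac12|S|$, and noting at the outset that a partition into pairs forces $|S|$ to be even so that $\tfrac12|S|$ is an integer and Definition~\ref{defn: ambig} applies. No properties of the $e_\ell$ coordinates are needed.
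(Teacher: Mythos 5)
Your proof is correct, and it is exactly the counting argument the paper has in mind: the paper states Fact~\ref{fact: comp ambig} without proof, relying on the observation that each complementary pair contributes exactly one string with a $0$ in any fixed coordinate $x_i$ or $y_i$, so the count is $\tfrac12|S|$ and Definition~\ref{defn: ambig} applies. Your bookkeeping of multiplicity and the evenness of $|S|$ is sound, so there is nothing to add.
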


Fix an arbitrary D3CNF, $\Gamma$, with $n$ variables and $k$ clauses. Fix a clause $c_i$ in $\Gamma$. For this clause, we are now ready to define a set of 50 strings \[\mathcal{C}_i=\{\nu_1^i, \nu_2^i, \ldots, \nu_{50}^i\}.\] First assume that $c_i=v_\alpha \vee v_\beta \vee v_\gamma$, a disjunction of three positive literals. Because $\Gamma$ is a D3CNF, we may assume $\alpha<\beta<\gamma$.

 For each $j\in [50]$, we will supply the following three pieces of information for $\nu_j^i$: 
\begin{enumerate}
\item[(a)] The values for $\nu_j^i[x_\alpha], \nu_j^i[y_\alpha], \nu_j^i[x_\beta], \nu_j^i[y_\beta], \nu_j^i[x_\gamma], \nu_j^i[y_\gamma]$ will be explicitly defined. 
\item[(b)]  A constant $\kappa_{ij}\in \{0,1\}$ will be given so that $\nu_j^i[x_\ell]=\nu_j^i[y_{\ell'}]=\kappa_{ij}$ for all $\ell, \ell'\in [n]\setminus\{\alpha, \beta, \gamma\}$.
\item[(c)] The string will be assigned some number of additional ones. 
\end{enumerate}
By Definition~\ref{defi: ones}, this is sufficient to explicitly define $\nu_j^i$. 

In Table~\ref{table: n+5}, there is a row for each string in $\mathcal{C}_i$. The three defining pieces of information are found in Columns A, B, and C respectively. The remainder of the table will be explained in Subsection~\ref{table explanation}. 

For each $j\in [50]$, row $j$ of Table~\ref{table: n+5} supplies the three ingredients needed to define $\nu_j^i$. By matching the 6-bit string in Column A of row $j$ with
\[(\nu_j^i[x_\alpha], \nu_j^i[y_\alpha], \nu_j^i[x_\beta], \nu_j^i[y_\beta], \nu_j^i[x_\gamma], \nu_j^i[y_\gamma])\] we obtain the 6 values for (a).  
The constant $\kappa_{ij}$ for (b) is found in Column B of row $j$. For (c), the number of additional ones in $\nu_j^i$ is found in Column C of row $j$.

\begin{table}[ht!]
\centering
\caption{The 50 strings in $\mathcal{C}_i$ for a single clause $c_i$ along with their Hamming distance from medians in $\mathcal{M}'$.}
\begin{tabular}{c}
\resizebox{5.5in}{!}{
$
\begin{array}{c|c|c|c|c|||c|c|c|c|c|c|c|c|c|}
\cline{3-13}
\multicolumn{2}{c|}{} & \textbf{A} & \textbf{B} & \textbf{C} & \textbf{M1} &\textbf{M2}&\textbf{M3}&\textbf{M4}&\textbf{M5}&\textbf{M6}&\textbf{M7}&\textbf{M8}\\ 
\hline
&&\text{Values of}& \nu^i_j[x_\ell],&& 10\,10\,10& 10\,10\,01&10\,01\,10&01\,10\,10&10\, 01\, 01&01\,10\,01&01\,01\,10&01\, 01\, 01 \\
&\text{Row}&\nu^i_j \text{ on its} & \nu^i_j[y_\ell]&\text{Add'l} &&&&&&&&\\
&\text{\#}&\text{support set}& (v_\ell\not\in c_i)& \text{Ones} &&&&&&&&\\
\hline \hline \hline
\multirow{3}{*}{\rotatebox[origin=c]{90}{$\mathcal{N}_1^{(+3)}$}}
&1&01 \,00 \,00 &0&+3&n+4&n+4&n+4&n+2&n+4&n+2&n+2&n+2\\
&2&00 \,01 \,00 &0&+3&n+4&n+4&n+2&n+4&n+2&n+4&n+2&n+2\\
&3&00 \,00 \,01 &0&+3&n+4&n+2&n+4&n+4&n+2&n+2&n+4&n+2\\

\hline
\hline
\multirow{3}{*}{\rotatebox[origin=c]{90}{$\overline{\mathcal{N}_1}^{(+0)}$}}
&4& 10\,11\,11 &1&+0&n-1&n-1&n-1&n+1&n-1&n+1&n+1&n+1\\
&5&11\,10\,11 &1&+0&n-1&n-1&n+1&n-1&n+1&n-1&n+1&n+1\\
&6&11\,11\,10 &1&+0&n-1&n+1&n-1&n-1&n+1&n+1&n-1&n+1\\

\hline
\hline

\multirow{21}{*}{\rotatebox[origin=c]{90}{${\mathcal{I}_2}^{(+2)} \setminus {\mathcal{N}_2}^{(+2)}$}}
&7&10\,10\,00 &0&+2&n&n&n+2&n+2&n+2&n+2&n+4&n+4\\
&8&10\,00\,10 &0&+2&n&n+2&n&n+2&n+2&n+4&n+2&n+4\\
&9&00\,10\,10 &0&+2&n&n+2&n+2&n&n+4&n+2&n+2&n+4\\
\cline{2-13} 
&10&10\,10\,00 &0&+2&n&n&n+2&n+2&n+2&n+2&n+4&n+4\\
&11&10\,00\,01 &0&+2&n+2&n&n+2&n+4&n&n+2&n+4&n+2\\
&12&00\,10\,01 &0&+2&n+2&n&n+4&n+2&n+2&n&n+4&n+2\\
\cline{2-13}
&13&10\,01\,00 &0&+2&n+2&n+2&n&n+4&n&n+4&n+2&n+2\\
&14&10\,00\,10 &0&+2&n&n+2&n&n+2&n+2&n+4&n+2&n+4\\
&15&00\,01\,10 &0&+2& 5&n+4&n&n+2&n+2&n+4&n&n+2\\
\cline{2-13}
&16&01\,10\,00 &0&+2&n+2&n+2&n+4&n&n+4&n&n+2&n+2\\
&17&01\,00\,10 &0&+2&n+2&n+4&n+2&n&n+4&n+2&n&n+2\\
&18&00\,10\,10 &0&+2&n&n+2&n+2&n&n+4&n+2&n+2&n+4\\
\cline{2-13}
&19&10\,01\,00 &0&+2&n+2&n+2&n&n+4&n&n+4&n+2&n+2\\
&20&10\,00\,01 &0&+2&n+2&n&n+2&n+4&n&n+2&n+4&n+2\\
&21&00\,01\,01 &0&+2&n+4&n+4&n+2&n+4&n&n+2&n+2&n\\
\cline{2-13}
&22&01\,10\,00 &0&+2&n+2&n+2&n+4&n&n+4&n&n+2&n+2\\
&23&01\,00\,01 &0&+2&n+4&n+2&n+4&n+2&n+2&n&n+2&n\\
&24&00\,10\,01 &0&+2&n+2&n&n+4&n+2&n+2&n&n+4&n+2\\
\cline{2-13}
&25&01\,01\,00 &0&+2&n+4&n+4&n+2&n+2&n+2&n+2&n&n\\
&26&01\,00\,10 &0&+2&n+2&n+4&n+2&n&n+4&n+2&n&n+2\\
&27&00\,01\,10 &0&+2&n+2&n+4&n&n+2&n+2&n+4&n&n+2\\ 

\hline
\hline

\multirow{21}{*}{\rotatebox[origin=c]{90}{$\overline{\mathcal{I}_2}^{(+1)} \setminus \overline{\mathcal{N}_2}^{(+1)}$ }} 
&28&10\,10\,11 &1&+1&n-1&n-1&n+1&n+1&n+1&n+1&n+3&n+3\\
&29&10\,11\,01 &1&+1&n+1&n-1&n+1&n+3&n-1&n+1&n+3&n+1\\
&30&11\,10\,01 &1&+1&n+1&n-1&n+3&n+1&n+1&n-1&n+3&n+1\\
\cline{2-13}
&31&10\,01\,11 &1&+1&n+1&n+1&n-1&n+3&n-1&n+3&n+1&n+1\\
&32&10\,11\,10 &1&+1&n-1&n+1&n-1&n+1&n+1&n+3&n+1&n+3\\
&33&11\,01\,10 &1&+1&n+1&n+3&n-1&n+1&n+1&n+3&n-1&n+1\\
\cline{2-13}
&34&01\,10\,11 &1&+1&n+1&n+1&n+3&n-1&n+3&n-1&n+1&n+1\\
&35&01\,11\,10 &1&+1&n+1&n+3&n+1&n-1&n+3&n+1&n-1&n+1\\
&36&11\,10\,10 &1&+1&n-1&n+1&n+1&n-1&n+3&n+1&n+1&n+3\\
\cline{2-13}
&37&10\,01\,11 &1&+1&n+1&n+1&n-1&n+3&n-1&n+3&n+1&n+1\\
&38&10\,11\,01 &1&+1&n+1&n-1&n+1&n+3&n-1&n+1&n+3&n+1\\
&39&11\,01\,01 &1&+1&n+3&n+1&n+1&n+3&n-1&n+1&n+1&n-1\\
\cline{2-13}
&40&01\,10\,11 &1&+1&n+1&n+1&n+3&n-1&n+3&n-1&n+1&n+1\\
&41&01\,11\,01 &1&+1&n+3&n+1&n+3&n+1&n+1&n-1&n+1&n-1\\
&42&11\,10\,01 &1&+1&n+1&n-1&n+3&n+1&n+1&n-1&n+3&n+1\\
\cline{2-13}
&43&01\,01\,11 &1&+1&n+3&n+3&n+1&n+1&n+1&n+1&n-1&n-1\\
&44&01\,11\,10 &1&+1&n+1&n+3&n+1&n-1&n+3&n+1&n-1&n+1\\
&45&11\,01\,10 &1&+1&n+1&n+3&n-1&n+1&n+1&n+3&n-1&n+1\\
\cline{2-13}
&46&01\,01\,11 &1&+1&n+3&n+3&n+1&n+1&n+1&n+1&n-1&n-1\\
&47&01\,11\,01 &1&+1&n+3&n+1&n+3&n+1&n+1&n-1&n+1&n-1\\
&48&11\,01\,01 &1&+1&n+3&n+1&n+1&n+3&n-1&n+1&n+1&n-1\\

\hline						
\hline

\rotatebox[origin=c]{90}{$\mathcal{N}_3^{(+1)}$}
&49&01\,01\,01 &0&+1&n+3&n+2&n+2&n+4&n&n&n&n-2\\	
\hline \hline

\rotatebox[origin=c]{90}{$\overline{\mathcal{ N}_3}^{(+2)}$}
&50&10\,10\,10 &1&+2&n-1&n+1&n+1&n+1&n+3&n+3&n+3&n-3\\	
\hline

\end{array}$
}
\end{tabular}

\begin{tablenotes}[flushleft]
\item[] \tiny{For a clause $c_i$, the left three columns define the 50 strings in $\mathcal{C}_i$. In row $j$, the 6-bit string gives the values of $\nu_j^i$ on the support set $S_i$ as described by Table~\ref{table: S_i}. The second column gives the constant value to be assigned to all $x_\ell$ and $y_\ell$ which are not in $S_i$. The third column specifies the number of extra ones in $\nu_j^i$. The collection $\{01,10\}^3$ is listed along the top row. The entry in row $j$ and column $\ell$ is the number of additional ones in $\nu_j^i$ added to the Hamming distance between the 6-bit string in row $j$ and the 6-bit string at the top of column $\ell$.}
\end{tablenotes}
\label{table: n+5}
\end{table}

With a slight modification in the reading of Column A, the 50 rows of Table~\ref{table: n+5} will also supply the 50 strings for a clause which contains negative literals. Fix an arbitrary clause $c_i$ in $\Gamma$ which now may have negative literals. For each $j\in[50]$, the definition of string $\nu_j^i$ will again be based on Columns  A, B, C of row $j$ in Table~\ref{table: n+5} where the same information will be gleaned from Columns B and C. The only difference is with Column A which will be explained next. 

If $c_i$ contains the variables $v_\alpha, v_\beta, v_\gamma$ where some of these may be present as negative literals, set $S_i:=\{x_\alpha,y_\alpha,x_\beta,y_\beta,x_\gamma,y_\gamma\}$. We call $S_i$ the \emph{support set} of $c_i$.
Clause $c_i$ must be one of the 8 clauses listed in Column A of Table~\ref{table: S_i}. For $j\in [50]$, $\nu_j^i$ is defined on the coordinates $S_i$ by matching the entry in the right column of the $c_i$ row of Table~\ref{table: S_i} with the 6-bit string in Column A of the $j^{th}$ row of Table~\ref{table: n+5}. 

\begin{example}
For an example, when $c_i=v_\alpha \vee \overline{v_\beta} \vee \overline{v_\gamma}$, the last row of Table~\ref{table: n+5} says that the string $\nu_{50}^i$ must have 
\[(\nu_{50}^i[x_\alpha], \nu_{50}^i[y_\alpha], \nu_{50}^i[y_\beta], \nu_{50}^i[x_\beta], \nu_{50}^i[y_\gamma], \nu_{50}^i[x_\gamma]) = (101010).\]
Therefore, $\nu_{50}^i[x_\alpha]=1$, $\nu_{50}^i[y_\alpha]=0$, $\nu_{50}^i[x_\beta]=0$, $\nu_{50}^i[y_\beta]=1$, $\nu_{50}^i[x_\gamma]=0$, and $\nu_{50}^i[y_\gamma]=1$. Further, Column B implies $\nu_{50}^i(x_\ell)=\nu_{50}^i(y_\ell)=1$ for all $\ell\in [n]\setminus\{\alpha, \beta, \gamma\}$ and, from Column C, $\nu_{50}^i$ will have 2 additional ones. 
\end{example}

Now that we have defined $\mathcal{C}_i$ for any clause $c_i$, let us analyze $\mathcal{M}(\mathcal{C}_i)$.
By Fact~\ref{fact: ones}, for every $\mu\in \mathcal{M}(\mathcal{C}_i)$ and $\ell \in [t]$, $\mu[e_\ell]=0$.

 In Column B of Table~\ref{table: n+5}, it is evident that for any $\ell \in [n]\setminus \{\alpha, \beta, \gamma\}$, the number of strings $\nu_j^i$ with $\nu_j^i[x_\ell]=0$ is $25=\frac{1}{2}|\mathcal{C}_i|$. Therefore, by Definition~\ref{defn: ambig}, the coordinates $x_\ell$ and $y_\ell$ are {ambiguous}. Through careful inspection of the strings in Column A of Table~\ref{table: n+5}, we see that coordinates $x_{\ell'}$ and $y_{\ell'}$ are also ambiguous for each $\ell'\in\{\alpha, \beta, \gamma\}$. Therefore we have proven the following fact, which was one of our goals: 

 \begin{fact}
 For an arbitrary clause $c_i$ with three distinct variables,
 \[\mathcal{M}(\mathcal{C}_i)= \{0,1\}^{2n} \times \{0\}^t.\]
 \label{fact: medians}
\end{fact}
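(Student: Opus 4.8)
The plan is to exploit the coordinatewise decomposition of Hamming distance. Since $\sum_{j}H(\nu_j^i,\mu)=\sum_{s}\bigl|\{j:\nu_j^i[s]\neq\mu[s]\}\bigr|$, a string $\mu$ is a median of $\mathcal{C}_i$ if and only if, at each of the $2n+t$ coordinates $s$, the bit $\mu[s]$ is a majority value among the $50$ strings of $\mathcal{C}_i$ at coordinate $s$. Hence $\mathcal{M}(\mathcal{C}_i)$ is the Cartesian product, over all coordinates, of the set of majority values at that coordinate --- a single forced bit when one value strictly dominates, and $\{0,1\}$ exactly when the coordinate is \emph{ambiguous} in the sense of Definition~\ref{defn: ambig} (possible here because $|\mathcal{C}_i|=50$ is even). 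So the whole task reduces to showing that every $e_\ell$ is forced to $0$ and every $x_\ell$ and $y_\ell$ is ambiguous.

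Two of the three kinds of coordinates are immediate. For the $e_\ell$'s: $\mathcal{C}_i$ was built via Definition~\ref{defi: ones}, so exactly one of the $50$ strings has a $1$ in coordinate $e_\ell$; thus $0$ is the strict majority there and, as already observed in Remark~\ref{fact: ones}, $\mu[e_\ell]=0$ for every median $\mu$. For the coordinates $x_\ell,y_\ell$ with $\ell\notin\{\alpha,\beta,\gamma\}$: these are governed by Column~B, where $\nu_j^i[x_\ell]=\nu_j^i[y_\ell]=\kappa_{ij}$, and the rows with $\kappa_{ij}=0$ are exactly $\mathcal{N}_1^{(+3)}$, $\mathcal{I}_2^{(+2)}\setminus\mathcal{N}_2^{(+2)}$, and $\mathcal{N}_3^{(+1)}$, numbering $3+21+1=25=\tfrac{1}{2}|\mathcal{C}_i|$; so these coordinates are ambiguous by Definition~\ref{defn: ambig}.

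The substantive step is the six coordinates of the support set $S_i=\{x_\alpha,y_\alpha,x_\beta,y_\beta,x_\gamma,y_\gamma\}$, which are read off from the $6$-bit patterns in Column~A. I would prove that each of the six bit positions of Column~A contains a $0$ in exactly $25$ of the $50$ rows; then all six of these coordinates are ambiguous and the argument is complete. This tally is organized by the block structure of Table~\ref{table: n+5}: rows $1,2,3$ and rows $4,5,6$ form three bitwise-complementary pairs and hence jointly put three $0$'s in each of the six columns, rows $49$ and $50$ form a complementary pair contributing one $0$ to each column, and each of the two $21$-row blocks splits into the seven triples marked off by the horizontal rules, whose per-column contributions can be read directly. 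One further observation is needed: when $c_i$ has negative literals, the $6$-bit pattern of a row is attached to the six support coordinates in the permuted order prescribed by Table~\ref{table: S_i}; but permuting the columns only permutes which coordinate gets which count, so the property that every column-count equals $25$, and therefore the conclusion, is unaffected.

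Assembling the pieces: every $e_\ell$ is forced to $0$, while every $x_\ell$ and every $y_\ell$ admits both $0$ and $1$ as a majority value, so the Cartesian product of the coordinatewise majority sets is exactly $\{0,1\}^{2n}\times\{0\}^t$, which therefore equals $\mathcal{M}(\mathcal{C}_i)$. The only real work --- and the main obstacle --- is the Column~A tally in the third step; everything else follows immediately from Remark~\ref{fact: ones}, Definition~\ref{defi: ones}, and Column~B. To make that step rigorous I would display the count explicitly, grouped by the bitwise-complementary row-pairs and by the seven triples of each large block, and note once that the negative-literal case reduces to the positive one via the coordinate permutation of Table~\ref{table: S_i}.
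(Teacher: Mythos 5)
Your proposal is correct and follows essentially the same route as the paper: the $e_\ell$ coordinates are forced to $0$ by the additional-ones structure (Remark~\ref{fact: ones}), the non-support coordinates are ambiguous by the Column~B count of $25$, and the support coordinates are handled by a direct tally of Column~A (the paper phrases this as ``careful inspection''), with Table~\ref{table: S_i} only permuting coordinates in the negative-literal case. Your explicit framing of $\mathcal{M}(\mathcal{C}_i)$ as the product of coordinatewise majority sets, and the organization of the Column~A count by complementary pairs and triples, just spell out what the paper leaves implicit (cf.\ Remark~\ref{table pairs} and Fact~\ref{fact: comp ambig}).
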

\vspace{-0.3in}

\begin{rem}
By visual inspection of Table~\ref{table: n+5}, the binary strings $\mathcal{C}_i$ can be partitioned into pairs where the two strings in a pair are complementary on the first $2n$ coordinates.
\label{table pairs} 
\end{rem}

\begin{table}[h!]
\centering
\begin{threeparttable}
\caption{A key for interpreting Column A of Table~\ref{table: n+5}.}
\label{table: S_i}
\begin{tabular}{c}
$\begin{array}{c|c}
\text{Clause} & \text{Key to interpret Column A of Table~\ref{table: n+5}} \\
\hline
v_\alpha \vee v_\beta \vee v_\gamma & (\nu_j^i[x_\alpha], \nu_j^i[y_\alpha], \nu_j^i[x_\beta], \nu_j^i[y_\beta], \nu_j^i[x_\gamma], \nu_j^i[y_\gamma])\\
\overline{v_\alpha} \vee v_\beta \vee v_\gamma & (\nu_j^i[y_\alpha], \nu_j^i[x_\alpha], \nu_j^i[x_\beta], \nu_j^i[y_\beta], \nu_j^i[x_\gamma], \nu_j^i[y_\gamma])\\
v_\alpha \vee \overline{v_\beta} \vee v_\gamma & (\nu_j^i[x_\alpha], \nu_j^i[y_\alpha], \nu_j^i[y_\beta], \nu_j^i[x_\beta], \nu_j^i[x_\gamma], \nu_j^i[y_\gamma])\\
v_\alpha \vee v_\beta \vee \overline{v_\gamma} & (\nu_j^i[x_\alpha], \nu_j^i[y_\alpha], \nu_j^i[x_\beta], \nu_j^i[y_\beta], \nu_j^i[y_\gamma], \nu_j^i[x_\gamma])\\
\overline{v_\alpha} \vee \overline{v_\beta} \vee v_\gamma & (\nu_j^i[y_\alpha], \nu_j^i[x_\alpha], \nu_j^i[y_\beta], \nu_j^i[x_\beta], \nu_j^i[x_\gamma], \nu_j^i[y_\gamma])\\
\overline{v_\alpha} \vee v_\beta \vee \overline{v_\gamma} & (\nu_j^i[y_\alpha], \nu_j^i[x_\alpha], \nu_j^i[x_\beta], \nu_j^i[y_\beta], \nu_j^i[y_\gamma], \nu_j^i[x_\gamma])\\
v_\alpha \vee \overline{v_\beta} \vee \overline{v_\gamma} & (\nu_j^i[x_\alpha], \nu_j^i[y_\alpha], \nu_j^i[y_\beta], \nu_j^i[x_\beta], \nu_j^i[y_\gamma], \nu_j^i[x_\gamma])\\
\overline{v_\alpha} \vee \overline{v_\beta} \vee \overline{v_\gamma} & (\nu_j^i[y_\alpha], \nu_j^i[x_\alpha], \nu_j^i[y_\beta], \nu_j^i[x_\beta], \nu_j^i[y_\gamma], \nu_j^i[x_\gamma])\\
\end{array}$

\end{tabular}
\begin{tablenotes}[flushleft]
\item[]
For any clause in the left column, the corresponding entry in the right column above will be matched with the 6-bit string in Column A of row $j$ of Table~\ref{table: n+5} to determine the value of $\nu_j^i$ at each bit in the support set $S_i$.
\end{tablenotes}
\end{threeparttable}
\end{table}

\subsection{Hamming distances between $C_i$ and possible medians} 
\label{table explanation}
Here we explain the remainder of Table~\ref{table: n+5}.
Fix a clause $c_i$ in $\Gamma$ which will be used throughout this subsection. Suppose $c_i$ has variables $v_\alpha$, $v_\beta$, and $v_\gamma$. 
By Fact~\ref{fact: medians}, $\mathcal{M}(\mathcal{C}_i)=\{0,1\}^{2n}\times \{0\}^t.$ Therefore, $\mathcal{M}'(\mathcal{C}_i)$  must be equal to $\{01,10\}^n \times \{0\}^t$. For this subsection, define 
\begin{align*}
\mathcal{M} :=\mathcal{M}(\mathcal{C}_i),\qquad \qquad
\mathcal{M}' := \mathcal{M}'(\mathcal{C}_i).
\end{align*}

Define an equivalence relation $\sim_i$ on $\mathcal{M}'$ such that two medians are equivalent  if they agree on the coordinates in the support set $S_i$ of $c_i$. The result will be 8 equivalence classes
because $\mu[x_\ell] \neq \mu[y_\ell]$ for each $\ell\in \{\alpha, \beta, \gamma\}$ for each $\mu\in \mathcal{M}'$. 

Here we define a one-to-one correspondence between the equivalence classes of $\mathcal{M}'$ under $\sim_i$ and the 6-bit strings heading Columns M1 through M8 in Table~\ref{table: n+5}. 

\begin{defn}
Fix a clause $c_i$ and an integer $\ell\in[8]$.
Consider the 6-bit string $\delta$ which heads column M$\ell$. In Table~\ref{table: S_i}, locate the tuple in the right column corresponding to our fixed clause $c_i$. After replacing each $\nu_j^i$ with $\mu$ in the tuple, match this tuple with $\delta$. This gives six values that a median $\mu\in \mathcal{M}'$ must have if it is in the equivalence class represented by the column heading $\delta$. 
\label{median bijection}
\end{defn}

In Definition~\ref{defi: bijection}, we defined a correspondence between $\mathcal{M}'$ and truth assignments for $\Gamma$. In Definition~\ref{defi: MGamma}, we introduced the notation $\mathcal{M}'_\Gamma(\mathcal{C}_i)$ for the collection of medians in $\mathcal{M}'$ which correspond to satisfying truth assignments for $\Gamma$. Similarly, we defined $\mathcal{M}'_{c_i}(\mathcal{C}_i)$ for each clause $c_i$ in $\Gamma$. For the remainder of this subsection, set 
\begin{align*}
\mathcal{M}'_\Gamma :=\mathcal{M}'_\Gamma(\mathcal{C}_i), \qquad \qquad
 \mathcal{M}'_{c_i} := \mathcal{M}'_{c_i}(\mathcal{C}_i).
 \end{align*}
 The following claim uses the correspondence in Definition~\ref{median bijection} to connect $\mathcal{M}' \setminus \mathcal{M}'_{c_i}$ with a particular equivalence class.

\begin{claim}
Let $c_i$ be a clause in $\Gamma$. For any $\mu\in {\mathcal M}'$, $\mu$ is in the equivalence class
represented by Column M8 of Table~\ref{table: n+5} if and only if $\mu \in  {\mathcal M}' \setminus  {\mathcal M}'_{c_i}$.
\label{bad class}
\end{claim}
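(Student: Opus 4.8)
The plan is to unwind the two correspondences set up in Definitions~\ref{defi: bijection} and \ref{median bijection} and show they match up at Column M8. Fix the clause $c_i$ with variables $v_\alpha, v_\beta, v_\gamma$; for notational simplicity I would first treat the case $c_i = v_\alpha \vee v_\beta \vee v_\gamma$ (all positive literals), since Table~\ref{table: S_i} reduces every other case to this one by a relabeling of the $x$/$y$ coordinates within the support set, and that relabeling commutes with everything below. Recall that for $\mu \in \mathcal{M}'$ we have $\mu[x_\ell] \neq \mu[y_\ell]$ for each $\ell$, so the equivalence class of $\mu$ under $\sim_i$ is determined by the triple $(\mu[x_\alpha], \mu[x_\beta], \mu[x_\gamma]) \in \{0,1\}^3$, and Definition~\ref{median bijection} sends the column heading $\delta$ (a $6$-bit string of the form $\mu[x_\alpha]\mu[y_\alpha]\,\mu[x_\beta]\mu[y_\beta]\,\mu[x_\gamma]\mu[y_\gamma]$) to exactly this class. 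Column M8 has heading $01\,01\,01$, which corresponds to $\mu[x_\alpha] = \mu[x_\beta] = \mu[x_\gamma] = 0$.

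Next I would invoke Definition~\ref{defi: bijection}: the truth assignment $f(\mu)$ assigns \emph{true} to the $i$-th variable iff $\mu[x_i] = 1$. So for $\mu$ in the M8 class we have $f(\mu)$ assigning \emph{false} to each of $v_\alpha, v_\beta, v_\gamma$. Since $c_i = v_\alpha \vee v_\beta \vee v_\gamma$ is a disjunction of exactly these three positive literals, $f(\mu)$ fails to satisfy $c_i$ precisely when all three are false — that is, precisely when $\mu$ is in the M8 class. Conversely, any $\mu \in \mathcal{M}'$ not in the M8 class has at least one of $\mu[x_\alpha], \mu[x_\beta], \mu[x_\gamma]$ equal to $1$, so $f(\mu)$ makes at least one literal of $c_i$ true and hence satisfies $c_i$. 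By Definition~\ref{defi: MGamma}, $\mathcal{M}'_{c_i} = \mathcal{M}'_{c_i}(\mathcal{C}_i)$ is exactly the set of $\mu \in \mathcal{M}'$ whose associated truth assignment satisfies $c_i$, so $\mathcal{M}' \setminus \mathcal{M}'_{c_i}$ is exactly the M8 class. This gives both directions of the iff.

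For a clause containing negative literals, I would argue that the key in Table~\ref{table: S_i} is designed precisely so that this works uniformly: if, say, $c_i = \overline{v_\alpha} \vee v_\beta \vee v_\gamma$, the key swaps the roles of the first two bits, so the $6$-bit heading $01\,01\,01$ of Column M8 now decodes to $(\mu[y_\alpha], \mu[x_\alpha], \mu[x_\beta], \mu[y_\beta], \mu[x_\gamma], \mu[y_\gamma]) = (010101)$, i.e. $\mu[y_\alpha]=0$ so $\mu[x_\alpha]=1$, and $\mu[x_\beta]=\mu[x_\gamma]=0$. Then $f(\mu)$ assigns \emph{true} to $v_\alpha$ (making $\overline{v_\alpha}$ false) and \emph{false} to $v_\beta, v_\gamma$ — again the unique assignment of the support variables that falsifies $c_i$. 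The same bookkeeping works for all eight literal-sign patterns, because Table~\ref{table: S_i} applies exactly the coordinate transposition that turns "the literal is satisfied" into "the corresponding bit of the heading is $1$". I would state this once and note that the M8 heading $01\,01\,01$ always decodes, under the appropriate key, to the single falsifying assignment of $c_i$'s variables.

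The only real subtlety — and the place I would be most careful — is making sure the two decoding conventions (Table~\ref{table: S_i} for the strings $\nu_j^i$, and Definition~\ref{median bijection} for the medians) are genuinely the \emph{same} convention applied to $\mu$ in place of $\nu_j^i$, so that "the class represented by column M$\ell$" is unambiguous and so that the literal-sign transpositions cancel correctly; this is a definition-chasing point rather than a computational one. No Hamming-distance computation from the table is needed for this claim — the table's M-columns are only being used as labels for the eight $\sim_i$-classes here, and the numeric entries play no role until later claims. So the proof is short: decode M8, apply $f$, observe it is the unique falsifying assignment of $c_i$, and conclude.
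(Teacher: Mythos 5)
Your proposal is correct and follows essentially the same route as the paper's proof: both arguments amount to decoding the Column M8 heading $01\,01\,01$ through the key in Table~\ref{table: S_i} and observing that, for every sign pattern of the literals, it yields precisely the unique falsifying assignment of $c_i$'s variables under the correspondence of Definition~\ref{defi: bijection}. The paper organizes this literal by literal (showing each pair of heading bits must be $01$ for a non-satisfying median), while you decode the whole column at once, but the substance and the key observation about the coordinate transposition are identical.
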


\begin{proof}
Fix a clause $c_i$ with variables $v_\alpha, v_\beta, v_\gamma$. This clause may have some negative literals. We focus our attention on $v_\alpha$. The arguments for $v_\beta$ and $v_\gamma$ are exactly the same. 

There are two cases depending on whether $v_\alpha$ appears as a positive literal or a negative literal in $c_i$. 

In the case where $v_\alpha$ appears in $c_i$ as a positive literal, the truth assignment which makes $c_i$ false assigns a value of false to $v_\alpha$. A corresponding median $\mu\in \mathcal{M}'$ has $\mu[x_\alpha]=0$ and $\mu[y_\alpha]=1$. Because $v_\alpha$ appears as a positive literal in $c_i$, the entry in the second column of Table~\ref{table: S_i} has $\mu[x_\alpha]$ followed by $\mu[y_\alpha]$. So, in this case, the 6-bit string which heads the column for medians in $\mathcal{M}' \setminus \mathcal{M}'_{c_i}$ has $01$ in the first two entries. 

In the case where $v_\alpha$ appears as a negative literal in $c_i$, the non-satisfying truth assignments for $c_i$ must have $v_\alpha$ true. The corresponding medians $\mu\in \mathcal{M}'$ will have $\mu[x_\alpha]=1$ and $\mu[y_\alpha]=0$. For the clauses with variable $v_\alpha$ appearing as a negative literal in $c_i$, a quick glance at Table~\ref{table: S_i} reveals that $\mu[y_\alpha]$ immediately precedes $\mu[x_\alpha]$ in the 6-bit column headings in Table~\ref{table: n+5}. As a result, the column representing medians in $\mathcal{M}' \setminus \mathcal{M}'_{c_i}$ has $01$  in the first two entries. 

Repeating this argument for $v_\beta$ and $v_\gamma$, we see that medians in $\mathcal{M}'\setminus \mathcal{M}'_{c_i}$ are represented by the column with heading $010101$. 
\end{proof}

Now that we have defined the rows and columns of Table~\ref{table: n+5}, we conclude this subsection by defining the entries within Table~\ref{table: n+5} for fixed clause $c_i$.

Let $\mu\in \mathcal{M}'$ be an arbitrary median that falls into the equivalence class represented by Column M$\ell$ for some $\ell\in[8]$. The entry $a_{j\ell}$ in Row $j$ and Column M$\ell$ of Table~\ref{table: n+5} is $H(\mu, \nu_j^i)$. This value can be calculated as follows: 
\begin{itemize}
\item First, take the Hamming distance between the 6-bit string in Column $A$ of Row $j$ and the 6-bit string in the header of Column M$\ell$. This is equal to the Hamming distance between the restrictions of $\mu$ and $\nu_j^i$ to the support set $S_i$ for $c_i$. 
\item For any $s \not\in \{\alpha, \beta, \gamma\}$, $\mu[x_s] \neq \mu[y_s]$ and $\nu_j^i[x_s] = \nu_j^i[y_s]$. Therefore the Hamming distance between $(\mu[x_s] , \mu[y_s])$ and $(\nu_j^i[x_s] , \nu_j^i[y_s])$ is 1 for each $s\in [n] \setminus \{\alpha, \beta, \gamma\}$. 
\item Finally, because $\mu[e_s]=0$ for all $s\in[t]$, the Hamming distance between the restrictions of $\mu$ and $\nu_j^i$ to the coordinates $(e_1, e_2, \ldots, e_t)$ is the number of additional ones in $\nu_j^i$ which is found in Column C of Row $j$.
\end{itemize}
Adding these three values together gives the entry $a_{j\ell}$. 

\subsection{Distinguishing the satisfying truth assignments}
\label{distinguish}
Fix a clause $c_i$ in arbitrary D3CNF $\Gamma$. For this subsection, we again set $\mathcal{M}':= \mathcal{M}'(\mathcal{C}_i)$, $\mathcal{M}'_\Gamma:= \mathcal{M}'_\Gamma(\mathcal{C}_i)$, and $\mathcal{M}'_{c_i}:= \mathcal{M}'_{c_i}(\mathcal{C}_i)$.  
For each $\mu \in \mathcal{M}'\setminus \mathcal{M}'_{c_i}$, $\mu$ is in the equivalence class represented by Column M8 according to Claim~\ref{bad class}. Then reading the entries in Column M8 of Table~\ref{table: n+5}, we find the multiset (where parenthetical subscripts give the multiplicity of that value in the multiset):
\begin{align}
\{H(\mu,\nu_j^i): j\in[50]\} = \{&(n-2)_{(1)}, (n-1)_{(6)}, n_{(3)}, (n+1)_{(15)},  \nonumber \\ & (n+2)_{(15)}, (n+3)_{(3)}, (n+4)_{(6)}, (n+5)_{(1)} \}.
\label{dist for nonsat}
\end{align}
Otherwise, for each median $\mu\in \mathcal{M}'_{c_i}$, $\mu$ is in one of 7 equivalence classes represented in Columns M1 through M7. The entries in each of these columns yields	
\begin{align}
 \{H(\mu,\nu_j^i): j\in[50]\} = \{(n-1)_{(7)}, n_{(6)}, (n+1)_{(12)}, (n+2)_{(12)}, (n+3)_{(6)}, (n+4)_{(7)}\}.
 \label{dist for sat}
 \end{align}

Therefore,we can use $\mathcal{C}_i$ to distinguish between the medians in $\mathcal{M}'_{c_i}$ and the medians in $\mathcal{M}'\setminus \mathcal{M}'_{c_i}$. For example, given $\mu\in \mathcal{M}' = \{01,10\}^n\times \{0\}^t$, if we determine that  $(n+5) \in \{H(\mu, \nu_j^i): j\in [50]\}$, then we can conclude $\mu\in \mathcal{M'}\setminus  \mathcal{M}'_{c_i}.$

Now we wish to consider all of the $C_i$ multisets together. It is clear that each $x_i$ and each $y_i$ coordinates will remain ambiguous in the multiset $\biguplus_{i\in[k]} \mathcal{C}_i$. For the additional ones, we will take $t$ large enough to maintain the property that, for each $i\in [t]$, there is at most one binary string $\eta$ in $\biguplus_{i\in[k]} \mathcal{C}_i$ with $\eta[e_i]=1$. As a result, 
\[\mathcal{M}\left(\biguplus_{i\in[k]} \mathcal{C}_i\right) = \{0,1\}^n\times \{0\}^t.\]
Further, 
\begin{align*}
\mathcal{M}'_{c_i} := \mathcal{M}'_{c_i}(\mathcal{C}_i) = \mathcal{M}'_{c_i} \left(\biguplus_{i\in[k]} \mathcal{C}_i \right), \\
\mathcal{M}'_{\Gamma}:= \mathcal{M}'_{\Gamma}(\mathcal{C}_i) = \mathcal{M}'_{\Gamma} \left(\biguplus_{i\in[k]} \mathcal{C}_i \right) .
\end{align*}
By definition of the sets $\mathcal{M}'_{c_i}$ and $\mathcal{M}'_{\Gamma}$,
\begin{align}
\mathcal{M}'_{\Gamma} &= \bigcap_{i\in[k]}\mathcal{M}'_{c_i},\\
\mathcal{M}' \setminus \mathcal{M}'_{\Gamma} & = \mathcal{M}' \setminus \bigcap_{i\in[k]}\mathcal{M}'_{c_i} = \bigcup_{i\in[k]} \left(\mathcal{M'}\setminus  \mathcal{M}'_{c_i}\right).
\end{align}
Therefore the multiset $\biguplus_{i\in[k]}\mathcal{C}_i$ will serve as a tool to distinguish $\mathcal{M}'_{\Gamma}$ from $\mathcal{M}' \setminus \mathcal{M}'_{\Gamma}$. 

\section{Complexity of computing $Z(B,x!)$}
\label{sec:factorial}
Before stating Theorem~\ref{thm: number SCJ scenarios}, we need a result which is equivalent to the Prime Number Theorem. Define \[\theta(x) := \sum_{\substack{p \leq x \\  p\,\, prime}} \log p.\]

\begin{thm}
  $\theta(x) \sim x.$ 
\end{thm}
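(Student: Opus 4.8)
The statement to prove is the asymptotic $\theta(x) \sim x$, which is indeed equivalent to the Prime Number Theorem. The plan is to not reprove PNT from scratch — that would be absurd in this context — but to cite it and observe the equivalence, or alternatively to give the standard elementary reduction showing $\theta(x) \sim x$ follows from $\pi(x) \sim x/\log x$ (and conversely). I would present the short chain of inequalities that ties $\theta(x)$ to $\pi(x)\log x$.

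Here is the approach I would write up. First, recall Chebyshev's function $\psi(x) = \sum_{p^k \le x} \log p = \sum_{p \le x} \lfloor \log_p x \rfloor \log p$, and note the elementary sandwich $\theta(x) \le \psi(x) \le \theta(x) + O(\sqrt{x}\,(\log x)^2)$, since the higher prime-power terms contribute at most $\sum_{k \ge 2} \pi(x^{1/k})\log x = O(\sqrt x \log x)$ — actually $O(\sqrt x \log x)$ suffices. Hence $\theta(x) \sim \psi(x)$, and it is enough to know $\psi(x) \sim x$, which is the standard form of PNT (due to Hadamard and de la Vallée Poussin). Alternatively, to connect directly with $\pi(x)$: trivially $\theta(x) = \sum_{p \le x}\log p \le \pi(x)\log x$, and for any $\varepsilon \in (0,1)$, keeping only primes in $(x^{1-\varepsilon}, x]$ gives $\theta(x) \ge \sum_{x^{1-\varepsilon} < p \le x}\log p \ge (1-\varepsilon)(\log x)\bigl(\pi(x) - \pi(x^{1-\varepsilon})\bigr) \ge (1-\varepsilon)(\log x)\bigl(\pi(x) - x^{1-\varepsilon}\bigr)$. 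Dividing by $x$ and using $\pi(x) \sim x/\log x$ (PNT), the lower bound tends to $(1-\varepsilon)$ and the upper bound tends to $1$; since $\varepsilon$ is arbitrary, $\theta(x)/x \to 1$.

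The key steps, in order: (1) state that $\theta(x) \sim x$ is a known reformulation of the Prime Number Theorem and cite a standard reference; (2) if a self-contained argument is desired, recall $\pi(x) \sim x/\log x$; (3) establish the two-sided bound $(1-\varepsilon)(\pi(x) - x^{1-\varepsilon})\log x \le \theta(x) \le \pi(x)\log x$; (4) divide by $x$, take limits, and let $\varepsilon \to 0^+$. There is no real obstacle here — the only "hard part" is the Prime Number Theorem itself, which is being invoked as a black box; everything else is a two-line Abel-summation-style estimate. In the context of this paper, I would keep the proof to essentially one or two sentences: "This is a classical equivalent form of the Prime Number Theorem; see [standard analytic number theory reference]." The reason the paper needs it is purely to guarantee that there are enough primes $p \le 5\max\{300, n+5\}$ below the stated bound to carry out the Chinese-Remainder-Theorem reconstruction of $|\mathcal{M}'_\Gamma(\mathcal{D}(p))|$, so only the lower bound $\theta(x) \gtrsim (1-o(1))x$ is actually used downstream.
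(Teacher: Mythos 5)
Your proposal is correct and matches the paper's treatment: the paper simply states $\theta(x)\sim x$ as a known equivalent form of the Prime Number Theorem (with no proof given, only the subsequent explicit bounds of Rosser cited), exactly as you suggest doing. Your sketched reduction via $(1-\varepsilon)\bigl(\pi(x)-x^{1-\varepsilon}\bigr)\log x \le \theta(x) \le \pi(x)\log x$ is a standard and valid way to make the equivalence explicit, but it is not needed beyond the citation.
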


As a result, the next lemma and corollary hold.  
\begin{lemma}[\cite{rosser}]
For $2 \leq x$, \[\left( 1- \frac{2.85}{\log x} \right)x \leq \theta(x) \leq \left( 1+ \frac{2.85}{\log x} \right)x.\]
\end{lemma}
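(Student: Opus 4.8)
The plan is essentially to invoke the work of Rosser and Schoenfeld, since this is their classical explicit form of the Prime Number Theorem and a self-contained proof from first principles would require a substantial excursion into analytic number theory. Nonetheless, let me outline the structure such a proof takes. One works not with $\theta$ directly but with the Chebyshev function $\psi(x) = \sum_{p^k \le x} \log p$, which is more amenable to contour-integral methods. The starting point is the explicit formula $\psi(x) = x - \sum_{\rho} \frac{x^\rho}{\rho} - \log(2\pi) - \tfrac{1}{2}\log(1 - x^{-2})$, where $\rho$ ranges over the nontrivial zeros of the Riemann zeta function, or a smoothed and truncated version of it obtained via Perron's formula.

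The second step is to bound the sum over zeros. This requires a classical zero-free region of the form $\sigma > 1 - c/\log|t|$ for $\zeta(\sigma + it)$, bounds on the number of zeros up to height $T$, and, for the sharp constant $2.85$, numerical verification that the Riemann Hypothesis holds up to a computationally accessible height together with careful numerical estimation of the contribution of the remaining zeros. Combining these yields an effective bound $|\psi(x) - x| \le \varepsilon(x)\, x$ with an explicit function $\varepsilon(x)$. The third step passes from $\psi$ to $\theta$ via the identity $\psi(x) = \sum_{k \ge 1} \theta(x^{1/k})$, so that $0 \le \psi(x) - \theta(x) = \sum_{k \ge 2}\theta(x^{1/k}) = O(\sqrt{x}\,\log^2 x)$, which is absorbed into the error term; one then verifies the stated inequality for small $x$ by direct computation and for large $x$ from the effective bound, adjusting the numerical constant so that a single value ($2.85$) works uniformly for all $x \ge 2$.

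The main obstacle is the numerical input: producing the specific constant $2.85$ valid for all $x \ge 2$, rather than merely the asymptotic statement $\theta(x) = x + o(x)$, depends on delicate numerical analysis of the zeros of $\zeta$, which is precisely the content of Rosser and Schoenfeld's paper and not something to be reproduced here. I note, however, that for the application in this paper — guaranteeing enough primes $p \le 5\max\{300, n+5\}$ to determine $|\mathcal{M}'_\Gamma(\mathcal{D}(p))|$ via the Chinese Remainder Theorem — only the asymptotic consequence $\theta(x) \sim x$ is actually needed, so any explicit bound with a fixed constant in place of the optimal $2.85$ would serve equally well.
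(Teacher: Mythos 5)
Your proposal is correct and matches the paper's treatment: the paper gives no proof of this lemma either, citing it directly as Rosser's (and Schoenfeld's) explicit form of the Prime Number Theorem, exactly as you do. Your outline of the underlying machinery and your remark that any explicit constant in place of $2.85$ would suffice for the paper's Corollary~\ref{Cor: prime} are both accurate.
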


\begin{cor}
For any $n\geq 300$,
	\[e^{n/2} \leq \prod_{\substack{p\leq n \\ p \,\, prime}} p \leq e^{3n/2}.\]
	\label{Cor: prime}
\end{cor}

Now we can prove the main result for star trees.

\begin{thm}
Calculating $Z(B,x!)$ is \#P-complete.
\label{thm: number SCJ scenarios}
\end{thm}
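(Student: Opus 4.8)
The plan is to establish membership in \#P and \#P-hardness separately. Membership was essentially handled already: Lemma~\ref{lemma: sharpP} shows that computing $Z(B,x!)$ is in \#P, since a witness is a median together with a scenario on each edge, and both the median property and the validity of each scenario are checkable in polynomial time. So the whole burden is the hardness reduction, which I would build as a polynomial-time reduction from \#D3SAT (Lemma~\ref{Lemma: Distinct}).

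\textbf{Hardness reduction.} Fix an arbitrary D3CNF $\Gamma$ with $n$ variables and $k$ clauses. For each prime $p \leq 5\max\{300, n+5\}$ I would construct the multiset $\mathcal{D}(p)$ by taking the disjoint union $\biguplus_{i\in[k]} \mathcal{C}_i$ of the 50-string clause gadgets from Table~\ref{table: n+5}, padded with enough $e$-coordinates (so $t$ is polynomial in $n,k$) so that each $e_\ell$ is hit by at most one string; by Fact~\ref{fact: ones} and the ambiguity discussion this forces $\mathcal{M}(\mathcal{D}(p)) = \{0,1\}^{2n}\times\{0\}^t$, hence $\mathcal{M}'(\mathcal{D}(p)) = \{01,10\}^n\times\{0\}^t$, which by Remark~\ref{defi: bijection2} is in bijection with truth assignments for $\Gamma$. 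Possibly I also need to append a small number of extra ``balancing'' strings (still complementary in pairs, using new dedicated coordinates) so that the constant $K(p)$ and the modular vanishing behave as advertised; the precise tuning of these — and the choice of how the ``$+$additional ones'' counts interact with the factorials modulo $p$ — is where Claim~\ref{all props} and the value of the prime bound really come in.

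\textbf{The modular counting argument.} Using the distance distributions~\eqref{dist for sat} and~\eqref{dist for nonsat}, together with the fact that in $\biguplus \mathcal{C}_i$ a median $\mu\in\mathcal{M}'$ lies in $\mathcal{M}'_\Gamma$ iff it is "good" for every clause and is "bad" (class M8) for some clause otherwise, I would show: for $\mu\in\mathcal{M}'_\Gamma(\mathcal{D}(p))$ the product $\prod_i H(\mu,\nu_i)!$ equals a fixed constant $K(p)$ with $p\nmid K(p)$; for $\mu\in\mathcal{M}'(\mathcal{D}(p))\setminus\mathcal{M}'_\Gamma(\mathcal{D}(p))$ the product is $\equiv 0\bmod p$ (because the M8 column produces a factor of $H=n+5$ or similar that, combined across the construction and the choice $p\le 5\max\{300,n+5\}$, is divisible by $p$); and for $\mu\in\mathcal{M}(\mathcal{D}(p))\setminus\mathcal{M}'(\mathcal{D}(p))$ — i.e. some coordinate pair has $\mu[x_i]=\mu[y_i]$ — the product is again $\equiv 0\bmod p$. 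Hence
\[
Z(\mathcal{D}(p),x!) \equiv |\mathcal{M}'_\Gamma(\mathcal{D}(p))|\cdot K(p) \pmod p,
\]
and since $|\mathcal{M}'_\Gamma|$ is exactly the number of satisfying assignments of $\Gamma$, this determines $|\mathcal{M}'_\Gamma| \bmod p$. Running over all primes $p\le 5\max\{300,n+5\}$ and invoking Corollary~\ref{Cor: prime} (their product exceeds $2^n \ge |\mathcal{M}'_\Gamma|$), the Chinese Remainder Theorem recovers $|\mathcal{M}'_\Gamma|$ exactly. Since there are $O(n)$ such primes and each $\mathcal{D}(p)$ has size polynomial in $n,k$, the reduction is polynomial time, and an oracle for $Z(B,x!)$ solves \#D3SAT.

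\textbf{Main obstacle.} The genuinely hard part is not the CRT wrapper but verifying the arithmetic heart of the construction: that across the full multiset $\biguplus\mathcal{C}_i$ (with padding) the product of factorials is a $p$-unit for satisfying medians, is $\equiv 0$ for every non-satisfying median and every non-"complementary" median, and that the common value $K(p)$ is indeed constant over all of $\mathcal{M}'_\Gamma$. This requires pinning down exactly which Hamming distances occur, how the "$+$additional ones" shifts are distributed (so that factorials line up to the same product), and choosing the number of extra ones / extra strings so that a factor of $p$ appears precisely in the bad cases — this is presumably the content of Claim~\ref{all props}, and it is where the careful bookkeeping of Table~\ref{table: n+5}, the distributions~\eqref{dist for sat}–\eqref{dist for nonsat}, and the bound $p\le 5\max\{300,n+5\}$ all have to be reconciled.
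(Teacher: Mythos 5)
Your outline matches the paper's strategy at the top level (membership via Lemma~\ref{lemma: sharpP}, then a mod-$p$/CRT reduction from \#D3SAT built on the clause gadgets of Table~\ref{table: n+5} and the distance distributions \eqref{dist for sat}--\eqref{dist for nonsat}), but the step you defer as ``balancing'' is exactly the content of the proof, and as sketched your congruence claims fail. First, the divisibility for non-satisfying medians does not come for free from the M8 column: the paper takes primes $p$ \emph{strictly greater} than $n'=\max\{300,n+5\}$ (and at most $5n'$) and shifts every string in $\mathcal{C}_i$ by $q=p-(n+5)$ additional ones, so that the bad-column distances become $p-7,\dots,p$ while the good-column distances become $p-6,\dots,p-1$; only then is exactly one factorial, namely $p!$, divisible by $p$, while $K(p)$ remains a $p$-unit. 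With your unshifted gadgets and $p>n+5$, the factor $(n+5)!$ is \emph{not} divisible by $p$, so non-satisfying medians in $\mathcal{M}'$ do not vanish mod $p$; and if instead you used small primes $p\le n+5$ (your stated range ``all primes $p\le 5\max\{300,n+5\}$'' includes them), then $K(p)\equiv 0 \bmod p$ as well and the congruence is vacuous. So both the shift and the correct prime window (primes in $(n',5n']$, which is also what makes Corollary~\ref{Cor: prime} give $\prod p_i\ge e^{n'}>2^n$) are indispensable, not tuning details.

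Second, the multiset $\biguplus_i\mathcal{C}_i$ alone does not make medians in $\mathcal{M}\setminus\mathcal{M}'$ vanish mod $p$: Table~\ref{table: n+5} only controls distances to medians in $\mathcal{M}'=\{01,10\}^n\times\{0\}^t$, and a median with $\mu[x_j]=\mu[y_j]$ for some $j$ contributes an uncontrolled, generically nonzero residue, which destroys the identity $Z\equiv|\mathcal{M}'_\Gamma|\cdot K(p)\bmod p$. The paper handles this with two further gadget families: $\mathcal{A}(p)$, a complementary all-ones/all-zeros pair with $q+4$ additional ones, which forces any median whose first $2n$ coordinates do not contain exactly $n$ ones to pick up a factorial of an integer $\ge p$; and $\mathcal{B}_j(p)$, pairs with $q+3$ additional ones, which force any median with $\mu[x_j]=\mu[y_j]$ (but the right total weight) to pick up $p!$. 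These are the three of the four claims in the paper's case analysis that your sketch leaves unproved, and without constructing them (or an equivalent device) the reduction does not go through.
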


\begin{proof}
In Lemma~\ref{lemma: sharpP}, we verified that calculating $Z(B,x!)$ is in \#P .
To show \#P-complete, we give a polynomial time reduction from \#D3SAT.
Fix an arbitrary D3CNF $\Gamma=c_1 \wedge c_2 \wedge \ldots \wedge c_k$ where each $c_i$ is a clause and $\Gamma$ has $n$ variables. 

Using the bound in Corollary~\ref{Cor: prime}, let $n' =\max\{300, n+5\}$.
Fix a prime number $p$ which is greater than $n'$ and at most $5n'$. Let \[q:=p-(n+5).\]

We will explicitly define a multiset \[\mathcal{D}(p) = \mathcal{A}(p)\cup \bigcup_{i\in [n]} \mathcal{B}_i(p) \cup \bigcup_{i\in[k]} \mathcal{C}_i(p)\] consisting of $2+2n+50k$ binary strings with coordinates 
\[(x_1, y_1, x_2, y_2, \ldots, x_n, y_n, e_1, \ldots, e_{t(p)})\]
where 
\begin{eqnarray}t(p):=2(q+4) + 2n(q+3) + k(75 + 50q).\label{t(p)} \end{eqnarray}
The coordinates $e_1, e_2, \ldots, e_{t(p)}$ are for the \emph{additional ones}. In order to define each $\eta \in \mathcal{D}(p)$, we will give exact values for $\eta[x_j]$ and $\eta[y_j]$ for each $j\in [n]$ and specify the number of additional ones that $\eta$ will have. Definition~\ref{defi: ones} tells how to obtain the values of $\eta[e_j]$ for each $j\in [t(p)]$ from this information.  

All strings in $\mathcal{D}(p)$ will come in pairs which are complementary on the first $2n$ entries (Definition~\ref{defn: ambig}).  As a result, we can use Fact~\ref{fact: comp ambig} to see that each of the first $2n$ coordinates are ambiguous in $\mathcal{D}(p)$. 

Now we begin defining the strings in multiset that together create $\mathcal{D}(p)$. The set $\mathcal{A}(p)$ consists of two strings, $\alpha$ and $\ov\alpha$. Define  $\alpha$ to have $\alpha[x_i]=\alpha[y_i]=1$ for all $i\in [n]$ and $q+4$ additional ones. Define $\ov\alpha$ to be complementary to $\alpha$ on the first $2n$ entries and have $q+4$ additional ones. 

For each $j\in [n]$, the set $\mathcal{B}_j(p)$ will consist of two strings, $\beta_j$ and $\ov{\beta_j}$. Define $\beta_j$ to be the string with $\beta_j[x_j]=\beta_j[y_j]=1$ and for all $j'\in [n]$ with $j'\neq j$, $\beta_{j}[x_{j'}] = \beta_j[y_{j'}] =0$ and $q+3$ additional ones. Define $\ov{\beta_j}$ to be complementary to $\beta_j$ on the first $2n$ entries and have $q+3$ additional ones. 

For each $i\in [k]$, the set $\mathcal{C}_i(p)$ will have 50 strings. These are obtained by adding $q$ more additional ones to the 50 strings in $\mathcal{C}_i$ which were defined through Table~\ref{table: n+5} (see Section~\ref{section: define tables}). In other words, increase each entry in Column C of Table~\ref{table: n+5} by $q$ to obtain $\mathcal{C}_i(p)$.

In summary, we have constructed the strings
\[\mathcal{D}(p):=\mathcal{A}(p)\cup \bigcup_{i\in [n]} \mathcal{B}_i(p) \cup \bigcup_{i\in[k]} \mathcal{C}_i(p).\] 
As described in Definition~\ref{defi: MGamma} and for each clause $c_i$ in $\Gamma$, set
\begin{align*}
\mathcal{M}(p)& :=\mathcal{M}(\mathcal{D}(p)) ,\qquad &
\mathcal{M}'(p) &:=\mathcal{M}'(\mathcal{D}(p)),  \\
\mathcal{M}'_{c_i}(p)&:=\mathcal{M}'_{c_i}(\mathcal{D}(p)),  \qquad &
\mathcal{M}'_\Gamma(p)&:=\mathcal{M}'_\Gamma(\mathcal{D}(p)) .
\end{align*}
As stated in Fact~\ref{fact: ones}, each $\mu\in \mathcal{M}(p)$ has $\mu[e_j]=0$ for all $j\in [t(p)]$. Additionally, because all of the strings in $\mathcal{D}(p)$ come in complementary pairs, the coordinates $x_j$ and $y_j$ are ambiguous for each $j\in[n]$ (Fact~\ref{fact: comp ambig}). Thus there are $2^{2n}$ medians $\mu$. More precisely, 
\begin{align}\mathcal{M}(p) &= \{0,1\}^{2n} \times \{0\}^{t(p)} \label{M(p)} \text{ and} \\
\mathcal{M}'(p)& = \{01,10\}^n \times \{0\}^{t(p)}.\nonumber \end{align} 
Define 
\[ \mathcal{H}(\mu, \mathcal{A}(p)) := \prod_{a\in \mathcal{A}(p)} H(\mu, a)! \]
 and likewise define $\mathcal{H}(\mu, \mathcal{B}_j(p))$ and $\mathcal{H}(\mu,\mathcal{C}_i(p))$ for each $j\in [n]$ and $i\in [k]$. 
Therefore the number of scenarios admitted by median $\mu$ can be expressed by 
\[ \mathcal{H}(\mu) :=  \mathcal{H}(\mu,\mathcal{A}(p)) \cdot \prod_{i\in[n]} \mathcal{H}(\mu,\mathcal{B}_i(p)) \cdot \prod_{i\in[k]} \mathcal{H}(\mu,\mathcal{C}_i(p)).\]

At this point, we wish to calculate $\mathcal{H}(\mu) \! \mod p$ for each median $\mu\in \mathcal{M}(p)$. 
To analyze $\mathcal{H}(\mu)$ for each $\mu\in \mathcal{M}$, we define the following 3 properties that a median $\mu\in \mathcal{M}(p)$ may have. 
\begin{itemize}[leftmargin=1in]
\item [\emph{Property 1}.] $\sum_{i\in [n]} (\mu[x_i] + \mu[y_i]) = n$.
\item [\emph{Property 2}.] $\mu\in\mathcal{M}'(p)$.
\item[\emph{Property 3}.] $\mu \in \mathcal{M}'_\Gamma(p)$. 
\end{itemize}

First notice that these properties are nested. Any $\mu \in \mathcal{M}(p)$ with Property 2 must also have Property 1. Likewise, if $\mu$ has Property 3, it will also have Property 2. The next 4 claims divide $\mathcal{M}(p)$ into 4 classes and examine $\mathcal{H}(\mu)$ for medians in each class. 

\begin{claim} 
For arbitrary $\mu \in \mathcal{M}(p)$, if $\mu$ does not have Property 1, and consequently does not have Property 2 or 3, then $\mathcal{H}(\mu) \equiv 0 \mod p$. 
\end{claim}
\begin{proof}
Let $\mu$ be an arbitrary median in $\mathcal{M}(p)$. For $\alpha\in \mathcal{A}(p)$, Fact~\ref{fact: Hcomp} gives \[H(\mu,\alpha) + H(\mu, \overline\alpha) = 2n + (q+4) + (q+4) = 2p-2.\] Hence, there is an integer $r$ such that $q+4 \leq r \leq 2n+q+4$ and $H(\mu,\alpha)=r$ with
\[\mathcal{H}(\mu,\mathcal{A}(p)) = r!(2p-2-r)!.\]  
Since $\mu$ does not have Property 1, we can conclude that exactly one of the following holds: 
\begin{align*}
H(\mu, \alpha)& \geq (n+1) + (q+4) = p, \text{ or} \\
H(\mu,\overline\alpha)&\geq (n+1) + (q+4) =p.
\end{align*}
Therefore, either $r\geq p$ or $(2p-2-r)\geq p$. In the first case, $r!$ is divisible by $p$ and, in the second, $(2p-2-r)!$ is divisible by $p$. Therefore $\mathcal{H}(\mu,\mathcal{A}(p)) \equiv 0\! \mod p$ and consequently $\mathcal{H}(\mu) \equiv 0\! \mod p$. 
\end{proof}

\begin{claim}
For an arbitrary $\mu \in \mathcal{M}(p)$, if $\mu$ has Property 1, but does not have Property 2, then $\mathcal{H}(\mu)\equiv 0 \!\mod p$.
\end{claim}
\begin{proof}
Suppose $\mu \in \mathcal{M}(p) \setminus \mathcal{M}'(p)$ but $\mu$ has Property 1. 
Because $\mu\not \in \mathcal{M}'(p)$, there is an integer $j_0\in [n]$ such that $\mu[x_{j_0}]=\mu[y_{j_0}]$. In the case when $\mu[x_{j_0}]=0$, we have $H(\mu, \beta_{j_0})=(n+2) + (q+3)=p$. Otherwise $\mu[x_i]=1$ which implies $H(\mu, \overline{\beta_{j_0}})=(n+2)+(q+3) =p$. 
In either case,
\[\mathcal{H}(\mu, \mathcal{B}_{j_0}(p)) = p!(p-4)!\]  
and consequently $\mathcal{H}(\mu) \equiv 0 \mod p.$
\end{proof}

\begin{claim}
For an arbitrary $\mu\in \mathcal{M}(p)$, if $\mu$ has Properties 1 and 2, but does not have Property 3, then $\mathcal{H}(\mu) \equiv 0 \mod p$. 
\end{claim}
\begin{proof}
Let $\mu$ be in $\mathcal{M}'(p) \setminus\mathcal{M}'_{\Gamma} (p)$. Since $\mu$ corresponds to a truth assignment which does not satisfy $\Gamma$, there is a clause $c_{i_0}$ in $\Gamma$ which is not satisfied by this truth assignment. Therefore $\mu\in \mathcal{M}' (p)\setminus \mathcal{M}'_{c_{i_0}} (p)$. By \eqref{dist for nonsat}, before adding the $q$ additional ones to each string from $\mathcal{C}_{i_0}$, we have 
\begin{align}
 \{H(\mu, \nu^{i_0}_j): \nu^{i_0}_j \in \mathcal{C}_{i_0}\} =  
 \{&(n-2)_{(1)}, (n-1)_{(6)}, n_{(3)}, (n+1)_{(15)},  \nonumber \\ & (n+2)_{(15)}, (n+3)_{(3)}, (n+4)_{(6)}, (n+5)_{(1)} \}.
\end{align}
To create $\mathcal{C}_{i_0}(p)$, we added $q$ additional ones to each string in $\mathcal{C}_{i_0}$ which increased each Hamming distance by $q$. Therefore 
\begin{align*}
 \{H(\mu, \nu^{i_0}_j): \nu^{i_0}_j \in \mathcal{C}_{i_0}(p)\} =  
 \{&(p-7)_{(1)}, (p-6)_{(6)}, (p-5)_{(3)}, (p-4)_{(15)},  \nonumber \\ & (p-3)_{(15)}, (p-2)_{(3)}, (p-1)_{(6)}, p_{(1)} \}.
 \end{align*}
As a result,
\[\mathcal{H}(\mu, \mathcal{C}_{i_0}(p)) = (p-7)! (p-6)!^6 (p-5)!^3 (p-4)!^{15} (p-3)!^{15} (p-2)!^3 (p-1)!^6 p!\] which is divisible by $p$. Therefore  $\mathcal{H}(\mu) \equiv 0 \mod p.$
\end{proof}

\begin{claim}
For an arbitrary $\mu\in \mathcal{M}(p)$ having Properties 1, 2, and 3, the value 
\[\mathcal{H}(\mu) = (p-6)!^{7k} (p-5)!^{6k} (p-4)!^{12k} (p-3)!^{12k} (p-2)!^{6k+2n} (p-1)!^{7k+2},\] which is not congruent to 0 modulo $p$. 
\label{all props}
\end{claim}
\begin{proof}
Let $\mu\in\mathcal{M}'_\Gamma(p)$. Because it has Property 1, 
\[\mathcal{H}(\mu, \mathcal{A}(p)) = (n + (q+4))!^2 = (p-1)!^2.\]
Since $\mu$ has Property 2, for any $i\in [n]$,
\[\mathcal{H}(\mu, \mathcal{B}_i(p))  = (n + (q+3))!^2 = (p-2)!^2 .\]
Finally, $\mu$ satisfies Property 3 which means $\mu\in \mathcal{M}'_{c_i}(p)$ for all clauses $c_i$ in $\Gamma$. 

Recall that each string $\eta \in \mathcal{C}_i(p)$ is created from a string $\eta'\in\mathcal{C}_i$ by adding $q$ more additional ones. Therefore $H(\mu, \eta) = H(\mu, \eta')+q$.
So, the multiset $\mathcal{H}(\mu, \mathcal{C}_i(p))$ can be obtained from $\mathcal{H}(\mu, \mathcal{C}_i)$ found in \eqref{dist for sat} by adding $q$ to each element. As a result,
 \[\mathcal{H}(\mu, \mathcal{C}_i(p))  = (p-6)!^7 (p-5)!^6 (p-4)!^{12} (p-3)!^{12} (p-2)!^6 (p-1)!^7 .\]

Therefore 
\begin{align}
\mathcal{H}(\mu) = (p-6)!^{7k} (p-5)!^{6k} (p-4)!^{12k} (p-3)!^{12k} (p-2)!^{6k+2n} (p-1)!^{7k+2}.
\label{H_good}
\end{align}
Because $p$ is prime, $\mathcal{H}(\mu) \not\equiv 0 \mod p.$ 
\end{proof} 

Set \[T(p) := \sum_{\mu\in \mathcal{M}(p)} \mathcal{H}(\mu).\] Set $K(p)$ equal to the function of $p$ displayed in \eqref{H_good}. Thus $K(p)$ is precisely the value of the number of SCJ scenarios admitted by an arbitrary $\mu\in\mathcal{M}'_{\Gamma}(p)$. 
If we calculate $T(p)\!\! \mod p$, the four claims show that 
\begin{align}
T(p) \equiv \sum_{\mu\in \mathcal{M}'_\Gamma(p)} \mathcal{H}(\mu)
 \,\equiv \, |\mathcal{M}'_\Gamma(p)| \cdot K(p) \!\! \mod p.
 \end{align}
If $\gamma$ is the number of satisfying truth assignments for $\Gamma$, then $\gamma=|\mathcal{M}'_\Gamma(p)| $ by Definition~\ref{defi: MGamma}. Therefore 
\[ \gamma \cdot K(p) \equiv T(p)\!\! \mod p.\]
Since $p$ does not divide $K(p)$ (Claim~\ref{all props}), there exists an integer $K'(p)$ such that $K(p)\cdot K'(p) \equiv 1 \!\!\mod p$. Thus 
\[ \gamma \equiv K'(p) \cdot T(p) \!\!\mod p.\]
While this alone is not sufficient to determine the value of $\gamma$, we can repeat this construction for many different prime values to obtain more congruences. 

Recall $p$ was fixed to be a prime greater than $n'$ and at most $5n'$. Repeat the above construction for each prime $p_1, p_2, \ldots, p_m$  in this range. 
The result is a list of congruences:
\begin{align*}
\gamma &\equiv K'(p_1) \cdot T(p_1)\!\! \mod p_1, \\
\gamma &\equiv K'(p_2) \cdot T(p_2) \!\! \mod p_2, \\
& \vdots \\
\gamma &\equiv K'(p_m) \cdot T(p_m) \!\! \mod p_m. 
\end{align*}

Because $p_1, p_2, \ldots, p_m$ are all prime, the Chinese Remainder Theorem guarantees a solution for $\gamma$ which is unique modulo $\prod_{i\in [m]} p_i$. 
By the Corollary~\ref{Cor: prime}, 
\[\prod_{i\in [m]} p_i =\frac{\displaystyle\prod_{\substack{p\leq 5n' \\ p \,\, prime}} p}{ \displaystyle\prod_{\substack{p\leq n' \\ p \,\, prime}} p}
\geq \frac{e^{5n'/2}}{e^{3n'/2}} = e^{n'} \geq e^{n}.\]
Since $\gamma$ is the number of satisfying truth assignments for $\Gamma$, and there are only $n$ literals which can realize one of two values, $\gamma\leq 2^n$. Since $\prod_{i\in [m]} p_i \geq e^n> 2^n\geq \gamma$, the Chinese Remainder Theorem gives the exact value of $\gamma$. 

In summary, for D3CNF $\Gamma$ with $n$ variables and $k$ clauses, we use the Sieve of Eratosthenes to identify the primes between $n'$ and $5n'$. This runs in $O(n^2)$ time. 
Then for each prime $p$ in this interval (which is at most $\max\{2n, 600\}$ primes), we create $50k + 2n +2$ binary strings of length $2n + t(p)$ where $t(p)$ is a polynomial in $n$ and $p$ with $p\in O(n)$. Finally, the Chinese Remainder Theorem will solve the system of congruences in $O(\log^2 (p_1p_2\ldots p_m))$ time \parencite{bach}. For us, this is $ O(n^2\log^2 n)$ because each prime is at most $5n$ and $m\leq 2n$.

Therefore, if we had algorithm to determine the value of $Z(B,x!)$ which ran in time polynomial in the size of $B$ and the length of the strings in $B$, then we have created here a polynomial time algorithm to determine the number of satisfying truth assignments for a D3CNF, a problem which is known to be \#P-complete. This finishes the proof.
\end{proof}

\section{Stochastic Approximations for $Z(B,x!)$}
\label{sec:factorial approx}
In the previous section, we proved calculating $Z(B,x!)$ is a \#P-complete problem. The natural next question is whether or not this value can be approximated. Viewing this problem as counting most parsimonious scenarios for the star phylogenetic tree with leaves labeled by the strings in $B$, we are also interested in a near uniform sampler of these labelings. 

\begin{defn}
A counting problem \#A in \#P has an FPAUS (fully polynomial almost uniform sampler) if there is a randomized algorithm such that, for any instance of \#A and any $\epsilon > 0$, the algorithm outputs an element $x\in X$, the solution space for \#A, with probability $p(x)$ where
\[\frac{1}{2} \sum_{x\in X} |p(x) - U(x)| \leq \epsilon\]
where $U$ is the uniform distribution on X and the algorithm runs in time polynomial in the size of the instance of \#A and $-\log \epsilon$. 
\label{defn: FPAUS}
\end{defn}

\begin{defn} A counting problem \#A in \#P has an FPRAS (fully polynomial randomized approximation scheme) if there is a randomized algorithm such that, for any instance of \#A and any $\epsilon, \delta >0$, the algorithm outputs an approximation $\hat{f}$ for the true answer $f$ of the counting problem satisfying the following inequality
\begin{equation}
P\left(\frac{f}{1+\epsilon} \le \hat{f} \le f(1+\epsilon)\right) \ge 1- \delta
\end{equation}
Furthermore, the algorithm runs in time polynomial in the size of the instance \#A, $\epsilon^{-1}$,  and $-\log(\delta)$.
\label{defn:FPRAS}
\end{defn}

The modulo prime number computation technique which was used to prove that calculating $Z(B,x!)$ is in \#P-complete has 
been used to show that other problems are \#P-complete. For example, \textcite{brightwell} used this technique to prove that counting the number of linear extensions of a partially order set is \#P-complete. 
For this same problem, \textcite{karzanov} found a rapidly mixing Markov chain to sample the linear extensions. Since counting the linear extensions of a partially ordered set is a self-reducible counting problem, this means that it also has an FPRAS \parencite{jvv86}.
This may suggest that our problem of counting most parismonious scenarios also has an FPAUS and FPRAS. 
However, here we give a straightforward Markov chain to sample the most parsimonious scenarios that turns out to be torpidly mixing, suggesting that our problem may not have an FPAUS. 
With evidence for both the positive answer and the negative answer, the question of whether or not there is an FPAUS and FPRAS for $Z(B,x!)$ remains open.

Recall that a median $\mu$ for $B=\{\nu_i\}_{i=1}^m$ minimizes $\sum_{i\in [m]} H(\nu_i, \mu)$. Therefore, in the $k^{th}$ bit, the value of $\mu$ must agree with a majority of the strings in $B$. If exactly half of the strings in $B$ have a 1 in the $k^{th}$ bit, then $\mu$ may take either a 0 or a 1 in the $k^{th}$ bit. We call such a bit an \emph{ambiguous bit}. Therefore a median for $B$ is determine by the value it takes in the ambiguous bits. 
As a result, if $B$ has an odd number of strings, then there is exactly one median. Here we assume that the size of $B$ is even. 

Define a primer Markov chain, $P$, to transition between the medians. 
As mentioned, it suffices to define our Markov Chain on the state space of all possible values that a median could take on the ambiguous bits. From any median, make a transition with the following probabilities: 
\begin{itemize}
\item With probability 1/2, remain in the current state.
\item With probability 1/2, randomly and uniformly select an ambiguous bit and change its value. 
\end{itemize}
Because we remain at the current state with  probability $\frac{1}{2}$, by definition this is a lazy Markov chain.

\begin{obs}
The primer Markov chain $P$ is irreducible and aperiodic. 
\end{obs}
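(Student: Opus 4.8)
The plan is to verify the two properties directly from the transition rule, since the state space is simply the set of all $\{0,1\}$-assignments to the ambiguous bits. Write $\ell$ for the number of ambiguous bits of $B$, so the state space has size $2^\ell$, and note $\ell \geq 1$ (otherwise there would be a unique median and nothing to prove, or the chain would be trivially irreducible and aperiodic on a one-point space).

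For aperiodicity, I would observe that the transition probability from any state to itself is at least $\tfrac12$: with probability $\tfrac12$ the chain stays put by construction, and in fact it may also return to the current state after two steps by flipping the same ambiguous bit twice. Since the self-loop probability is positive at every state, every state has period $1$, so $P$ is aperiodic. This is immediate from the chain being lazy.

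For irreducibility, I would take two arbitrary states $\mu$ and $\mu'$, which differ in some set of ambiguous bits $b_1, b_2, \ldots, b_r$. Starting from $\mu$, the event ``select ambiguous bit $b_j$ and flip it'' has probability $\tfrac{1}{2\ell} > 0$ at each step, so after $r$ steps the chain reaches $\mu'$ with probability at least $\left(\tfrac{1}{2\ell}\right)^r > 0$. Hence every state is reachable from every other state, and $P$ is irreducible. There is no real obstacle here; the only thing to be mildly careful about is that the state space is exactly $\{0,1\}^\ell$ (identified with the medians via their values on the ambiguous bits), which was already established in the discussion preceding the observation, so flipping an ambiguous bit always yields another valid state.
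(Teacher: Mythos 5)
Your proof is correct and matches the standard argument the paper relies on (it states this observation without a written proof, having just noted that the chain is lazy): the self-loop probability of $\tfrac12$ at every state gives aperiodicity, and flipping, one at a time, the ambiguous bits in which two medians differ gives a positive-probability path between any two states, hence irreducibility. Nothing further is needed.
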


For a fixed multiset of strings $B = \{\nu_i\}_{i=1}^m$ and median $\mu\in \mathcal{M}(B)$, define 
\[f(\mu):=\prod_{i=1}^m H(\mu,\nu_i).\] 
Now we employ the Metropolis-Hastings algorithm \parencite{Metropolis} to obtain a secondary Markov chain $C$ with a desired limit distribution as follows. The states remain the same, but the transition probabilities are changed in the following way. From state $\mu$, we propose a next state $\mu'$ which differs from $\mu$ in at most one bit. If $\mu'$ is different from $\mu$, accept this transition with probability \[\min\left\{1,\frac{f(\mu')}{f(\mu)}\right\}.\]
In other words, if $\mu'$ was reached from $\mu$ with probability $P(\mu'|\mu)$, then in the secondary Markov chain $C$ the transition from $\mu$ to $\mu'$ will be made with probability
\[C(\mu'|\mu) = P(\mu'|\mu)\cdot\min\left\{1,\frac{f(\mu')}{f(\mu)}\right\}. \]

For a given collection of strings, the function $f$ defines a probability distribution $\theta$ on the medians where $\theta(\mu)$ is directly proportional to $f(\mu)$. In other words, 
\begin{eqnarray}\theta(\mu) \propto f(\mu)\label{theta}\end{eqnarray} 
or $\theta(\mu) = k f(\mu)$ for some constant $k$ and any median $\mu$. 

\begin{obs}
Markov chain $C$ is reversible and converges to the limit distribution $\theta$. 
\end{obs}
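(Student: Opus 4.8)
The plan is to run the standard Metropolis--Hastings correctness argument: verify that $C$ satisfies detailed balance with respect to $\theta$ (which gives reversibility and simultaneously makes $\theta$ stationary), and then upgrade stationarity to convergence by showing $C$ is irreducible and aperiodic, inheriting both from the primer chain $P$.

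First I would record that the proposal chain $P$ is symmetric on off-diagonal transitions. Two distinct medians $\mu\ne\mu'$ can have $P(\mu'\mid\mu)>0$ only if they differ in exactly one ambiguous bit, in which case $P(\mu'\mid\mu)=\tfrac12\cdot\tfrac1a=P(\mu\mid\mu')$, where $a$ is the number of ambiguous bits; otherwise both one-step probabilities vanish. Using $\theta(\mu)=k\,f(\mu)$, this yields, for $\mu\ne\mu'$,
\[
\theta(\mu)\,C(\mu'\mid\mu)=k\,f(\mu)\,P(\mu'\mid\mu)\min\!\left\{1,\frac{f(\mu')}{f(\mu)}\right\}=k\,P(\mu'\mid\mu)\,\min\{f(\mu),f(\mu')\},
\]
which is symmetric in $\mu$ and $\mu'$ (the case $\mu=\mu'$ is trivial, and if $f$ vanishes at $\mu$ or $\mu'$ both sides are $0$). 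This is precisely the detailed balance equation $\theta(\mu)C(\mu'\mid\mu)=\theta(\mu')C(\mu\mid\mu')$, so $C$ is reversible with respect to $\theta$. Summing detailed balance over $\mu$ and using $\sum_\mu C(\mu\mid\mu')=1$ shows that $\theta$ is stationary for $C$.

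Next I would establish ergodicity. The chain $C$ acts on the same finite state space as $P$ (all assignments to the ambiguous bits), and every transition with positive probability in $P$ retains positive probability in $C$, since the acceptance factor $\min\{1,f(\mu')/f(\mu)\}$ is strictly positive (this uses $f(\mu)>0$ for every median $\mu$, i.e.\ no leaf string equals a median; in the degenerate case one simply restricts to the support of $\theta$). Hence $C$ inherits irreducibility from $P$. It also inherits laziness: $C$ keeps probability at least $\tfrac12$ of staying at the current state, with still more holding probability from rejected proposals, so $C$ is aperiodic. By the convergence theorem for finite irreducible aperiodic Markov chains, $C$ has a unique stationary distribution to which it converges in total variation from any start, and since $\theta$ is stationary this limit is $\theta$.

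There is no genuine obstacle here; the argument is the textbook Metropolis--Hastings proof. The only points needing a moment of care are the symmetry of the proposal chain $P$ (which holds because an ambiguous bit is chosen uniformly and simply flipped, so the proposal density is symmetric) and the non-vanishing of $f$ on the medians, which is what keeps the acceptance probabilities from disconnecting the chain; both are easily dispatched.
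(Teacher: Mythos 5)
Your argument is correct and is exactly the standard Metropolis--Hastings justification that the paper itself relies on: it states this Observation without proof, implicitly invoking detailed balance for the symmetric bit-flip proposal plus irreducibility and aperiodicity of the lazy primer chain, which is precisely what you wrote out. The only caveat you flag (positivity of $f$ on medians) is the right one to worry about, and in the paper's intended setting the weights are the factorials $\prod_i H(\mu,\nu_i)!$, which never vanish, so the issue does not arise.
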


Therefore, we have a Markov chain on the state space of medians which, in the limit, will sample each median $\mu$ with distribution proportional to $\prod_{i=1}^m H(\mu,\nu_i)$. Once we have a median, it is easy to uniformly sample from the scenarios that it admits. 

Now we will show that the Markov chain $C$ is torpidly mixing (not rapidly mixing). To prove this result, we will need the following definitions.

For any nonempty subset $S$ of the set of medians $\mathcal{M}(B)$, the \emph{capacity of $S$} is 
\[\theta(S):= \sum_{\mu\in S} \theta(\mu)\]
 and the \emph{ergodic flow out of $S$} is 
 \[F(S):= \sum_{\substack{\mu\in S \\ \nu\in \mathcal{M}(B)\setminus S}} \theta(\mu) C(\mu|\nu).\]
 The \emph{conductance} is 
 \[\Phi:=\min\left\{ \frac{F(S)}{\theta(S)} : S\subseteq M, 0< \theta(S)\leq \frac{1}{2} \right\}.\]

\begin{thm}[\cite{bremaud}]
A Markov chain is rapidly mixing if and only if $\Phi \geq \frac{1}{p(n)}$ for some polynomial $p(n)$ which is not identically zero.
\label{rapid_mixing}
\end{thm}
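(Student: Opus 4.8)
The plan is to prove the two directions of the equivalence separately with the classical spectral and geometric tools for mixing times. Throughout I write $\pi$ for the stationary distribution, $N$ for the number of states, $\pi_{\min}=\min_x\pi(x)$, and $\tau_x(\epsilon)$ for the $\epsilon$-mixing time from state $x$, so that ``rapidly mixing'' means $\max_x\tau_x(\epsilon)$ is bounded by a polynomial in $n$ and $\log(1/\epsilon)$. I will use that the chains to which the criterion is applied (in particular the chain $C$) are finite, irreducible, aperiodic, reversible, and lazy, so that the transition matrix has real eigenvalues $1=\lambda_0>\lambda_1\ge\cdots\ge 0$ with spectral gap $1-\lambda_1>0$.

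For the direction ``$\Phi\ge 1/p(n)$ $\Rightarrow$ rapidly mixing'', I would invoke the hard half of the Cheeger inequality for reversible chains, $1-\lambda_1\ge \Phi^2/2\ge 1/(2p(n)^2)$, together with the standard eigenvalue bound on mixing time:
\[
\tau_x(\epsilon)\;\le\;\frac{1}{1-\lambda_1}\Bigl(\log\tfrac{1}{\pi(x)}+\log\tfrac1\epsilon\Bigr)\;\le\;2p(n)^2\Bigl(\log\tfrac{1}{\pi_{\min}}+\log\tfrac1\epsilon\Bigr).
\]
It then remains only to note that $\log(1/\pi_{\min})$ is polynomially bounded in $n$, which holds under the mild structural hypothesis that $N\le 2^{\mathrm{poly}(n)}$ and the stationary weights are not superpolynomially skewed --- in the application at hand the state space sits inside $\{01,10\}^n$ and $\theta(\mu)\propto\prod_i H(\mu,\nu_i)$ with each $H(\mu,\nu_i)$ a positive integer of polynomial size, so $\log(1/\theta_{\min})=\mathrm{poly}(n)$. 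Hence $\max_x\tau_x(\epsilon)$ is polynomial and the chain is rapidly mixing.

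For the converse I would argue by contraposition: assuming $\Phi$ is smaller than the reciprocal of every polynomial in $n$, deduce that the chain is not rapidly mixing. Pick a set $S$ attaining the minimum in the definition of $\Phi$, so $0<\theta(S)\le 1/2$ and $F(S)=\Phi\,\theta(S)$, and start the chain from the conditional law $\nu:=\pi(\cdot\mid S)$. Since $\nu$ is pointwise dominated by $\pi/\theta(S)$ and this domination is preserved by the dynamics, the mass that escapes $S$ per step is at most $F(S)/\theta(S)=\Phi$, whence $P^t(\nu,S)\ge 1-t\Phi$. Choosing $t=\lfloor 1/(4\Phi)\rfloor$ gives $P^t(\nu,S)\ge 3/4$ while $\pi(S)\le 1/2$, so $\|P^t(\nu,\cdot)-\pi\|_{\mathrm{TV}}\ge 1/4$, and by convexity of total variation distance some state $x\in S$ has $\|P^t(x,\cdot)-\pi\|_{\mathrm{TV}}\ge 1/4$, i.e. $\tau_x(1/4)>\lfloor 1/(4\Phi)\rfloor$. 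As $1/\Phi$ exceeds every polynomial in $n$, so does $\max_x\tau_x(1/4)$, and the chain fails to be rapidly mixing.

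The substantive part of the argument is the forward direction, and the hard point there is not the Cheeger inequality itself (standard, using reversibility and laziness) but the passage from ``polynomial spectral gap'' to ``polynomial mixing time'': this forces one to control $\log(1/\pi_{\min})$, i.e. to know that the state space is not too large and the stationary distribution not too lopsided, which is exactly the place where a hypothesis on how the chain is presented must be used. The converse is comparatively soft --- a one-step bottleneck estimate together with the pointwise-domination observation --- and uses only the definitions of $\theta$, $F$, and $\Phi$ already in hand, which is precisely why this criterion is the right tool for proving torpid mixing of $C$ in the sequel.
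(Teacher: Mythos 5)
This statement is quoted in the paper as a known result (cited to Br\'emaud); the paper gives no proof of it, so there is nothing internal to compare your argument against --- it is used there purely as a black box, and in fact only in the ``small conductance implies torpid mixing'' direction, to show the chain $C$ mixes torpidly. Your proposal is a correct reconstruction of the standard proof: the forward direction via the Cheeger inequality $1-\lambda_1 \geq \Phi^2/2$ for reversible lazy chains plus the spectral bound $\tau_x(\epsilon) \leq \frac{1}{1-\lambda_1}\bigl(\log\frac{1}{\pi(x)}+\log\frac{1}{\epsilon}\bigr)$, and the converse via the bottleneck/escape argument starting from $\pi(\cdot\mid S)$, with the pointwise domination $\nu P^t \leq \pi/\theta(S)$ giving escape at most $\Phi$ per step and hence $\tau_x(1/4) \gtrsim 1/(4\Phi)$. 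You are also right to flag that the ``if'' direction is not literally true as stated without side conditions: it needs reversibility, laziness (or control of the smallest eigenvalue), and $\log(1/\pi_{\min})$ polynomial in $n$ --- hypotheses the paper leaves implicit but which do hold for the Metropolis--Hastings chain $C$ (lazy by construction, reversible, state space of size at most $2^n$, weights $\theta(\mu)$ with polynomially bounded $\log(1/\theta_{\min})$). Two cosmetic slips: in the torpid-mixing example of Section 5 the state space is $\{0,1\}^n$ (all medians), not $\{01,10\}^n$ --- the latter set belongs to the hardness constructions --- and the paper's ergodic flow is written with $C(\mu|\nu)$ where the standard definition uses the transition out of $S$, a discrepancy that is harmless by reversibility; neither affects your argument. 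Since only your ``soft'' converse direction is actually invoked in the paper, and that direction needs no extra hypotheses, your proof more than suffices for the use made of the theorem.
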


Consider the following instance of $Z(B,x!)$. 
Define $\nu$ and $\overline{\nu}$ to be the strings in $\{0,1\}^n$ where $\nu$ is the string of all 0s and $\overline{\nu}$ is the string of all 1s. Let $B=\{\nu_i\}_{i=1}^{2t}$ be the multiset containing $t$ copies of $\nu$ and $t$ copies of $\overline{\nu}$. The set of medians $\mathcal{M}(B)$ is equal to $\{0,1\}^n$. Further, if $\mu$ has exactly $k$ ones, then 
\[\prod_{i=1}^{2t} H(\mu,\nu_i)!= \left(k!(n-k)!\right)^t.\] 
Consequently \[Z(B,x!) = \sum_{\mu\in \mathcal{M}(B)} \prod_{i=1}^{2t} H(\mu,\nu_i)  = \sum_{k=0}^n \binom{n}{k} \left(k! (n-k)!\right)^t:=T.\] 
Therefore $\theta(\mu) = \frac{1}{T}\left(k!(n-k)!\right)^t.$

Suppose $n$ is odd. Consider the subset $S$ which contains all medians with at most $\left\lfloor \frac{n}{2} \right\rfloor$ ones. For this subset,  the capacity is $\theta(S) =  \frac{1}{2}.$

Let $S'$ be the set of medians $\mu$ in $S$ with exactly $\lfloor \frac{n}{2} \rfloor$ ones. Let $\hat{S}$ be the set of medians $\nu$ in $\mathcal{M} \setminus S$ with exactly $\lceil \frac{n}{2} \rceil$ ones. Then $|S'|=\binom{n}{\left\lfloor \frac{n}{2} \right\rfloor} = \binom{n}{\left\lceil \frac{n}{2} \right\rceil} = |\hat{S}|$. 
For each $\mu\in S \setminus S'$ and $\nu\in \mathcal{M}\setminus S$, $C(\mu|\nu)= 0$.
Further, for each $\mu\in S'$, there are only $\lceil \frac{n}{2} \rceil$ medians $\nu$ in $\mathcal{M} \setminus S$ such that $C(\mu|\nu) \neq 0$ and for these medians $\nu\in \hat{S}$ and $C(\mu|\nu)=\frac{1}{2}\cdot\frac{1}{n}$. 

For the ergodic flow out of $S$, we have
\begin{align*}
F(S) = & \sum_{\substack{\mu\in S \\ \nu\in \mathcal{M}\setminus S}} \theta(\mu) C(\mu|\nu) \\
	   = &	\sum_{\substack{\mu\in S' \\ \nu\in \mathcal{M}\setminus S}} \frac{1}{T} \left(\left\lfloor \frac{n}{2} \right\rfloor! \left\lceil \frac{n}{2} \right\rceil ! \right)^t  C(\mu|\nu)\\
      = &	\sum_{\substack{{\mu\in S'}}} \frac{1}{T} \left(\left\lfloor \frac{n}{2} \right\rfloor! \left\lceil \frac{n}{2} \right\rceil ! \right)^t  \frac{1}{2}\frac{1}{n} \left\lceil \frac{n}{2} \right\rceil \\
            = &	\binom{n}{\left\lfloor \frac{n}{2} \right\rfloor} \frac{1}{T} \left(\left\lfloor \frac{n}{2} \right\rfloor! \left\lceil \frac{n}{2} \right\rceil ! \right)^t  \frac{1}{2}\frac{1}{n} \left\lceil \frac{n}{2} \right\rceil \\
             = &	 \frac{1}{2n}  \left\lceil \frac{n}{2} \right\rceil \frac{n!}{T}
 \left(\left\lfloor \frac{n}{2} \right\rfloor! \left\lceil \frac{n}{2} \right\rceil ! \right)^{t-1} \\
   = &	 \frac{1}{2n}  \frac{n+1}{2}  \frac{n!}{  n! \sum_{k=0}^n \left(k! (n-k)!\right)^{t-1}  }    \left(\left\lfloor \frac{n}{2} \right\rfloor! \left\lceil \frac{n}{2} \right\rceil ! \right)^{t-1} \\
  \leq&  \frac{1}{2}\frac{1}{   \left(0! n!\right)^{t-1}  }  \left(\left\lfloor \frac{n}{2} \right\rfloor! \left\lceil \frac{n}{2} \right\rceil ! \right)^{t-1} \\
     \leq & \frac{1}{2}\frac{1}{  \binom{n}{\left\lfloor \frac{n}{2} \right\rfloor} ^{t-1}  } .
\end{align*}
	   
This implies
\begin{align*}
\Phi \leq & \frac{F(S)}{\pi(S)} 
\leq \frac{1}{  \binom{n}{\left\lfloor \frac{n}{2} \right\rfloor} ^{t-1}  } 
	 \leq  \left(\frac{n+1}{2^n}\right)^{t-1} \leq \left(\frac{2^{n/2}}{2^n}\right)^{t-1} = \frac{1}{2^{n(t-1)/2}}.
\end{align*}

Therefore, if $t> 1$, then as $n$ grows, we see that $\Phi$ cannot be lower-bounded by a function of the form $\frac{1}{p(n)}$ where $p$ is a polynomial in $n$. Therefore the Markov chain $C$ is torpidly mixing by Theorem~\ref{rapid_mixing}.

\section{Complexity of computing $Z(B,f(x))$}
\label{sec:general f(x)}
In this section, we consider the generalized $Z(B,f(x))$. 
First, fix a continuous function $f: \mathbb{R} \rightarrow \mathbb{R}$. Then define the following problem: 
\begin{defn}
Given an arbitrary $m\in \mathbb{Z}^{+}$, let $B=\{\nu_i\}_{i=1}^m$ be an arbitrary multiset of binary strings.  Determine the value of 
\[\sum_{\mu\in\mathcal{M}(S)} \prod_{i\in[m]} f(H(\nu_i, \mu)).\]
\end{defn}

In the previous section, we showed that computing $Z(B,x!)$ is \#P-complete. Here we work toward determining the computational complexity of $Z(B,f(x))$ for various functions $f(x)$. First, we formalize a definition and develop a couple of tools.

\begin{defn}
A function $g:\mathbb{R} \rightarrow \mathbb{R}$ is \textit{strictly concave up} if for any $x,y,z\in \mathbb{R}$, $x<y<z$, 
	\[ \frac{g(z)-g(x)}{z-x} > g(y).\]
\label{concave up}
\end{defn}

\begin{lemma}
If $\log f(x)$ is a strictly concave up function, then for any $x<y$ and $a>0$,
\[\frac{ f(x)f(y) }{ f(x-a)f(y+a) }< 1.\]
\label{concavity2}
\end{lemma}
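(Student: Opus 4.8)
The plan is to take logarithms and reduce the claim to an additive inequality for the strictly concave up function $g := \log f$. Since $\log f$ is assumed to be defined and strictly concave up, $f$ is in particular positive, so $\frac{f(x)f(y)}{f(x-a)f(y+a)} < 1$ is equivalent to $f(x)f(y) < f(x-a)f(y+a)$, which after taking logarithms becomes
\[ g(x) + g(y) < g(x-a) + g(y+a). \]
So it suffices to prove this inequality.

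Next I would note that, since $a > 0$ and $x < y$, the four numbers satisfy $x - a < x < y < y + a$, so both $x$ and $y$ lie strictly inside the interval $[x-a,\, y+a]$. I would write each of them as a convex combination of the two endpoints,
\[ x = (1-s_1)(x-a) + s_1(y+a), \qquad y = (1-s_2)(x-a) + s_2(y+a), \]
where $s_1 = \frac{a}{y-x+2a}$ and $s_2 = \frac{y-x+a}{y-x+2a}$, both lying in $(0,1)$. The one computation that matters is the identity $s_1 + s_2 = 1$, which holds because the two numerators sum to the common denominator $y-x+2a$.

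Then I would apply the strict concave-up property of $g$ (Definition~\ref{concave up}), in its chord form, at the two interior points $x$ and $y$ of $[x-a,\,y+a]$:
\[ g(x) < (1-s_1)g(x-a) + s_1 g(y+a), \qquad g(y) < (1-s_2)g(x-a) + s_2 g(y+a). \]
Adding these two strict inequalities, the coefficient of $g(x-a)$ is $(1-s_1)+(1-s_2) = 2-(s_1+s_2) = 1$ and the coefficient of $g(y+a)$ is $s_1+s_2 = 1$, so $g(x)+g(y) < g(x-a)+g(y+a)$, which is exactly what we needed; exponentiating recovers the stated inequality. The only real obstacle is bookkeeping: one must pass from the secant-slope phrasing of "strictly concave up" in Definition~\ref{concave up} to the two-point chord inequality $g(ts+(1-t)u) < t\,g(s) + (1-t)\,g(u)$ used above. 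This is the standard equivalence between the secant-slope and the chord (Jensen) formulations of strict convexity and is handled by a short direct manipulation, which I would include as a preliminary step.
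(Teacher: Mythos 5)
Your proposal is correct, but it follows a genuinely different route from the paper. The paper's proof applies the mean value theorem to $g=\log f$ on the two intervals $[x-a,x]$ and $[y,y+a]$, producing points $c<d$ with $g'(c)=\frac{1}{a}\bigl(g(x)-g(x-a)\bigr)$ and $g'(d)=\frac{1}{a}\bigl(g(y+a)-g(y)\bigr)$, and then uses that $g'$ is strictly increasing to conclude $g(x)-g(x-a)<g(y+a)-g(y)$. You instead stay at the level of the chord (Jensen) inequality: writing $x$ and $y$ as convex combinations of the endpoints $x-a$ and $y+a$ with weights $s_1,s_2\in(0,1)$ satisfying $s_1+s_2=1$, applying strict convexity twice, and adding gives $g(x)+g(y)<g(x-a)+g(y+a)$ directly. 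Your weight computation checks out, and the argument is valid. What your approach buys is that it requires no differentiability of $\log f$ (the paper's proof implicitly assumes $(\log f)'$ exists and is strictly increasing, and in fact cites the intermediate value theorem where it means the mean value theorem), so it is both more elementary and slightly more general; what the paper's approach buys is brevity once differentiability is granted. One shared bookkeeping point: the paper's Definition~\ref{concave up} as printed (secant slope exceeding the value $g(y)$) appears to be a typo for a standard strict-convexity condition, and both proofs must translate it into a usable form -- you flag this translation explicitly and it is indeed the standard one-line rearrangement, so this is not a gap.
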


\begin{proof}
By the intermediate value theorem, there are real values $c,d$ with $c\in (x-a,x)$ and $d\in (y,y+a)$ such that 
\begin{align*}
(\log f)' (c) &= \frac{1}{a}(\log f(x)-\log f(x-a)) \text{ and}\\
(\log f)' (d) &= \frac{1}{a}(\log f(y+a)-\log f(y)).
\end{align*}
 Because $(\log f)'(x)$ is strictly increasing, $g'(c) < g'(d)$. Therefore,
	\begin{align*}
 \frac{1}{a}(\log f(x)-\log f(x-a)) & < \frac{1}{a}(\log f(y+a)-\log f(y)) \\
  \log f(x)-\log f(x-a) & < \log f(y+a)-\log f(y) \\
   \log \frac{f(x)}{f(x-a)} & < \log \frac{f(y+a)}{f(y)} \\
   \frac{f(x)}{f(x-a)} & < \frac{f(y+a)}{f(y)} \\
   \frac{f(x)f(y)}{f(x-a)f(y+a)} & < 1 .
   \end{align*}
\end{proof}

\begin{fact} Fix $k\in \mathbb{Z}^{+} \cup \{0\}$. Let $f(x)$ be a function such that $\log f(x)$ is strictly concave up. Then \[\min_{\substack{\alpha,\beta \in \mathbb{Z}^{+} \cup \{0\} \\ a+b=k}}f(a)f(b) = f \left(\left\lfloor \frac{k}{2}\right \rfloor\right) f\left(\left\lceil \frac{k}{2} \right\rceil\right).\]
\label{fact: min product}
\end{fact}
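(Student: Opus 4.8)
The plan is to reduce everything to Lemma~\ref{concavity2}, which says precisely that separating two arguments of $f$ by equal amounts while preserving their order strictly increases the product of their $f$-values. Since there are only finitely many decompositions $k=a+b$ with $a,b\in\mathbb{Z}^{+}\cup\{0\}$, the minimum is attained, and the pair $(\lfloor k/2\rfloor,\lceil k/2\rceil)$ is itself one of them, so it suffices to prove $f(a)f(b)\ge f(\lfloor k/2\rfloor)f(\lceil k/2\rceil)$ for every such decomposition. By symmetry of the product I would assume $a\le b$ and induct on the gap $b-a$. When $b-a\le 1$ the ordering forces $\{a,b\}=\{\lfloor k/2\rfloor,\lceil k/2\rceil\}$ and there is nothing to prove; this is the base case.

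For the inductive step, suppose $b-a\ge 2$, so that $(a+1,b-1)$ is again a decomposition of $k$, now with strictly smaller gap. If $a+1<b-1$, then Lemma~\ref{concavity2} applied with $x=a+1$, $y=b-1$ and shift $1$ gives $f(a+1)f(b-1)<f(a)f(b)$, and the induction hypothesis bounds $f(a+1)f(b-1)$ below by $f(\lfloor k/2\rfloor)f(\lceil k/2\rceil)$, closing the step. The delicate point — and the only real obstacle — is the boundary subcase $b-a=2$, i.e.\ $a+1=b-1$, where Lemma~\ref{concavity2} cannot be invoked directly since it requires the strict inequality $x<y$. Here one only needs the (non-strict) estimate $f(a+1)^2\le f(a)f(a+2)$, which follows from $\log f$ being strictly concave up; alternatively, to use nothing beyond Lemma~\ref{concavity2}, for each $\epsilon\in(0,1)$ the lemma with $x=a+1-\epsilon$, $y=a+1+\epsilon$ and shift $1-\epsilon$ yields $f(a+1-\epsilon)f(a+1+\epsilon)<f(a)f(a+2)$, and letting $\epsilon\to 0^{+}$ and using continuity of $f$ gives the claim.

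I expect no genuine difficulty beyond this parity bookkeeping: when $k$ is odd one could even skip the induction, since writing $\alpha=\lfloor k/2\rfloor-a\ge 0$ makes $(a,b)=(\lfloor k/2\rfloor-\alpha,\lceil k/2\rceil+\alpha)$, and Lemma~\ref{concavity2} with $x=\lfloor k/2\rfloor<y=\lceil k/2\rceil$ and shift $\alpha$ gives the inequality in one line; only the even case, where the two central arguments coincide, forces the midpoint step above.
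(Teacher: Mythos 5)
Your proof is correct, and it leans on the same key tool as the paper, namely Lemma~\ref{concavity2}; the difference is in how the lemma is deployed. The paper's proof is a one-liner: it sets $x=\lfloor k/2\rfloor$, $y=\lceil k/2\rceil$ and applies the lemma once with shift $a$ to compare the central pair directly against any other decomposition $(x-a,\,y+a)$, whereas you induct on the gap $b-a$ using unit shifts. Your route is longer but buys two things the paper's terse argument skips over. First, when $k$ is even the direct application has $x=y$, which is not literally covered by the lemma's hypothesis $x<y$; you isolate exactly this midpoint case ($a+1=b-1$) and close it either by the log-convexity inequality $f(a+1)^2\le f(a)f(a+2)$ or by your $\epsilon$-perturbation of the lemma, both of which are legitimate under the paper's standing assumptions (the lemma's proof already presumes $\log f$ differentiable on the reals, and in fact goes through verbatim for $x=y$). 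Second, note that the paper's proof as written states the inequality in the reversed direction ($f(x-a)f(y+a)<f(x)f(y)$), which contradicts both the lemma and the fact being proved and is evidently a typo; your argument gets the direction right. One could shorten your write-up by imitating the paper — a single application of the lemma with shift $a>0$ for odd $k$, plus your midpoint observation for even $k$ — but as it stands your induction is complete and sound.
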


\begin{proof}
Let $x= \left\lfloor \frac{k}{2}\right \rfloor $ and $y= \left\lceil \frac{k}{2} \right\rceil$. By Lemma~\ref{concavity2}, $f(x-a)f(y+a) < f(x)f(y)$ which gives the desired result. 
\end{proof}

\begin{thm}
 Fix a function $f(x): \mathbb{Z}^{+} \cup \{0\} \rightarrow [0,\infty)$  which satisfies the following properties:
\begin{itemize}
\item $\log f(x)$ is strictly concave up,
\item the function values of $f$ can be computed in polynomial time, and 
\item  for all but finitely many $n\in \mathbb{Z}$, $n\geq 2$, 
\[ \frac{f(n-2)[f(n+1)]^3[f(n+2)]^3 f(n+5)}{f(n-1)[f(n)]^3[f(n+3)]^3f(n+4)} > 1.\]
\end{itemize}
For arbitrary $m,s\in \mathbb{Z}^+$ and $D\in \mathbb{R}$, let $S:=\{\nu_1, \nu_2, \ldots, \nu_m\}$ be a multiset of binary strings, each of length $s$.
Then it is \#P-complete to determine how many medians $\mu$ for $S$ have 
\begin{eqnarray}\prod_{i\in[m]} f\left(H(\nu_i,\mu)\right) \leq D . \label{min_hard}\end{eqnarray}
\label{NP concave up}
\end{thm}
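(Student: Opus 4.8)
The plan is to reduce from \#D3SAT by mimicking the construction in the proof of Theorem~\ref{thm: number SCJ scenarios}, but replacing the modulo-$p$ bookkeeping with a threshold argument that exploits the strict concavity of $\log f$. Given a D3CNF $\Gamma$ with $n$ variables and $k$ clauses, I would build a multiset $S$ of binary strings on coordinates $(x_1,y_1,\dots,x_n,y_n,e_1,\dots,e_t)$ out of three families: a pair $\mathcal{A}$ of complementary strings that forces Property~1 (i.e.\ $\sum_i(\mu[x_i]+\mu[y_i])=n$), pairs $\mathcal{B}_j$ that force Property~2 ($\mu[x_j]\neq\mu[y_j]$), and the clause gadgets $\mathcal{C}_i$ from Table~\ref{table: n+5} that detect Property~3 (satisfaction of $\Gamma$). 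All strings come in complementary pairs, so by Fact~\ref{fact: comp ambig} the medians are exactly $\{0,1\}^{2n}\times\{0\}^t$, and the number of additional ones $t$ is chosen so that Fact~\ref{fact: ones} applies and each $e_\ell$ is forced to $0$ in every median.

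The heart of the argument is choosing the number of additional ones in each gadget so that, for a threshold $D$ I will specify, $\prod_i f(H(\nu_i,\mu))\le D$ holds precisely for the medians with Property~3, i.e.\ those corresponding to satisfying truth assignments. For the $\mathcal{A}$-pair and the $\mathcal{B}_j$-pairs, Fact~\ref{fact: Hcomp} says $H(\mu,\eta)+H(\mu,\overline\eta)$ is a constant $k_0$ independent of $\mu$, and Fact~\ref{fact: min product} says $f(a)f(k_0-a)$ is minimized exactly when $a=\lfloor k_0/2\rfloor$; by picking the shift so that the balanced split $a=\lfloor k_0/2\rfloor$ corresponds to $\mu$ having Property~1 (resp.\ Property~2), these factors are strictly smaller for medians with the property and strictly larger otherwise, with a quantitative gap coming from Lemma~\ref{concavity2}. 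For the clause gadgets, the relevant multisets of Hamming distances are \eqref{dist for sat} for a satisfied clause and \eqref{dist for nonsat} for an unsatisfied one; after adding the same number $q$ of additional ones to every string of $\mathcal{C}_i$ these become $\{(n'-1)^{(7)},\dots\}$ versus $\{(n'-2)^{(1)},\dots,(n'+5)^{(1)}\}$ for a common shifted center $n'$, and the third hypothesis of the theorem is exactly the statement that
\[
\frac{f(n'-2)\,[f(n'+1)]^3[f(n'+2)]^3\,f(n'+5)}{f(n'-1)\,[f(n')]^3[f(n'+3)]^3\,f(n'+4)}>1,
\]
i.e.\ that the product of $f$-values over an unsatisfied-clause column strictly exceeds that over a satisfied-clause column (the $f(n)^3$ versus $f(n'+3)^3$-type comparisons from the non-extreme entries again go the right way by Lemma~\ref{concavity2}). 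I would then set $D$ equal to the product over all gadgets of the ``satisfying'' per-gadget values, so that a median contributes $\le D$ iff it has Property~1, Property~2, and satisfies every clause — that is, iff it corresponds to a satisfying assignment of $\Gamma$ — and the count of such medians equals the number of satisfying assignments by Definition~\ref{defi: MGamma} and Remark~\ref{defi: bijection2}. Membership in \#P is routine: a witness is a median plus a certificate that it attains the minimum of $\sum_i H(\nu_i,\mu)$ and satisfies \eqref{min_hard}, all checkable in polynomial time since $f$ is polynomial-time computable.

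The main obstacle is the bookkeeping to make the thresholds line up exactly: one must choose the shift constant $q$ uniformly across all $k$ clause gadgets (and the offsets in $\mathcal{A}$ and the $\mathcal{B}_j$) large enough that every Hamming distance appearing in every gadget lands in the region where the three hypotheses on $f$ hold (recall the third hypothesis is only assumed ``for all but finitely many $n$''), while keeping $t$ polynomial in $n$ and $k$. Because the per-gadget inequalities are strict and there are only $k+n+1$ gadgets, the product inequality $\prod(\text{actual})\le D$ versus $>D$ is preserved — no quantitative amplification across gadgets is needed, only that each gadget independently pushes the product in the correct direction — so once the shift is fixed large enough the comparison is exact. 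I would also need to double-check the degenerate cases in the reduction from \#3SAT to \#D3SAT (Lemma~\ref{Lemma: Distinct}) so that clauses with repeated or complementary literals are preprocessed away, exactly as in Section~\ref{sec:complexity}; this is handled verbatim there and contributes only a polynomial blow-up.
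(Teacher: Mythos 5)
Your overall skeleton (reduction from \#D3SAT with an $\mathcal{A}$-gadget for Property~1, $\mathcal{B}_j$-gadgets for Property~2, the clause gadgets of Table~\ref{table: n+5} for Property~3, and a threshold $D$ equal to the value attained by medians corresponding to satisfying assignments) is the same as the paper's, and your use of the third hypothesis to compare the satisfied-clause and unsatisfied-clause columns within $\mathcal{M}'$ is correct. The genuine gap is your claim that ``no quantitative amplification across gadgets is needed, only that each gadget independently pushes the product in the correct direction.'' That is false for the clause gadgets acting on medians \emph{outside} $\mathcal{M}'$ (or outside Property~1): the column entries of Table~\ref{table: n+5} only describe distances to medians in $\mathcal{M}'$, and for other medians the only available bound is Fact~\ref{fact: min product}, which says each complementary pair of clause strings contributes at least the \emph{balanced} value $f(n+2)f(n+3)$ --- which, by strict concavity of $\log f$, is strictly \emph{smaller} than the per-pair contributions in the satisfying column. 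So a median violating Property~1 or~2 can gain a factor as large as roughly $\left(\gamma_{good}/\gamma_{min}\right)^{k}$ over the $k$ clause gadgets, which grows exponentially in $k$, while your single $\alpha$-pair or single $\beta_j$-pair charges it only a bounded penalty (e.g.\ $f(n-1)f(n+1)/[f(n)]^2$ if Property~1 fails by one). With that imbalance your chosen $D$ does not separate: medians not corresponding to satisfying assignments may well satisfy $\prod_i f(H(\nu_i,\mu))\le D$, so the count you read off is not the number of satisfying assignments.

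This is exactly why the paper's construction does not use one pair per gadget but $\Theta(k)$ copies at several offsets: $108k$ strings in $\mathcal{A}$ (multiplicities $k,8k,18k,18k,8k,k$ at shifts $0$ through $5$) and $28k$ strings in each $\mathcal{B}_i$ (multiplicities $k,6k,6k,k$ at shifts $1$ through $4$). These multiplicities are tuned so that in the comparisons $h_1/h_2$ and $h_0/h_2$ the penalty ratios $\beta_{bad}/\beta_{good}$ and $\alpha_{bad}/\alpha_{good}$ are weighed against the worst-case clause-gadget gain $\gamma_{min}^{k}/\gamma_{bad}^{k}$, and after exact cancellation the ratios reduce to expressions such as $\left[\frac{f(n+1)f(n+4)}{f(n+2)f(n+3)}\right]^{k}>1$, which is where Lemma~\ref{concavity2} is really used (Claims~\ref{alpha}--\ref{gamma} and the displayed computations following them). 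So amplification proportional to $k$ is essential to the argument, not optional bookkeeping; without it your threshold step fails, and simply enlarging the common shift $q$ does not help because the shift moves the satisfying and non-satisfying products by the same amount.
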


 \begin{proof}
Fix a function $f(x)$ with the properties listed in the theorem. 

If is straightforward to see that computing $Z(B,f(x))$ is in \#P. Fix an instance consisting of integers $m$ and $s$, real number $D$, and a multiset $B$ of binary strings of length $\ell$. Let $\mu$ be a binary string of the same length as each $\nu_i$. We can verify that $\mu$ is a median in time $O(m\ell)$. Each $H(\nu_i, \mu)$ can be computed in time $O(\ell)$. Because $H(\nu_i, \mu) \leq \ell$, we can compute $f(H(\nu_i, \mu))$ in time polynomial in the size of the input by the conditions on $f$. Finally, checking if the product is at most $D$ is also a polynomial time calculation. Therefore  computing $Z(B,f(x))$ is in \#P.

To prove \#P-hardness, we will provide a reduction from \#D3SAT. 
Fix $\Gamma$, a D3CNF with $n$ variables and $k$ clauses, set 
\begin{align*}
\kappa
 = &  
  [f(n)]^{9k} [f(n+1)]^{22k+2kn} [f(n+2)]^{48k + 12kn} \\
  & \cdot [f(n+3)]^{48k + 12kn} [f(n+4)]^{22k+2kn} [f(n+5)]^{9k} 
\end{align*}
The idea is to define a multiset, $\mathcal{D}$, of binary strings with the following properties:
\begin{itemize}
\item Each median $\mu$ which corresponds to a satisfying truth assignment for $\Gamma$ will have 
\[\prod_{\eta \in \mathcal{D}} f(H(\eta, \mu)) = \kappa\]
\item Each other median $\mu'$ will have \[\prod_{\eta \in \mathcal{D}} f(H(\eta, \mu')) > \kappa.\]
\end{itemize}

Create a total of $158k + 28kn$ strings of length $2n+260k + 35kn$ with coordinates
\[(x_1,y_1,x_2,y_2, \ldots, x_n,y_n, e_1, e_2, \ldots, e_t)\]
where $t=260k+35kn$. 
This multiset of binary strings will be defined as the union of three multisets: 
\[\mathcal{D} = \mathcal{A} \uplus \biguplus_{i\in [n]}\mathcal{B}_i \uplus \biguplus_{i\in[k]}\mathcal{C}'_i.\]
As in Definition~\ref{defi: ones}, we will define each string $\eta \in \mathcal{D}$ by explicitly giving the values of $\eta[x_i]$ and $\eta[y_i]$ for each $i\in [n]$ and telling the number of additional ones. 

The collection $\mathcal{A}$ contains $108k$ strings. For $a\in [t]$, let $\alpha^{(+a)}$ be the string with $\alpha[x_i]=\alpha[y_i]=1$ for all $1\leq i\leq n$ and $a$ additional ones. Define $\overline{\alpha}^{(+a)}$ to be the binary string which is complementary to $\alpha^{(+0)}$ on the first $2n$ coordinates and has $a$ additional ones. The multiset $\mathcal{A}$ will consist of the following strings: 
\begin{itemize}
\item k copies each of $\alpha^{(+0)}$ and $\overline{\alpha}^{(+0)}$,
\item 8k copies each of $\alpha^{(+1)}$ and $\overline{\alpha}^{(+1)}$,
\item 18k copies each of $\alpha^{(+2)}$ and $\overline{\alpha}^{(+2)}$,
\item 18k copies each of $\alpha^{(+3)}$ and $\overline{\alpha}^{(+3)}$,
\item 8k copies each of $\alpha^{(+4)}$ and $\overline{\alpha}^{(+4)}$,
\item k copies each of $\alpha^{(+5)}$ and $\overline{\alpha}^{(+5)}$.
\end{itemize}

The collection $\mathcal{B}=\biguplus_{i\in[n]} \mathcal{B}_i$ contains $28kn$ strings. For each $i\in [n]$, $a\in [t]$, let $\beta_i^{(+a)}$ be the string with $\beta_i[x_i]=\beta_i[y_i]=1$, $\beta_i[x_j]=\beta_i[y_j]=0$ for $j\neq i$, and with $a$ additional ones. Define the binary string $\overline{\beta_i}^{(+a)}$ to be complementary to $\beta_i^{(+0)}$ on the first $2n$ coordinates and have $a$ additional ones. The collection $\mathcal{B}_i$ consists of the following $28k$ strings:
\begin{itemize}
\item $k$ copies each of $\beta_i^{(+1)}$ and $\overline{\beta_i}^{(+1)}$,
\item $6k$ copies each of $\beta_i^{(+2)}$ and $\overline{\beta_i}^{(+2)}$,
\item $6k$ copies each of $\beta_i^{(+3)}$ and $\overline{\beta_i}^{(+3)}$,
\item $k$ copies each of $\beta_i^{(+4)}$ and $\overline{\beta_i}^{(+4)}$.
\end{itemize}

The collection $\mathcal{C}=\biguplus_{i\in [k]}\mathcal{C}'_i$ contains $50k$ strings. Each set $\mathcal{C}'_i$, which is associated with clause $c_i$, consists of 50 strings. In Section~\ref{section: define tables}, we defined set  $\mathcal{C}_i$ through Table~\ref{table: n+5}. For each $\nu_j^i\in \mathcal{C}_i$, create $\hat\nu_j^i$ by increasing the number of additional ones in $\nu_j^i$ by one. Then
\[\mathcal{C}'_i := \{\hat\nu_j^i: \nu_j^i\in \mathcal{C}_i\}.\]

Let $\M$ be the set of all medians for $\mathcal{D}$. 
From Definition \ref{defi: MGamma} and for each clause $c_i$ in $\Gamma$, set
\begin{align*}
\mathcal{M}& :=\mathcal{M}(\mathcal{D}), \qquad &
\mathcal{M}' &:=\mathcal{M}'(\mathcal{D}),  \\
\mathcal{M}'_{c_i}&:=\mathcal{M}'_{c_i}(\mathcal{D}),  \qquad &
\mathcal{M}'_\Gamma&:=\mathcal{M}'_\Gamma(\mathcal{D}).
\end{align*}

According to Remark~\ref{defi: ones}, all medians $\mu$ must have $\mu[e_i]=0$ for all $i\in [t]$. In $\mathcal{A}$, $\mathcal{B}$, and $\mathcal{C}$, the strings come in pairs where one is complementary to the other on the first $2n$ coordinates. By Fact~\ref{fact: comp ambig}, each of the $x_i$ and $y_i$ coordinates are ambiguous. Therefore 
\begin{align*}
\mathcal{M} &= \{0,1\}^{2n}\times \{0\}^t,\\
\mathcal{M}' &= \{01, 10\}^{n}\times \{0\}^t.
\end{align*}

Define 
\[\mathcal{H}(\mu,\mathcal{A}) := \prod_{a\in \mathcal{A}} f(H(\mu, a)).\]
Similarly define $\mathcal{H}(\mu, \mathcal{B}_i)$ and $\mathcal{H}(\mu, \mathcal{C}'_j)$. Set 
\[\mathcal{H}(\mu) := \mathcal{H}(\mu,\mathcal{A}) \cdot \prod_{i\in [n]} \mathcal{H}(\mu,\mathcal{B}_i) \cdot \prod_{j\in [k]} \mathcal{H}(\mu,\mathcal{C}'_j) .\]

For each $\mu\in \mathcal{M}$, we obtain a lower bound for $\h(\mu)$ and for each $\mu\in \M'_{\Gamma}$ we describe an exact value for $\h(\mu)$. Divide $\mathcal{M}$ into 4 classes using the following three properties which a median $\mu\in\M$ may have. 

\begin{itemize}[leftmargin=1in]
\item [\textit{Property 1}.] $\sum_{i\in [n]} (\mu[x_i] + \mu[y_i]) = n$.
\item [\textit{Property 2}.] $\mu\in\mathcal{M}'$.
\item[\textit{Property 3}.] $\mu \in \mathcal{M}'_\Gamma$. 
\end{itemize}

Notice that these properties are nested. Any median $\mu\in\M$ with Property 2, must also have Property 1. Further, any $\mu\in \M$ with Property 3 must also have Property 2. The following claims provide lower bounds for medians according to their properties. 

\begin{claim}
If $\mu\in \M$ has Property 1, then 
\begin{align}
 \h(\mu, \mathcal{A}) = & [f(n)]^{2k} [f(n+1)]^{16k} [f(n+2)]^{36k} \nonumber \\
 		& \cdot [f(n+3)]^{36k} [f(n+4)]^{16k} [f(n+5)]^{2k}  \label{alpha good} \\
 =:& \alpha_{good} . \nonumber 
 \end{align}
 Otherwise, 
 \begin{align}
 \h(\mu, \mathcal{A}) \geq 
 &[f(n-1)f(n+1)]^{k} [f(n)f(n+2)]^{8k} [f(n+1)f(n+3)]^{18k} \nonumber \\
 &\cdot [f(n+2)f(n+4)]^{18k} [f(n+3)f(n+5)]^{8k} [f(n+4)f(n+6)]^{k}  \label{alpha bad}\\
 =: & \alpha_{bad}. \nonumber 
 \end{align}
 \label{alpha}
\end{claim}

\begin{proof}
If $\mu\in \mathcal{M}$ has Property 1, then $H(\mu,\alpha^{(+0)})=H(\mu, \overline\alpha^{(+0)})=n$ because $\alpha^{(+0)}[x_i] = \alpha^{(+0)}[y_i] =1$ for all $i\in [n]$ while $\mu$ only has $n$ ones in the first 2n entries. Because $\mu[e_i]=0$ for all $i\in [t]$, by Definition~\ref{defi: ones}, 
\[H(\mu,\alpha^{(+a)})=H(\mu, \overline\alpha^{(+a)})=n+a.\]
Recalling the exact strings that appear in $\mathcal{A}$, we quickly obtain \eqref{alpha good}. 

If $\mu$ does not have Property 1, then either $\mu$ has more than $n$ ones in the first $2n$ entries, implying $H(\mu,\alpha^{(+0)})>n$, or $\mu$ has less than $n$ ones in the first $2n$ entries, implying $H(\mu,\overline\alpha^{(+0)})>n$. 
By Fact~\ref{fact: Hcomp}, $H(\mu,\alpha^{(+0)}) +  H(\mu,\overline\alpha^{(+0)}) = 2n$. By Fact~\ref{fact: min product} and the above observations, 
\begin{align*}
f\left(H\left(\mu,\alpha^{(+0)}\right)\right)\cdot f\left(H(\mu, \overline\alpha^{(+0)})\right)&\geq f(n-1)f(n+1),\\
f\left(H\left(\mu,\alpha^{(+a)}\right)\right) \cdot f \left(H\left(\mu, \overline\alpha^{(+a)}\right)\right)&\geq f(n-1 + a)f(n+1+a).
\end{align*}
Recalling the exact strings in $\mathcal{B}$, we obtain the lower bound in \eqref{alpha bad}. 
\end{proof}

\begin{claim}
For each $\mu\in \M$ and each $i\in[n]$,
\begin{align}
\h(\mu, \mathcal{B}_i) \geq 
& [f(n+1)]^{2k} [f(n+2)]^{12k} [f(n+3)]^{12k} [f(n+4)]^{2k} \label{beta good}
=:  \beta_{good} .
\end{align}
If $\mu$ has Property 2, then for every $i\in [n]$,
\begin{align*}
\h(\mu, \mathcal{B}_i) =: 
 \beta_{good}. \nonumber
\end{align*}
If $\mu$ satisfies Property 1, but not Property 2, then there exists $i_0\in [n]$ such that
\begin{align}
\h(\mu, \mathcal{B}_{i_0}) 
= &
[ f(n-1)f(n+3) ]^{k} [ f(n)f(n+4) ]^{6k} \nonumber\\
& \cdot [ f(n+1)f(n+5) ]^{6k} [ f(n+2)f(n+6)]^{k} \label{beta bad}\\
=: &\beta_{bad} . \nonumber
\end{align}
\label{beta}
\end{claim}
\begin{proof}
For any $\mu\in \M$, by Fact~\ref{fact: Hcomp}, 
\[H(\mu, \beta^{(+0)}) + H(\mu, \overline\beta^{(+0)})  =  2n.\] 
By Fact~\ref{fact: min product}, for each $a\in \mathbb{Z}^{+} \cup \{0\}$,
\begin{align*}
f\left(H\left(\mu, \beta^{(+0)}\right)\right) &\cdot f\left( H\left(\mu, \overline\beta^{(+0)}\right)\right) \geq \left[f(n)\right]^2, \text{ and}\\
f \left(H\left(\mu, \beta^{(+a)}\right) \right) &\cdot f\left(H\left(\mu, \overline\beta^{(+a)}\right)\right) \geq \left[f(n+a)\right]^2.
\end{align*}
Therefore, for any $\mu\in \M$, 
\[\h(\mu, \mathcal{B}_i)  \geq \beta_{good}.\]

If $\mu\in \M'$, then for each $i\in [n]$, $\mu[x_i] \neq \mu[y_i]$. On the other hand, for each $i\in [n]$, $j\in [n]$, $\beta_i^{(+a)}[x_j] = \beta_i^{(+a)}[y_j]$. Therefore for any $i,j\in [n]$, 
\[H((\mu[x_j],\mu[y_j]), (\beta_i^{(+a)}[x_j] , \beta_i^{(+a)}[y_j]) ) =1.\] The same holds if $\beta_i^{(+a)}$ is replaced with $\overline\beta_i^{(+a)}$. Therefore,
\begin{align*}
H\left(\mu, \beta_i^{(+0)}\right) &=H\left(\mu, \overline\beta_i^{(+0)}\right) = n,  \\
H\left(\mu, \beta_i^{(+a)}\right) &= H\left(\mu, \overline\beta_i^{(+a)}\right) =n+a.
\end{align*}
As a result $\h(\mu) = \beta_{good}$. 

If $\mu$ satisfies Property 1 but not Property 2, then we can define a tighter lower bound on $\h(\mu, \mathcal{B}_i)$. In particular, because $\mu\not\in \M'$, there exists $i_0\in [n]$ such that $\mu[x_{i_0}] = \mu[y_{i_0}]$. Recall $\beta_{i_0}^{(+a)}[x_{i_0}] = \beta_{i_0}^{(+a)}[y_{i_0}]=1$ and $\overline\beta_{i_0}^{(+a)}[x_{i_0}] = \overline\beta_{i_0}^{(+a)}[y_{i_0}]=0$. Therefore, 
\begin{align*}
\mu[x_{i_0}] =1 \Rightarrow & H((\mu[x_{i_0}], \mu[y_{i_0}]), (\beta_{i_0}^{(+a)}[x_{i_0}], \beta_{i_0}^{(+a)}[y_{i_0}]))=0, \\
&H((\mu[x_{i_0}], \mu[y_{i_0}]), (\overline\beta_{i_0}^{(+a)}[x_{i_0}], \overline\beta_{i_0}^{(+a)}[y_{i_0}]))=2,\text{ and}\\
\mu[x_{i_0}] =0 \Rightarrow & H((\mu[x_{i_0}], \mu[y_{i_0}]), (\overline\beta_{i_0}^{(+a)}[x_{i_0}], \overline\beta_{i_0}^{(+a)}[y_{i_0}]))=0, \\
&H((\mu[x_{i_0}], \mu[y_{i_0}]), (\beta_{i_0}^{(+a)}[x_{i_0}], \beta_{i_0}^{(+a)}[y_{i_0}]))=2.
\end{align*}

Because $\mu$ satisfies Property 1, there are exactly $n$ ones among the first $2n$ coordinates. Without loss of generality, $\mu[x_{i_0}] = \mu[y_{i_0}]=1$. Set \[S:=\{x_j, y_j : j\in [n], j\neq i_0\}.\]  Then $\mu$ has $n-2$ ones and $n$ zeros among the coordinates in $S$. However, $\beta_{i_0}^{(+a)}$ takes the value 0 on each of the coordinates of $S$ and $\overline\beta_{i_0}^{(+a)}$ takes the value 1 on the coordinates of $S$. Therefore,  
\begin{align*}
\mu[x_{i_0}] =1 \Rightarrow & H(\mu, \beta_{i_0}^{(+0)})=0 + (n-2), \\
&H(\mu, \overline\beta_{i_0}^{(+0)})=2 + n,\text{ and}\\
\mu[x_{i_0}] =0 \Rightarrow & H(\mu, \overline\beta_{i_0}^{(+0)})=0 + (n-2), \\
&H(\mu, \beta_{i_0}^{(+0)})=2 + n.
\end{align*}
As a result, 
\begin{align*}
H(\mu, \beta_{i_0}^{(+0)})H(\mu, \overline\beta_{i_0}^{(+0)}) & = (n-2)(n+2),\\
H(\mu, \beta_{i_0}^{(+a)})H(\mu, \overline\beta_{i_0}^{(+a)}) & = (n-2+a)(n+2+a).
\end{align*}
Taking into account all binary strings in $\B_{i_0}$, we conclude 
$\h(\mu, \mathcal{B}_{i_0}) =\beta_{bad}$ in \eqref{beta bad}.
\end{proof}

\begin{fact}
For the quantities defined in Claim~\ref{beta}, $\beta_{good}<\beta_{bad}$. 
Consequently, if $\mu\in \M \setminus \M'$ and satisfies Property 1, then $\prod_{i\in[n]} \h(\mu, \mathcal{B}_i) \geq \beta_{bad} \beta^{k-1}_{good}.$ If $\mu\in \M \setminus \M'$ and does not satisfy Property 1, then $\prod_{i\in[n]} \h(\mu, \mathcal{B}_i) \geq\beta^{k}_{good}$.
\label{beta comp}
\end{fact}

\begin{proof}
Observe 
\begin{align*}
\frac{\beta_{bad}}{\beta_{good}}
&= \left[ \frac{f(n-1)f(n)^6 f(n+1)^4 f(n+4)^4 f(n+5)^6 f(n+6)}{f(n+2)^{11} f(n+3)^{11}}\right]^k \\
&=\left[ \frac{f(n-1)f(n+6)}{f(n+2) f(n+3)}\right]^k  \left[ \frac{f(n)  f(n+5)}{f(n+2) f(n+3)}\right]^{6k}
\left[ \frac{ f(n+1)f(n+4)}{f(n+2) f(n+3)}\right]^{4k} \\
& >1
\end{align*}
where the last inequality follows from Lemma~\ref{concavity2}.
\end{proof}

\begin{claim}
For any $\mu\in \M$ and for each $j\in [k]$, 
\[ \h(\mu, \mathcal{C}'_i) \geq [f(n+2)]^{25} [f(n+3)]^{25}=: \gamma_{min}.\]
If $\mu\in \M'_\Gamma$, then  for each $j\in [k]$,
\begin{align}
\h(\mu, \mathcal{C}'_i) = 
& [f(n)]^7 [f(n+1)]^6 [f(n+2)]^{12}  [f(n+3)]^{12} [f(n+4)]^6 [f(n+5)]^7
=:\gamma_{good}. \label{gamma good}
\end{align}
If $\mu\in\M' \setminus \M'_\Gamma$, then there exists $i_0\in [k]$ such that
\begin{align}
\h(\mu, \mathcal{C}'_{i_0}) = 
& f(n-1) [f(n)]^6 [f(n+1)]^3 [f(n+2)]^{15} \nonumber \\
 & \cdot [f(n+3)]^{15} [f(n+4)]^3 [f(n+5)]^6 f(n+6)  \label{gamma bad}\\
 =&:  \gamma_{bad}. \nonumber
\end{align}
\label{gamma}
\end{claim}
\begin{proof}
Let $\mu$ be an arbitrary median in $\M$. By Remark~\ref{table pairs}, the binary strings in $\mathcal{C}_i$ come in pairs that are complementary on the first $2n$ entries. With a careful examination of Table~\ref{table: n+5}, if $\eta, \eta'\in \mathcal{C}_i$ are complementary on the first $2n$ coordinates, then $e(\eta) + e(\eta') = 3$ where $e$ is the function specifying the number of additional ones. By the definition of $\mathcal{C}'_i$, the strings still come in complementary pairs, $(\hat\eta, \hat\eta')$, but here $e(\hat\eta) + e( \hat\eta') = 5$ because the number of additional ones in $\hat\eta$ and $\hat\eta'$ is precisely one more than the number in $\eta$ and $\eta'$. By Fact~\ref{fact: Hcomp}, for each of the  25 pairs in $\mathcal{C}'_i$,
\[H(\mu, \hat\eta) + H(\mu, \hat\eta') = 2n+5.\]
Then by Fact~\ref{fact: min product}, 
\[f(H(\mu, \hat\eta)) f(H(\mu, \hat\eta')) \geq f(n+2) f(n+3)\]
which gives the general bound $\gamma_{min}$. 

Now suppose $\mu\in \M'_\Gamma$. This implies $\mu\in \M'_{c_i}$ for all clauses $c_i$ in $\Gamma$. By the definition of $\mathcal{C}'_i$, for each $\hat\nu_j^i\in \mathcal{C}'_i$, 
$H(\mu, \hat\nu_j^i) = H(\mu, \nu_j^i)+1$ where $\nu_j^i\in \mathcal{C}_i$. From \eqref{dist for sat}, we see 
\[ \{H(\mu,\hat\nu_j^i): j\in[50]\} = \{n_{(7)}, (n+1)_{(6)}, (n+2)_{(12)}, (n+3)_{(12)}, (n+4)_{(6)}, (n+5)_{(7)}\} . \]
This immediately implies $\h(\mu, \mathcal{C}_i') = \gamma_{good}$ in \eqref{gamma good}.

Finally, suppose $\mu\in \M' \setminus \M'_{\Gamma}$. Using the bijection in Definition~\ref{defi: bijection}, $\mu$ must correspond to a truth assignment which does not satisfy $\Gamma$. So there is a clause $c_{i_0}$ in $\Gamma$ which is not satisfied. Therefore $\mu\in \M' \setminus \M'_{c_{i_0}}$. From \eqref{dist for nonsat}, adding 1 to each $H(\mu, \nu_j^{i_0})$ to obtain $ H(\mu, \hat\nu_j^{i_0}) $, we obtain
\begin{align}
\{H(\mu,\nu_j^{i_0}): j\in[50]\} = \{&(n-1)_{(1)}, n_{(6)}, (n+1)_{(3)}, (n+2)_{(15)},  \nonumber \\ & (n+3)_{(15)}, (n+4)_{(3)}, (n+5)_{(6)}, (n+6)_{(1)} \}.
\end{align}
This directly implies $\h(\mu, \mathcal{C}_{i_0}) = \gamma_{bad}$ in \eqref{gamma bad}.
\end{proof}

\begin{fact}
For the quantities defined in Claim~\ref{gamma}, $\gamma_{good}< \gamma_{bad}$. As a result, when $\mu\in \M'\setminus \M'_{\Gamma}$, 
 \[\h(\mu, \mathcal{C}) \geq \gamma_{bad}\gamma_{good}^{k-1}.\]
\label{gamma comp}
\end{fact}
\begin{proof}
Indeed, this was our initial assumption: 
\begin{align*}
\frac{\gamma_{bad}}{\gamma_{good}} 
&= \frac{ f(n-1)[f(n+2)]^3[f(n+3)]^3 f(n+6) }{ f(n)[f(n+1)]^3 [f(n+4)]^3[f(n+5)] }>1. 
\end{align*}
The bound for $\h(\mu, \mathcal{C})$ results from the fact that $\mu\in \M'$ either corresponds to a satisfying truth assignment for $c_i$ or a non-satisfying truth assignment for each clause $c_i$. 
\end{proof}

In summary, Claims~\ref{alpha}, ~\ref{beta}, and ~\ref{gamma} along with Facts~\ref{beta comp} and ~\ref{gamma comp}, we give the following bounds. 
If $\mu\in \M'_\Gamma$, 
\begin{align*}
 \h(\mu) = \alpha_{good} \beta_{good}^n \gamma_{good}^k =: h_3.
\end{align*}
If $\mu\in \M' \setminus \M'_\Gamma$, 
\begin{align*}
 \h(\mu) \geq \alpha_{good} \beta_{good}^n \gamma_{bad}\gamma_{good}^{k-1} =: h_2.
\end{align*}
If $\mu\in \M \setminus \M'$ and has Property 1, 
\begin{align*}
\h(\mu) \geq \alpha_{good} \beta_{bad}\beta_{good}^{n-1} \gamma_{min}^k =: h_1.
\end{align*}
If $\mu\in \M$ but does not have Property 1, 
\begin{align*}
\h(\mu) \geq \alpha_{bad} \beta_{good}^{n} \gamma_{min}^k =: h_0.
\end{align*}

In order to complete, the proof, we only need to show $h_3 < h_i$ for $i\in \{0,1,2\}$.
By one of our assumptions about $f(x)$, we have already verified in Fact~\ref{gamma comp} that
\begin{align*}
\frac{h_2}{h_3} 
= \frac{\gamma_{bad}}{\gamma_{good}} > 1. 
\end{align*}

Next observe 
\begin{align*}
\frac{h_1}{h_2}
= & \frac{\beta_{bad}}{\beta_{good}}\cdot \frac{\gamma_{min}^k}{\gamma_{bad} \gamma_{good}^{k-1}}\\
> &\frac{\beta_{bad}}{\beta_{good}}\cdot \frac{\gamma_{min}^k}{\gamma_{bad} ^k}\\ 
 =& \left[ \frac{f(n-1)f(n+3)}{ [f(n+1)]^2}   \left[\frac{f(n)f(n+4)}{ [f(n+2)]^2}\right]^6   
    \left[\frac{f(n+1)f(n+5)}{ [f(n+3)]^2}\right]^6   \frac{f(n+2)f(n+6)}{ [f(n+4)]^2}   \right] ^k\\
  &
    \cdot\left[\frac{[f(n+2)]^{10} [f(n+3)]^{10} }{f(n-1) [f(n)]^6 [f(n+1)]^3 [f(n+4)]^3 [f(n+5)]^6 f(n+6)}\right]^k\\
 =& \left[\frac{f(n+1) f(n+4)}{ f(n+2) f(n+3) }\right]^k\\
 >& 1 
\end{align*}
where the last inequality follows from Lemma~\ref{concavity2}.

Finally we prove that $h_0 > h_2$. 
\begin{align*}
\frac{h_0}{h_2}
= & \frac{\alpha_{bad}}{\alpha_{good}}\frac{\gamma_{min}^k}{\gamma_{bad} \gamma_{good}^{k-1}}\\
> & \frac{\alpha_{bad}}{\alpha_{good}}\cdot \frac{\gamma_{min}^k}{\gamma_{bad} ^k}\\
= & \left[ \frac{f(n-1)f(n+1)}{[f(n)]^2} \left[\frac{f(n)f(n+2)}{[f(n+1)]^2}\right]^8
   \left[\frac{f(n+1)f(n+3)}{[f(n+2)]^2}\right]^{18} \right. \\
&   \cdot \left.\left[\frac{f(n+2)f(n+4)}{[f(n+3)]^2}\right]^{18} 
  \left[\frac{f(n+3)f(n+5)}{[f(n+4)]^2}\right]^{8} \frac{f(n+4)f(n+6)}{[f(n+5)]^2}\right]^k\\
  & \cdot \left[\frac{[f(n+2)]^{10} [f(n+3)]^{10} }{f(n-1) [f(n)]^6 [f(n+1)]^3 [f(n+4)]^3 [f(n+5)]^6 f(n+6)}\right]^k\\[.5em]
 = & \frac{[f(n-1)]^{k} [f(n)]^{8k} [f(n+1)]^{19k} [f(n+2)]^{26k}}
 		{ [f(n)]^{2k} [f(n+1)]^{16k} [f(n+2)]^{36k}}\\[.5em]
	& \cdot \frac{ [f(n+3)]^{26k} [f(n+4)]^{19k} [f(n+5)]^{8k} [f(n+6)]^{k}}
 		{[f(n+3)]^{36k} [f(n+4)]^{16k} [f(n+5)]^{2k}}\\[.5em]
		& \cdot \frac{[f(n+2)]^{10k} [f(n+3)]^{10k} }{[f(n-1)]^{k} [f(n)]^{6k} [f(n+1)]^{3k} [f(n+4)]^{3k} [f(n+5)]^{6k} f(n+6)^{k}}\\
 = & 1.
\end{align*}

Therefore for any $\hat\mu \in \M'_\Gamma$ and $\mu\in \M \setminus\M'_\Gamma $, then $\h(\hat\mu)< \h(\mu)$. Thus, if we could determine, in polynomial time, how many medians $\mu\in \mathcal{M}$ have  $h(\mu)\leq h_3$, then we could determine how many satisfying truth assignments exist for $\Gamma$ in polynomial time.   
\end{proof}

\begin{cor}
Fix a function $f(x): \mathbb{Z}^{+} \cup \{0\} \rightarrow [0,\infty)$  which satisfies the following properties:
\begin{itemize}
\item $\log f(x)$ is strictly concave up, 
\item the function values of $f$ can be computed in polynomial time, and 
\item  for all but finitely many $n\in \mathbb{Z}$, $n\geq 2$, 
\[ \frac{f(n-2)[f(n+1)]^3[f(n+2)]^3 f(n+5)}{f(n-1)[f(n)]^3[f(n+3)]^3f(n+4)} > 1.\]
\end{itemize}
For arbitrary $m,s\in \mathbb{Z}^{+}$ and $D\in \mathbb{R}$, let $S:=\{\nu_1, \nu_2, \ldots, \nu_m\}$ be a multiset of binary strings, each of length $s$ and let $\mathcal{M}$ be the set of medians for $S$.
Then it is NP-complete to determine if   
\begin{eqnarray}\min_{\mu\in \Gamma}\prod_{i\in[m]} f\left(H(\nu_i,\mu)\right) \leq D . \end{eqnarray}
\label{prop: concave up}
\end{cor}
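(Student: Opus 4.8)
The plan is to reuse, essentially verbatim, the reduction built in the proof of Theorem~\ref{NP concave up}. Given a D3CNF $\Gamma$ with $n$ variables and $k$ clauses, I would form the multiset $\mathcal{D}=\mathcal{A}\uplus\biguplus_{i\in[n]}\B_i\uplus\biguplus_{i\in[k]}\mathcal{C}'_i$ exactly as there, and take the threshold to be $D:=h_3=\alpha_{good}\,\beta_{good}^{\,n}\,\gamma_{good}^{\,k}$. The first thing to check is that $D$ can be produced in polynomial time: each of $\alpha_{good}$, $\beta_{good}$, $\gamma_{good}$ is a product of finitely many values among $f(n),\dots,f(n+5)$ with exponents bounded by a polynomial in $k$, and $f$ is polynomial-time computable by hypothesis, so $D$ is obtained by repeated squaring in polynomial time and has polynomial size.

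Correctness of the reduction is then immediate from the inequalities already established in the proof of Theorem~\ref{NP concave up}: every $\mu\in\M'_\Gamma$ has $\h(\mu)=h_3$, while every $\mu\in\M\setminus\M'_\Gamma$ has $\h(\mu)\ge h_i>h_3$ for the appropriate $i\in\{0,1,2\}$ (namely $h_2$ for $\M'\setminus\M'_\Gamma$, $h_1$ for medians with Property~1 but not Property~2, and $h_0$ for medians failing Property~1). Hence $\min_{\mu\in\M}\prod_{i\in[m]} f(H(\nu_i,\mu))\le D$ if and only if $\M'_\Gamma\neq\emptyset$, which by Definition~\ref{defi: MGamma} and Remark~\ref{defi: bijection2} holds exactly when $\Gamma$ has a satisfying truth assignment. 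Since D3SAT is NP-complete, this shows the decision problem is NP-hard.

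For membership in NP, a yes-certificate is a single median $\mu$ witnessing $\prod_{i\in[m]} f(H(\nu_i,\mu))\le D$. Following the verification in Lemma~\ref{lemma: sharpP}, I would check that $\mu$ is a median in $O(ms)$ time by comparing $\sum_i H(\nu_i,\mu)$ to the value attained at a coordinatewise majority median, compute each $H(\nu_i,\mu)\le s$ and then $f(H(\nu_i,\mu))$ in polynomial time, and finally compare the resulting product to $D$. This places the problem in NP, so together with the hardness above it is NP-complete. I do not expect a genuine obstacle here; the only point requiring care is the bookkeeping that $D=h_3$ is a polynomial-size object and that the test $\h(\mu)\le D$ is an honest polynomial-time comparison, which is handled by the polynomial bounds on the exponents appearing in $h_3$.
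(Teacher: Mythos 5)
Your proposal is correct and follows the route the paper intends: the corollary is meant to be read off from the reduction in Theorem~\ref{NP concave up}, taking $D=h_3$ so that the minimum is at most $D$ exactly when $\mathcal{M}'_\Gamma\neq\emptyset$, i.e.\ when the D3CNF is satisfiable, with NP membership certified by a single median as in Lemma~\ref{lemma: sharpP}. Your added bookkeeping that $h_3$ has polynomial bit-size is a reasonable explicit touch but does not change the argument.
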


This next theorem gives the same result as Theorem~\ref{NP concave up} with one change in the conditions on $f$. While Theorem~\ref{NP concave up} required that 
\[ \frac{f(n-2)[f(n+1)]^3[f(n+2)]^3 f(n+5)}{f(n-1)[f(n)]^3[f(n+3)]^3f(n+4)} > 1,\] Theorem~\ref{NP concave up2} switches the inequality to consider functions in which the ratio is less than 1. 

\begin{thm}
 Fix a function $f(x): \mathbb{Z}^{+} \cup \{0\} \rightarrow [0,\infty)$  which satisfies the following properties:
\begin{itemize}
\item $\log f(x)$ is strictly concave up,
\item the function values of $f$ can be computed in polynomial time, and 
\item  for all but finitely many $n\in \mathbb{Z}$, $n\geq 2$, 
\[ \frac{f(n-2)[f(n+1)]^3[f(n+2)]^3 f(n+5)}{f(n-1)[f(n)]^3[f(n+3)]^3f(n+4)} < 1.\]
\end{itemize}
For arbitrary $m,s\in \mathbb{N}$ and $D\in \mathbb{R}$, let $S:=\{\nu_1, \nu_2, \ldots, \nu_m\}$ be a multiset of binary strings, each of length $s$.
Then it is \#P-complete to determine how many medians $\mu$ for $S$ have  
\[\prod_{i\in[m]} f\left(H(\nu_i,\mu)\right) \leq D .\]
\label{NP concave up2}
\end{thm}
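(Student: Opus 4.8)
The plan is to re-run the polynomial reduction from \#D3SAT that proves Theorem~\ref{NP concave up}, modifying only the clause gadget so that satisfying truth assignments are again detected as the medians of \emph{least} $\mathcal{H}$-weight. Membership in \#P is checked exactly as in the first paragraph of the proof of Theorem~\ref{NP concave up} (guess a median and the Hamming distances, evaluate $f$), so the work is in \#P-hardness. The one obstruction to copying that proof verbatim is the flipped third hypothesis: with the gadget $\mathcal{C}'_i$ of Theorem~\ref{NP concave up}, a median whose restriction to the support set $S_i$ lies in one of the seven ``satisfying'' equivalence classes of Table~\ref{table: n+5} contributes the concentrated multiset of Hamming distances responsible for $\gamma_{good}$, while the unique ``falsifying'' class (Column M8, via Claim~\ref{bad class}) contributes the spread multiset responsible for $\gamma_{bad}$; and Lemma~\ref{concavity2} together with $\frac{f(n-2)[f(n+1)]^3[f(n+2)]^3f(n+5)}{f(n-1)[f(n)]^3[f(n+3)]^3f(n+4)}<1$ now forces $\gamma_{good}>\gamma_{bad}$, so the elements of $\mathcal{M}'_\Gamma$ would be the medians of \emph{largest} clause-weight in $\mathcal{M}'$, not the smallest.

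First I would keep the collections $\mathcal{A}$ and $\mathcal{B}_i$ as in the proof of Theorem~\ref{NP concave up} (so all strings still come in pairs complementary on the first $2n$ coordinates, the $x_i,y_i$ are ambiguous by Fact~\ref{fact: comp ambig}, every median has $\mu[e_j]=0$ by Definition~\ref{defi: ones}, and Claims~\ref{alpha} and~\ref{beta} together with Fact~\ref{beta comp} hold unchanged), and replace $\mathcal{C}'_i$ by a \emph{dual} clause gadget $\widetilde{\mathcal{C}}_i$. This gadget is built like $\mathcal{C}_i$ --- through a table of the same form, whose rows come in pairs complementary on the first $2n$ coordinates so that Fact~\ref{fact: Hcomp} and Fact~\ref{fact: min product} still govern the relevant Hamming distances --- but it is arranged so that the two multisets trade places: the seven satisfying classes now receive the spread multiset while the falsifying class M8 receives the concentrated one. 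Such a table is obtained from Table~\ref{table: n+5} by a complementation-type modification of Column A, supplemented if necessary by balanced complementary pairs supported on $S_i$ to keep all seven satisfying classes on the same multiset. One then re-proves the analogue of Claim~\ref{gamma}: $\mathcal{H}(\mu,\widetilde{\mathcal{C}}_i)=\widetilde{\gamma}_{good}$ for $\mu\in\mathcal{M}'_{c_i}$, $\mathcal{H}(\mu,\widetilde{\mathcal{C}}_i)=\widetilde{\gamma}_{bad}$ for the one class in $\mathcal{M}'\setminus\mathcal{M}'_{c_i}$, and $\mathcal{H}(\mu,\widetilde{\mathcal{C}}_i)\ge\widetilde{\gamma}_{min}$ for every $\mu\in\mathcal{M}$, with $\widetilde{\gamma}_{good}<\widetilde{\gamma}_{bad}$ precisely because the new hypothesis reverses the conclusion of Fact~\ref{gamma comp}.

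With $\widetilde{\gamma}_{good}<\widetilde{\gamma}_{bad}$ in hand the remainder mirrors Theorem~\ref{NP concave up}. Partitioning $\mathcal{M}$ by Properties 1, 2, 3, one gets $\mathcal{H}(\mu)=\alpha_{good}\beta_{good}^{\,n}\widetilde{\gamma}_{good}^{\,k}=:\widetilde{h}_3$ exactly when $\mu\in\mathcal{M}'_\Gamma$; a median in $\mathcal{M}'\setminus\mathcal{M}'_\Gamma$ fails at least one clause, hence $\mathcal{H}(\mu)\ge\alpha_{good}\beta_{good}^{\,n}\widetilde{\gamma}_{bad}\widetilde{\gamma}_{good}^{\,k-1}=:\widetilde{h}_2$; and the two classes outside $\mathcal{M}'$ satisfy $\mathcal{H}(\mu)\ge\alpha_{good}\beta_{bad}\beta_{good}^{\,n-1}\widetilde{\gamma}_{min}^{\,k}=:\widetilde{h}_1$ and $\mathcal{H}(\mu)\ge\alpha_{bad}\beta_{good}^{\,n}\widetilde{\gamma}_{min}^{\,k}=:\widetilde{h}_0$ respectively. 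The target is $\widetilde{h}_3<\widetilde{h}_i$ for $i\in\{0,1,2\}$. The bound $\widetilde{h}_3<\widetilde{h}_2$ is now immediate, since it \emph{is} the inequality $\widetilde{\gamma}_{good}<\widetilde{\gamma}_{bad}$; and $\widetilde{h}_3<\widetilde{h}_1$, $\widetilde{h}_3<\widetilde{h}_0$ reduce, via Lemma~\ref{concavity2}, to showing that the Property-1 and Property-2 penalties $\alpha_{bad}/\alpha_{good}$ and $\beta_{bad}/\beta_{good}$ dominate the factor $(\widetilde{\gamma}_{good}/\widetilde{\gamma}_{min})^{k}$. This is arranged by choosing the multiplicities (and, if needed, the spreads) in $\mathcal{A}$ and $\mathcal{B}_i$ appropriately --- as in Theorem~\ref{NP concave up}, possibly rescaled by a factor polynomial in $n$ and $k$ so that the penalties absorb the concentration gap between $\widetilde{\gamma}_{good}$ and $\widetilde{\gamma}_{min}$.

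The main technical obstacle is exactly this last balancing together with an explicit verification that the dual table realizes the exchanged multisets while leaving the median set equal to $\{01,10\}^n\times\{0\}^t$; everything else is routine bookkeeping of Hamming distances. Once it is carried out, set $D:=\widetilde{h}_3$. Then the medians $\mu$ with $\prod_{i\in[m]}f(H(\nu_i,\mu))\le D$ are precisely the elements of $\mathcal{M}'_\Gamma$, so counting them counts the satisfying truth assignments of $\Gamma$; since \#D3SAT is \#P-complete and membership in \#P was already observed, the problem is \#P-complete.
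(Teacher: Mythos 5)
Your overall strategy is the right one and is the same as the paper's: keep the median structure $\{0,1\}^{2n}\times\{0\}^t$, redesign the clause gadget so that under the reversed hypothesis the medians in $\mathcal{M}'_\Gamma$ are again the strict minimizers of $\mathcal{H}$, and then threshold at $D=h_3$. But as written there is a genuine gap, and it sits exactly where the theorem's content lies. Your ``dual'' gadget $\widetilde{\mathcal{C}}_i$ --- a table in which all seven satisfying columns carry the spread multiset of Table~\ref{table: n+5} while Column M8 carries the concentrated one --- is asserted, not constructed. The phrase ``complementation-type modification of Column A, supplemented if necessary by balanced complementary pairs'' does not produce such a table, and its existence is not obvious: each row's vector of distances to the eight column headers is severely constrained (it is a constant plus twice a Hamming distance on a projected subcube, and every row distributes the same total of $24+8(n-3)+8e$ across the eight columns), so one cannot freely prescribe which columns are ``spread'' and which are ``concentrated.'' The paper does not perform your swap at all; it builds a different $26$-string gadget (Table~\ref{table: n+5 complement}) whose satisfying and falsifying columns carry \emph{new} multisets, chosen so that $\gamma_{bad}/\gamma_{good}$ equals precisely the reciprocal of the hypothesis ratio evaluated at $n+1$. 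Until you exhibit an explicit table and verify, as in Claim~\ref{gamma}, that it keeps every $x_i,y_i$ ambiguous, keeps the complementary-pair structure needed for Facts~\ref{fact: Hcomp} and~\ref{fact: min product}, and realizes the claimed multisets on all eight classes, the reduction is not established.

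The second unproven step is the balancing of $\widetilde{h}_1$ and $\widetilde{h}_0$ against $\widetilde{h}_3$. In Theorem~\ref{NP concave up} the multiplicities in $\mathcal{A}$ and $\mathcal{B}_i$ were tuned so that $h_0/h_2$ telescopes exactly to $1$ and $h_1/h_2$ reduces, via Lemma~\ref{concavity2}, to $\left[f(n+1)f(n+4)/\bigl(f(n+2)f(n+3)\bigr)\right]^k>1$; those identities depend on the specific exponents in $\gamma_{bad}$ and $\gamma_{min}$ of that gadget. If you swap the clause multisets, the per-clause deficit $\widetilde{\gamma}_{min}^{\,k}/\bigl(\widetilde{\gamma}_{bad}\widetilde{\gamma}_{good}^{\,k-1}\bigr)$ changes, the old cancellations fail, and ``rescaling by a factor polynomial in $n$ and $k$'' does not by itself restore them: what is needed is a re-derivation of the multiplicities so that every surviving factor is controlled either by Lemma~\ref{concavity2} or by the hypothesis, which is exactly what the paper does by replacing $\mathcal{A}$ with $72k$ strings ($4k,14k,14k,4k$ copies at offsets $1$ through $4$) and $\mathcal{B}_i$ with $24k$ strings ($6k,6k$ copies at offsets $2,3$), and then checking $h_2/h_3>1$, $h_1/h_2>1$, $h_0/h_2=1$ explicitly. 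So the proposal is a plausible outline of the same reduction, but the two items you defer --- the explicit gadget and the explicit multiplicity bookkeeping --- are the proof, and neither is supplied.
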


\begin{proof}
This proof closely mirrors the proof of Theorem~\ref{NP concave up}. Here we will note the changes that need to be made. 

This time, we define $98k+24kn$ binary strings, each of length $2n+245k+60kn$ with coordinates 
\[(x_1, y_1, \ldots, x_n, y_n, e_1, \ldots, e_t).\] 

Let $\alpha^{(+a)}$ and $\overline\alpha^{(+a)}$ be defined as before. The collection ${\mathcal{A}}$ will now consist of the following $72k$ strings: 
\begin{itemize}
\item $4k$ copies each of $\alpha^{(+1)}$ and $\overline\alpha^{(+1)}$,
\item $14k$ copies each of $\alpha^{(+2)}$ and $\overline\alpha^{(+2)}$,
\item $14k$ copies each of $\alpha^{(+3)}$ and $\overline\alpha^{(+3)}$,
\item $4k$ copies each of $\alpha^{(+4)}$ and $\overline\alpha^{(+4)}$.
\end{itemize}

Define $\beta_i^{(+a)}$ and $\overline\beta_i^{(+a)}$ as before. The collection ${\mathcal{B}_i}$ now consists of the following $24k$ strings: 
\begin{itemize}
\item $6k$ copies each of $\beta_i^{(+2)}$ and $\overline\beta_i^{(+2)}$,
\item $6k$ copies each of $\beta_i^{(+3)}$ and $\overline\beta_i^{(+3)}$.
\end{itemize}

Following the explanation found in Section~\ref{section: define tables}, Table \ref{table: n+5 complement} defines 26 binary strings $\overline{\mathcal{C}_i}$ for a clause. As in the proof of Theorem~\ref{NP concave up}, we will add 1 additional one to each of the 26 strings in $\overline{\mathcal{C}_i}$ to create $\mathcal{C}'_i$. 

Using the same Properties 1, 2, and 3 as before, we obtain the following values which are analogous to the bounds in Claims~\ref{alpha}, ~\ref{beta}, and ~\ref{gamma}: 
\begin{align*}
\alpha_{good} :=& [f(n+1)]^{8k} [f(n+2)]^{28k} [f(n+3)]^{28k} [f(n+4)]^{8k},  \\
\alpha_{bad} :=&   [f(n)f(n+2)]^{4k} [f(n+1)f(n+3)]^{14k}\\
			& \cdot [f(n+2)f(n+4)]^{14k} [f(n+3)f(n+5)]^{4k} ,
			\end{align*}
			\begin{align*}
\beta_{good} :=&  [f(n+2)]^{12k} [f(n+3)]^{12k}, \\
\beta_{min} :=&    [f(n+2)]^{12k} [f(n+3)]^{12k},\\
\beta_{bad} :=&   [f(n)f(n+4)]^{6k} [f(n+1)f(n+5)]^{6k}, \\
\gamma_{good} :=& f(n-1) [f(n)]^{3} [f(n+1)]^{3} [f(n+2)]^{6} \\
			& \cdot [f(n+3)]^{6} [f(n+4)]^{3} [f(n+5)]^{3} f(n+6),  \\
\gamma_{bad} :=&  [f(n)]^{4}[f(n+1)]^6 [f(n+2)]^{3} [f(n+3)]^{3} [f(n+4)]^{6} [f(n+5)]^{4}, \\
\gamma_{min} :=&  [f(n+2)]^{13} [f(n+3)]^{13}. \\
\end{align*}

By our assumption about $f(x)$,
\begin{align*}
\frac{\gamma_{bad}}{\gamma_{good}}
&= \frac{ f(n) [f(n+1)]^3 [f(n+4)]^3 f(n+5)}{ f(n-1) [f(n+2)]^3 [f(n+3)]^3 f(n+6)}>1
\end{align*}
which implies $\gamma_{bad}>\gamma_{good}$. 

Next we determine the values of $h_0, h_1, h_2, h_3$ in this setting. As before, if $\mu$ has Property $i$, but not property $i+1$, then $\h(\mu)\geq h_i$. Further, if $\mu$ has Property 3, $\h(\mu) = h_3$.  If $\mu$ does not have Property 1, then $\h(\mu)\geq h_0$. 
\begin{align*}
h_3 &:=\alpha_{good} \beta_{good}^n \gamma_{good}^k. \\
h_2 &:= \alpha_{good} \beta_{good}^n \gamma_{bad}\gamma_{good}^{k-1}. \\
h_1 &:=\alpha_{good} \beta_{bad}\beta_{min}^{n-1} \gamma_{min}^k. \\
h_0 &:= \alpha_{bad} \beta_{min}^{n} \gamma_{min}^k.
\end{align*}
As in the proof of Theorem~\ref{NP concave up}, we will show $h_3 <h_0, h_1, h_2$. 

By our assumption about $f(x)$, 
\begin{align*}
\frac{ h_2}{ h_3} 
= \frac{ \gamma_{bad}}{ \gamma_{good}} > 1.
\end{align*}
Also
\begin{align*}
\frac{ h_1}{ h_2}
= & \frac{ \beta_{bad}}{\beta_{good}}\cdot \frac{\gamma_{min}^k}{\gamma_{bad} \gamma_{good}^{k-1}}\\
> &\frac{\beta_{bad}}{\beta_{good}}\cdot \frac{\gamma_{min}^k}{\gamma_{bad} ^k}\\
=& \left[\left[\frac{f(n)f(n+4)}{[f(n+2)]^2}\right]^{6}\left[\frac{f(n+1)f(n+5)}{[f(n+3)]^2}\right]^{6}\right]^k
	\\&\cdot\left[\frac{ [f(n+2)]^{10} [f(n+3)]^{10} }{ [f(n)]^4 [f(n+1)]^6 [f(n+4)]^6 [f(n+5)]^4}\right]^k\\
=& \left[ \frac{ f(n) f(n+5) }{ f(n+2) f(n+3) } \right]^{2k}\\
> & 1 \tag*{by Lemma~\ref{concavity2}.} 
\end{align*}

Finally we show $h_0 >h_2$.
\begin{align*}
\frac{ h_0}{ h_2}
= & \frac{ \alpha_{bad}}{\alpha_{good}}\frac{\gamma_{min}^k}{\gamma_{bad} \gamma_{good}^{k-1}}\\
> & \frac{\alpha_{bad}}{\alpha_{good}}\cdot \frac{\gamma_{min}^k}{\gamma_{bad} ^k}\\
=& \left[ \left[\frac{f(n)f(n+2)}{[f(n+1)]^2}\right]^4  \left[\frac{f(n+1)f(n+3)}{[f(n+2)]^{2} }\right]^{14} \right.\\& \cdot \left. \left[\frac{f(n+2)f(n+4)}{[f(n+3)]^{2}}\right]^{14} \left[\frac{f(n+3)f(n+5)}{[f(n+4)]^2}\right]^4	\right]^k\\
	&\cdot \left[\frac{ [f(n+2)]^{10} [f(n+3)]^{10} }{ [f(n)]^4 [f(n+1)]^6 [f(n+4)]^6 [f(n+5)]^4}\right]^k
	\\[.5em]
=& \frac{[f(n)]^{4k} [f(n+1)]^{14k} [f(n+2)]^{18k} [f(n+3)]^{18k} [f(n+4)]^{14k} [f(n+5)]^{4k}}
	{[f(n+1)]^{8k} [f(n+2)]^{28k} [f(n+3)]^{28k} [f(n+4)]^{8k}}\\[.5em]
	&\cdot \frac{ [f(n+2)]^{10k} [f(n+3)]^{10k} }{ [f(n)]^{4k} [f(n+1)]^{6k} [f(n+4)]^{6k} [f(n+5)]^{4k}} \\
=& 1.	
\end{align*}
Making each of these changes in the proof of Theorem~\ref{NP concave up}, we complete the proof of Theorem~\ref{NP concave up2}. 
\end{proof} 

\begin{table}[ht]
\centering
\begin{threeparttable}
\caption{The 26 strings to complement the collection in Table \ref{table: n+5} along with their Hamming distance from medians in $\mathcal{M}'$. }
\begin{tabular}{c}
\resizebox{5.5in}{!}{
$
\begin{array}{|c|c|c|c|||c|c|c|c|c|c|c|c|}
\hline

\cline{2-12}
 & \textbf{A} & \textbf{B} & \textbf{C} & \textbf{M1} &\textbf{M2}&\textbf{M3}&\textbf{M4}&\textbf{M5}&\textbf{M6}&\textbf{M7}&\textbf{M8}\\ 
\hline
&\text{Values of}& \nu^i_j[x_\ell],&& 
10\,10\,10& 10\,10\,01&10\,01\,10&   01\,10\,10&10\, 01\, 01&01\,10\,01&01\,01\,10&01\, 01\, 01 \\
\text{Row}&\nu^i_j \text{ on its} & \nu^i_j[y_\ell]&\text{Add'l} &&&&&&&&\\
\text{\#}&\text{support set}& (v_\ell\not\in c_i)& \text{Ones} &&&&&&&&\\
\hline\hline \hline
1&01\,00\,00&0&+0		&n+1&n+1&n+1&n-1&n+1&n-1&n-1&n-1\\
2&00\,01\,00&0&+0		&n+1&n+1&n-1&n+1&n-1&n+1&n-1&n-1\\
3&00\,00\,01&0&+0		&n+1&n-1&n+1&n+1&n-1&n-1&n+1&n-1\\
\hline
4&10\,11\,11&1&+3		&n+2&n+2&n+2&n+4&n+2&n+4&n+4&n+4\\
5&11\,10\,11&1&+3		&n+2&n+2&n+4&n+2&n+4&n+2&n+4&n+4\\
6&11\,11\,10&1&+3		&n+2&n+4&n+2&n+2&n+4&n+4&n+2&n+4\\
\hline\hline
7&01\,01\,00&0&+2		&n+4&n+4&n+2&n+2&n+2&n+2&n&n\\
8&01\,00\,01&0&+2		&n+4&n+2&n+4&n+2&n+2&n&n+2&n\\
9&00\,01\,01&0&+2		&n+4&n+2&n+2&n+4&n&n+2&n+2&n\\
\hline
10&10\,10\,11&1&+1		&n-1&n-1&n+1&n+1&n+1&n+1&n+3&n+3\\
11&10\,11\,10&1&+1		&n-1&n+1&n-1&n+1&n+1&n+3&n+1&n+3\\
12&11\,10\,10&1&+1		&n-1&n+1&n+1&n-1&n+3&n+1&n+1&n+3\\
\hline\hline
13&10\,01\,01&0&+1		&n+2&n&n&n+4&n-2&n+2&n+2&n\\
14&01\,10\,01&0&+1		&n+2&n&n+4&n&n+2&n-2&n+2&n\\
15&01\,01\,10&0&+1		&n+2&n+4&n&n&n+2&n+2&n-2&n\\
16&10\,10\,01&0&+1		&n&n-2&n+2&n+2&n&n&n+4&n+2\\
17&10\,01\,10&0&+1		&n&n+2&n-2&n+2&n&n+4&n&n+2\\
18&01\,10\,10&0&+1		&n&n+2&n+2&n-2&n+4&n&n&n+2\\
19&10\,10\,10&0&+1		&n-2&n&n&n&n+2&n+2&n+2&n+4\\
\hline
20&01\,01\,01&1&+2		&n+5&n+3&n+3&n+3&n+1&n+1&n+1&n-1\\
21&10\,01\,01&1&+2		&n+3&n+1&n+1&n+5&n-1&n+3&n+3&n+1\\
22&01\,10\,01&1&+2		&n+3&n+1&n+5&n+1&n+3&n-1&n+3&n+1\\
23&01\,01\,10&1&+2		&n+3&n+5&n+1&n+1&n+3&n+3&n-1&n+1\\
24&10\,10\,01&1&+2		&n+1&n-1&n+3&n+3&n+1&n+1&n+5&n+3\\
25&10\,01\,10&1&+2		&n+1&n+3&n-1&n+3&n+1&n+5&n+1&n+3\\
26&01\,10\,10&1&+2		&n+1&n+3&n+3&n-1&n+5&n+1&n+1&n+3\\
\hline
\end{array}
$
}
\end{tabular}
\begin{tablenotes}[flushleft]
\item[] The information in this table is to be read in the same way as the information in Table~\ref{table: n+5}. This is detailed in Section~\ref{section: define tables}.
\end{tablenotes}
\end{threeparttable}
\label{table: n+5 complement}
\end{table}
\begin{cor}
Fix a function $f(x): \mathbb{Z}^{+} \cup \{0\} \rightarrow [0,\infty)$  which satisfies the following properties:
\begin{itemize}
\item $\log f(x)$ is strictly concave up, 
\item the function values of $f$ can be computed in polynomial time, and 
\item  for all but finitely many $n\in \mathbb{Z}$, $n\geq 2$, 
\[ \frac{f(n-2)[f(n+1)]^3[f(n+2)]^3 f(n+5)}{f(n-1)[f(n)]^3[f(n+3)]^3f(n+4)} < 1.\]
\end{itemize}
For arbitrary $m,s\in \mathbb{Z}^{+}$ and $D\in \mathbb{R}$, let $S:=\{\nu_1, \nu_2, \ldots, \nu_m\}$ be a multiset of binary strings, each of length $s$ and let $\mathcal{M}$ be the set of medians for $S$.
Then it is NP-complete to determine if 
\begin{eqnarray}\min_{\mu\in \mathcal{M}}\prod_{i\in[m]} f\left(H(\nu_i,\mu)\right) \leq D . \end{eqnarray}
\label{prop: concave up2}
\end{cor}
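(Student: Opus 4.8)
The plan is to observe that this is an essentially immediate corollary of Theorem~\ref{NP concave up2}, obtained by reusing verbatim the reduction constructed in its proof. First I would check membership in NP: a witness is a single binary string $\mu$ of length $s$. One verifies in time $O(ms)$ that $\mu$ is a median of $S$ (by comparing $\sum_{i} H(\nu_i,\mu)$ with the value achieved by the coordinatewise majority-vote median), computes each $H(\nu_i,\mu)\le s$, evaluates $f(H(\nu_i,\mu))$ in polynomial time using the hypothesis that $f$ is polynomial-time computable, forms the product $\prod_{i\in[m]} f(H(\nu_i,\mu))$, and compares it with $D$. Hence deciding whether some median attains a product of at most $D$ is in NP.

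For hardness I would reduce from D3SAT, which is NP-complete (shown earlier by reduction from 3SAT). Given a D3CNF $\Gamma$ with $n$ variables and $k$ clauses, run the construction in the proof of Theorem~\ref{NP concave up2}: this produces, in time polynomial in $n$ and $k$, the multiset $\mathcal{D}$ of binary strings together with the quantities $h_3<h_2,h_1,h_0$ (defined from polynomially many evaluations of $f$). Set $D:=h_3$. By that proof, every median $\mu\in\M'_\Gamma$ — i.e., every median corresponding to a satisfying truth assignment for $\Gamma$ — has $\h(\mu)=h_3$, whereas every other median $\mu\in\M$ has $\h(\mu)\ge\min\{h_0,h_1,h_2\}>h_3$. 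Therefore
\[
\min_{\mu\in\mathcal{M}}\ \prod_{i\in[m]} f(H(\nu_i,\mu))\ \le\ D
\iff \M'_\Gamma\ne\emptyset
\iff \Gamma\text{ is satisfiable},
\]
so the map $\Gamma\mapsto(\mathcal{D},D)$ is a polynomial-time many-one reduction from D3SAT to the stated decision problem.

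The only minor wrinkle is that the hypothesis on $f$ — and hence the chain of inequalities $h_3<h_i$ — is assumed only for all but finitely many $n$; for the finitely many exceptional values of $n$, the D3CNF has a bounded number of variables, so satisfiability can be decided by brute force in constant time and these cases are handled separately without affecting the reduction. Beyond that there is no genuinely new obstacle: all of the combinatorial content (the construction of $\mathcal{A}$, $\mathcal{B}_i$, $\mathcal{C}'_i$ from Table~\ref{table: n+5 complement}, and the verification of the bounds in the analogues of Claims~\ref{alpha}, \ref{beta}, \ref{gamma}) lives inside Theorem~\ref{NP concave up2}; the corollary merely reinterprets that "counting" statement as a "minimization" decision statement by taking $D$ to be the common value $h_3$ attained exactly on $\M'_\Gamma$.
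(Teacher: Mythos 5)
Your proposal is correct and is essentially the paper's intended argument: the corollary is stated without a separate proof precisely because, as you observe, the reduction in Theorem~\ref{NP concave up2} already shows that medians in $\mathcal{M}'_\Gamma$ attain the value $h_3$ while all other medians exceed it, so taking $D=h_3$ converts the construction into a many-one reduction from D3SAT, with NP membership following from guessing and verifying a single median. Your handling of the finitely many exceptional values of $n$ is a reasonable (and standard) way to close that small gap.
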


We can also state the following corollaries for functions which are strictly concave down. 

\begin{cor}
 Fix a function $f(x): \mathbb{Z}^{+} \cup \{0\} \rightarrow [0,\infty)$  which satisfies the following properties:
\begin{itemize}
\item $\log f(x)$ is strictly concave down, 
\item the function values can be computed in polynomial time, and 
\item  for all but finitely many $n\in \mathbb{Z}$, $n\geq 2$, 
\[ \frac{f(n-2)[f(n+1)]^3[f(n+2)]^3 f(n+5)}{f(n-1)[f(n)]^3[f(n+3)]^3f(n+4)} \neq 1.\]
\end{itemize}
For arbitrary $m,s\in \mathbb{Z}^+$ and $D\in \mathbb{R}$, let $S:=\{\nu_1, \nu_2, \ldots, \nu_m\}$ be a multiset of binary strings, each of length $s$.
Then it is \#P-complete to determine if how many medians $\mu$ for $S$ have  
\begin{eqnarray}\prod_{i\in[m]} f\left(H(\nu_i,\mu)\right) \geq D .\end{eqnarray}
\label{NP concave down}
\end{cor}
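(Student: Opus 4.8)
The plan is to reduce to the two preceding theorems by passing to the reciprocal function. Set $g(x):=1/f(x)$. Because $\log f$ is strictly concave down it is finite at every argument, so $f$ is strictly positive on the finite set $\{0,1,\dots,s\}$ of values that can actually occur as Hamming distances, and hence $g$ is well defined there and still computable in polynomial time. Membership in \#P is checked exactly as in the proofs of Theorems~\ref{NP concave up} and~\ref{NP concave up2}. For \#P-hardness, first dispose of the trivial case $D\le 0$: since each $f(H(\nu_i,\mu))\ge 0$, every median $\mu$ satisfies $\prod_{i\in[m]}f(H(\nu_i,\mu))\ge D$, so the answer equals the number of medians of $S$, which is $2^{r}$ for $r$ the number of ambiguous coordinates of $S$ and is computable in polynomial time. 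When $D>0$ we have the equivalence
\[\prod_{i\in[m]}f(H(\nu_i,\mu))\ge D\quad\Longleftrightarrow\quad \prod_{i\in[m]}g(H(\nu_i,\mu))\le \tfrac1D,\]
so counting the medians of $S$ with $\prod_{i\in[m]}f(H(\nu_i,\mu))\ge D$ is literally the same problem as counting those with $\prod_{i\in[m]}g(H(\nu_i,\mu))\le 1/D$.

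Next I would record that $g$ meets every structural hypothesis of Theorems~\ref{NP concave up} and~\ref{NP concave up2} save possibly the direction of the key ratio. Indeed $\log g=-\log f$, so by Definition~\ref{concave up} the strict downward concavity of $\log f$ is exactly the strict upward concavity of $\log g$; and the ratio
\[\frac{g(n-2)[g(n+1)]^{3}[g(n+2)]^{3}g(n+5)}{g(n-1)[g(n)]^{3}[g(n+3)]^{3}g(n+4)}\]
is the reciprocal of the corresponding ratio for $f$, hence differs from $1$ for all but finitely many $n\ge 2$.

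The remaining step is the reduction from \#D3SAT, and here lies the only genuine obstacle: the $g$-ratio need not be eventually $>1$, nor eventually $<1$ -- it may lie on either side of $1$ for arbitrarily large $n$ -- so neither Theorem~\ref{NP concave up} nor Theorem~\ref{NP concave up2} can be invoked outright. The fix is to argue per instance. Inspection of the proofs of those two theorems shows that, apart from the quotient $\gamma_{bad}/\gamma_{good}$, every inequality used is an instance of Lemma~\ref{concavity2}/Fact~\ref{fact: min product} and so requires only strict concavity of $\log g$, which holds at every argument; and $\gamma_{bad}/\gamma_{good}$ is, up to the shift $n\mapsto n+1$, precisely the displayed ratio, evaluated at a single argument determined by the number of variables of the instance. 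So, given a D3CNF $\Gamma$ with $n$ variables and $k$ clauses, we may assume $n$ is large enough that this one value of the $g$-ratio is $\ne1$ -- the finitely many exceptional instances have a bounded number of variables, hence \#D3SAT is solvable on them in polynomial time by brute force over the $O(1)$ truth assignments -- and we then compute that ratio in polynomial time. If it exceeds $1$ we run the construction from the proof of Theorem~\ref{NP concave up} with $g$ playing the role of $f$; if it is smaller than $1$ we run the construction from the proof of Theorem~\ref{NP concave up2} instead. In either case this produces, in polynomial time, a multiset $\mathcal D$ of binary strings together with a threshold $h_{3}>0$ -- a polynomial-time computable function of the values of $g$, of $n$, and of $k$ -- such that the number of medians $\mu$ of $\mathcal D$ with $\prod_{\eta\in\mathcal D}g(H(\eta,\mu))\le h_{3}$ equals the number of satisfying truth assignments of $\Gamma$. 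Setting $D:=1/h_{3}$ and invoking the equivalence of the first paragraph, an oracle counting the medians of $\mathcal D$ with $\prod_{\eta\in\mathcal D}f(H(\eta,\mu))\ge D$ returns the number of satisfying assignments of $\Gamma$, which completes the reduction.
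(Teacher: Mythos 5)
Your proposal is correct, and at its core it is the same reduction the paper uses: replace $f$ by $g=1/f$, note that $\log g$ is strictly concave up and that the product condition $\prod_i f(H(\nu_i,\mu))\ge D$ is equivalent to $\prod_i g(H(\nu_i,\mu))\le 1/D$, and then invoke the hardness machinery of Theorems~\ref{NP concave up} and~\ref{NP concave up2}. Where you go beyond the paper is in how you invoke that machinery: the paper's proof is a two-line appeal to the two theorems applied to $1/f$, even though the corollary's hypothesis (ratio $\neq 1$ for all but finitely many $n$) does not guarantee the ratio is eventually on one fixed side of $1$, which is what each of those theorems literally requires. Your per-instance dispatch fixes this cleanly: you observe that in both proofs the ratio hypothesis enters only through the single quantity $\gamma_{bad}/\gamma_{good}$, evaluated (after the shift $n\mapsto n+1$) at the number of variables of the given D3CNF, while every other inequality comes from Lemma~\ref{concavity2} and Fact~\ref{fact: min product} and needs only strict concavity; so you compute that one value, choose the construction of Theorem~\ref{NP concave up} or of Theorem~\ref{NP concave up2} accordingly, and brute-force the finitely many exceptional instance sizes. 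You also supply details the paper leaves implicit (positivity of $f$ so that $g$ is defined, the trivial case $D\le 0$, and \#P membership). So the approach is the paper's, but your write-up repairs a genuine imprecision in the published argument rather than merely reproducing it.
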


\begin{proof}
If the function $f(x)$ has the property that $\log f(x)$ is strictly concave down, then $\log \frac{1}{f(x)}$ is strictly concave up.  Therefore by Theorems~\ref{NP concave up} and ~\ref{NP concave up2} for the function $\frac{1}{f(x)}$, it is \#P-hard to determine the number of medians $\mu$ which satisfy $\prod_{i\in [m]}\frac{1}{f(H(\nu_i, \mu))} \leq \frac{1}{D}$. This is equivalent to asking for the number of medians $\mu$ have $\prod_{i\in [m]}{f(H(\nu_i, \mu))} \geq {D}$. 
\end{proof}

\begin{cor}
 Fix a function $f(x): \mathbb{Z}^{+} \cup \{0\} \rightarrow [0,\infty)$  which satisfies the following properties:
\begin{itemize}
\item $\log f(x)$ is strictly concave down, 
\item the function values of $f$ can be computed in polynomial time, and 
\item  for all but finitely many $n\in \mathbb{Z}$, $n\geq 2$, 
\[ \frac{f(n-2)[f(n+1)]^3[f(n+2)]^3 f(n+5)}{f(n-1)[f(n)]^3[f(n+3)]^3f(n+4)} \neq 1.\]
\end{itemize}
For arbitrary $m,s\in \mathbb{Z}^+$ and $D\in \mathbb{R}$, let $S:=\{\nu_1, \nu_2, \ldots, \nu_m\}$ be a multiset of binary strings, each of length $s$ where $\mathcal{M}$ is the set of medians for $S$.
Then it is NP-complete to determine if 
\begin{eqnarray}\min_{\mu\in \mathcal{M}} \prod_{i\in[m]} f\left(H(\nu_i,\mu)\right) \geq D . \end{eqnarray}
\end{cor}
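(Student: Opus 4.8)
The plan is to mirror the proof of Corollary~\ref{NP concave down} almost verbatim, but one level down: instead of deducing a counting (\#P) statement from Theorems~\ref{NP concave up} and \ref{NP concave up2}, we deduce a decision (NP) statement from Corollaries~\ref{prop: concave up} and \ref{prop: concave up2}. Membership in the relevant class is routine for exactly the reasons given in those corollaries: given a candidate median $\mu$, one verifies it is a median in time $O(ms)$, computes each $H(\nu_i,\mu)$ in time $O(s)$, evaluates $f$ at these (polynomially bounded) arguments in polynomial time by hypothesis, and compares the resulting product with $D$.

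For hardness, put $g(x):=1/f(x)$. Then $\log g=-\log f$ is strictly concave up, $g$ is polynomial-time computable, and the third hypothesis on $f$ becomes the corresponding hypothesis for $g$ with the displayed ratio inverted; so, according to whether the $f$-ratio is eventually $<1$ or eventually $>1$, the function $g$ meets the hypotheses of Corollary~\ref{prop: concave up} or of Corollary~\ref{prop: concave up2}. In either case it is NP-complete to decide, for a multiset $S=\{\nu_1,\dots,\nu_m\}$ of equal-length binary strings with median set $\mathcal M$ and a real number $D'$, whether $\min_{\mu\in\mathcal M}\prod_{i\in[m]} g(H(\nu_i,\mu))\le D'$.

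Now use the elementary identity $\prod_{i\in[m]} g(H(\nu_i,\mu)) = 1/\prod_{i\in[m]} f(H(\nu_i,\mu))$, valid for each median $\mu$ (we may assume all relevant Hamming distances lie in a range where $f$ is positive, since otherwise $\log f$ is not defined there and the hypotheses fail). Taking $D':=1/D$, the inequality $\prod_i g(H(\nu_i,\mu))\le 1/D$ is equivalent to $\prod_i f(H(\nu_i,\mu))\ge D$, and passing to the extremal median converts $\min_\mu\prod_i g(H(\nu_i,\mu))\le 1/D$ into the statement about $f$ claimed in the corollary. Since this reduction only rescales the target value and leaves the input strings untouched, it runs in polynomial time, so NP-hardness transfers. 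The only points that need care are (i) verifying that the reciprocal ratio condition is eventually one-sided, so that exactly one of Corollaries~\ref{prop: concave up} and \ref{prop: concave up2} applies, and (ii) keeping track of how the inequality direction and the extremum behave under the reciprocal; the genuine combinatorial work has already been absorbed into Theorems~\ref{NP concave up} and \ref{NP concave up2}.
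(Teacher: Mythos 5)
Your route is the one the paper itself intends: this corollary is stated without a separate proof, and the proof of Corollary~\ref{NP concave down} just above it is precisely the reciprocal trick you describe, applied there to the counting theorems; applying it instead to Corollaries~\ref{prop: concave up} and~\ref{prop: concave up2} is clearly the intended argument, so in substance you match the paper.

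However, the one step you label as ``needing care'' and then assert rather than verify is exactly where the literal statement slips. Reciprocation exchanges minima and maxima: since $\prod_i g(H(\nu_i,\mu)) = 1/\prod_i f(H(\nu_i,\mu))$, the condition $\min_{\mu}\prod_i g(H(\nu_i,\mu))\le 1/D$ is equivalent to $\max_{\mu}\prod_i f(H(\nu_i,\mu))\ge D$, i.e.\ to the existence of a median whose $f$-product is at least $D$ --- not to the printed condition $\min_{\mu}\prod_i f(H(\nu_i,\mu))\ge D$, which demands the bound for \emph{every} median and is a universally quantified (coNP-flavoured) assertion for which your single-median certificate does not even give membership in NP. So your claim that the conversion yields ``the statement about $f$ claimed in the corollary'' does not literally hold; what your reduction proves is the max/existential form, which is evidently what the paper means (the printed ``min'' reads like a copy of the concave-up corollaries with only the inequality reversed), and your NP-membership paragraph is only coherent under that reading. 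A second, smaller point: the hypothesis that the ratio is $\ne 1$ for all but finitely many $n$ does not by itself force it to be eventually one-sided, since its logarithm is a mixed difference of the increments of $\log f$ and could in principle change sign infinitely often; the clean repair is to observe that, $f$ being polynomial-time computable, one can evaluate the ratio at the instance's value of $n$ and invoke whichever of Corollaries~\ref{prop: concave up} and~\ref{prop: concave up2} (i.e.\ whichever of the constructions behind Theorems~\ref{NP concave up} and~\ref{NP concave up2}) matches that sign. Both imprecisions are inherited from the paper; neither affects the underlying reduction, but a complete write-up should state the decision problem in the max/existential form and handle the sign selection explicitly.
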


\section{Stochastic Approximations for $Z(B,f(x))$}
\label{sec:general approx}
We have seen several proofs showing that it is hard to calculate many of these quantities. As in Section~\ref{sec:factorial approx}, we may further ask if any of these quantities can be approximated. We will again focus on approximations via an FPRAS (Definition~\ref{defn:FPRAS}).

Before stating our results, we define a couple more complexity classes for decision problems: 

\begin{defn}[\cite{gill}]
A decision problem, A, is in the class RP (randomized polynomial time) if there is a probabilistic Turing machine that runs in polynomial time in the size of the input, returns ``true'' with probability at least $\frac{1}{2}$ when the answer for A is true, and returns ``false'' with probability 1 when the answer for A is false. 
\label{defn: RP}
\end{defn}

\begin{defn}[\cite{gill}]
A decision problem, A, is in the class BPP (bounded-error probabilistic polynomial time) if there is a probabilistic Turing machine that runs in polynomial time in the size of the input, returns ``true'' with probability at least $\frac{2}{3}$ when the answer for A is true, and returns ``false'' with probability $\frac{2}{3}$ when the answer for A is false. 
\label{defn: BPP}
\end{defn}

One result connecting these classes is the following: 
\begin{thm}[\cite{papadi}]
If the intersection of NP and BPP is non-empty, then RP=NP. 
\label{thm: BPP}
\end{thm}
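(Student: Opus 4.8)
The plan is to reduce the statement to the single assertion that $\mathrm{SAT}\in\mathrm{RP}$: this immediately gives $\mathrm{NP}\subseteq\mathrm{RP}$, since $\mathrm{RP}$ is closed under polynomial-time many-one reductions and $\mathrm{SAT}$ is $\mathrm{NP}$-complete, while $\mathrm{RP}\subseteq\mathrm{NP}$ holds unconditionally (an accepting computation of an $\mathrm{RP}$ machine is an $\mathrm{NP}$-witness), and hence $\mathrm{RP}=\mathrm{NP}$. Since the literal hypothesis that $\mathrm{NP}\cap\mathrm{BPP}\neq\emptyset$ is always true (it contains $\mathrm{P}$), I read it, as in \cite{papadi}, as the assertion that some $\mathrm{NP}$-complete language lies in $\mathrm{BPP}$; composing, for an arbitrary $L'\in\mathrm{NP}$, its polynomial-time reduction to that language with the $\mathrm{BPP}$ decider shows $L'\in\mathrm{BPP}$, so in fact $\mathrm{NP}\subseteq\mathrm{BPP}$ and in particular $\mathrm{SAT}\in\mathrm{BPP}$.

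The first concrete step is amplification: running the $\mathrm{BPP}$ machine for $\mathrm{SAT}$ a polynomial number of times and taking the majority vote, a Chernoff estimate produces a polynomial-time randomized algorithm $A$ whose error probability on any Boolean formula of size at most the input size is below $\tfrac{1}{4n}$, where $n$ is the number of variables of the input formula. The second, and central, step is the self-reduction of $\mathrm{SAT}$. On input $\phi$ with variables $v_1,\dots,v_n$, process the variables in order: having fixed a partial assignment $\rho$ to $v_1,\dots,v_{i-1}$, call $A$ on $\phi|_{\rho,\,v_i=0}$ and extend $\rho$ by $v_i:=0$ if $A$ answers ``satisfiable'' and by $v_i:=1$ otherwise. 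After $n$ calls this yields a full assignment $\tau$; then \emph{deterministically} evaluate $\phi$ on $\tau$ and output $\tau$ if $\phi(\tau)$ is true, and ``unsatisfiable'' otherwise.

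It then remains to check that this is an $\mathrm{RP}$ algorithm for $\mathrm{SAT}$. If $\phi$ is unsatisfiable, no assignment satisfies it, so the deterministic final check always fails and the algorithm always outputs ``unsatisfiable'': there are no false positives, as required. If $\phi$ is satisfiable, consider the run in which every call to $A$ returns the correct answer; a one-line induction — if $\phi|_\rho$ is satisfiable and $\phi|_{\rho,\,v_i=0}$ is not, then $\phi|_{\rho,\,v_i=1}$ is — shows that the maintained assignment always stays extendable to a satisfying one, so the final $\tau$ satisfies $\phi$ and is output. By a union bound over the $n$ calls this event has probability at least $1-n\cdot\tfrac{1}{4n}=\tfrac{3}{4}>\tfrac{1}{2}$, so $\mathrm{SAT}\in\mathrm{RP}$, and the reduction to $\mathrm{RP}=\mathrm{NP}$ described above closes the argument.

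I expect the only genuinely delicate point to be the correctness of the self-reduction, and within it the observation that the \emph{deterministic} final verification is exactly what converts the two-sided error of $A$ into the one-sided error demanded by $\mathrm{RP}$, together with the bookkeeping that only polynomially many calls to $A$ occur along the relevant computation path so that the union bound is affordable after amplification; the closure properties of $\mathrm{BPP}$ and $\mathrm{RP}$ under polynomial-time reductions and the inclusion $\mathrm{RP}\subseteq\mathrm{NP}$ are routine.
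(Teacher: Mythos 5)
The paper itself offers no proof of this statement: it is quoted as a known theorem with a citation to Papadimitriou, so there is nothing internal to compare against. Your argument is the standard proof of that cited result (read, as you correctly note, with the hypothesis meaning that an NP-complete language lies in BPP): amplify the BPP error to below $\tfrac{1}{4n}$, use the self-reducibility of SAT to build an assignment variable by variable, and let the deterministic final evaluation of $\phi$ on $\tau$ convert two-sided into one-sided error, giving $\mathrm{SAT}\in\mathrm{RP}$ and hence $\mathrm{RP}=\mathrm{NP}$; this is correct. The only point worth making explicit is that the $n$ queries are adaptive, but since the amplified error bound holds for every queried formula with fresh independent randomness, the union bound you invoke is still valid.
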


Note that each result below holds for functions $f(x)$ with $\log f(x)$ strictly concave down. The analogous results for the functions whose logarithm is concave up are still open. Our first result can be interpreted as sampling medians for \#SPSCJ with a probability distribution analogous to the number of scenarios, but dependent on $f(x)$.

\begin{thm}
Fix a function $f(x): \mathbb{Z}^{+} \cup \{0\} \rightarrow [0,\infty)$  which satisfies the following properties:
\begin{itemize}
\item $\log f(x)$ is strictly concave down,
\item the function values of $f$ can be computed in polynomial time, and 
\item  there exists $\epsilon>0$ such that for all but finitely many $n\in \mathbb{Z}$, $n\geq 2$, 
\[ \frac{f(n-2)[f(n+1)]^3[f(n+2)]^3 f(n+5)}{f(n-1)[f(n)]^3[f(n+3)]^3f(n+4)} < 1-\epsilon.\]
\end{itemize}
For arbitrary $m,s\in \mathbb{Z}^+$, let $S:=\{\nu_1, \nu_2, \ldots, \nu_m\}$ be a multiset of binary strings, each of length $s$. 
If there is a rapidly mixing Markov chain with stationary distribution proportional to $\prod_{i\in[m]} f\left(H(\nu_i,\mu)\right)$,  
then RP=NP. 
\label{thm: FPAUS1}
\end{thm}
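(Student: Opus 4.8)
The plan is to turn the hypothesized rapidly mixing Markov chain into a randomized polynomial‑time decision procedure for D3SAT that never errs on unsatisfiable instances, i.e.\ to place D3SAT in RP (Definition~\ref{defn: RP}). Since D3SAT is NP‑complete (proved above) and $\mathrm{RP}\subseteq\mathrm{BPP}$, this produces a language in $\mathrm{NP}\cap\mathrm{BPP}$, and Theorem~\ref{thm: BPP} then forces $\mathrm{RP}=\mathrm{NP}$. (Equivalently, RP is closed under polynomial‑time many‑one reductions, so D3SAT$\,\in\,$RP already gives $\mathrm{NP}\subseteq\mathrm{RP}\subseteq\mathrm{NP}$.) The engine is that a rapidly mixing chain (Theorem~\ref{rapid_mixing}) with stationary law $\theta(\mu)\propto\prod_{i\in[m]}f(H(\nu_i,\mu))$ is, after polynomially many steps started from an arbitrary median, an almost uniform sampler for $\theta$ in the sense of Definition~\ref{defn: FPAUS} with the weights $f$; we may assume its total‑variation error is at most a prescribed constant, say $\tfrac19$.

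Given a D3CNF $\Gamma$ with $n$ variables and $k$ clauses, I would first dispose of the finitely many $n$ for which the hypothesis on $f$ fails by brute force. Otherwise set $R:=\big\lceil (2n+3)/\log_2\tfrac1{1-\epsilon}\big\rceil=O(n/\epsilon)$, let $\Gamma'$ be $\Gamma$ with each clause written out $R$ times — so $\Gamma'$ has the same variables and satisfying assignments, $Rk$ clauses, and $\mathcal M'_{\Gamma'}=\mathcal M'_\Gamma$ — and apply to $\Gamma'$ the reduction of Corollary~\ref{NP concave down}, which, since our ratio is $<1$, runs the construction in the proof of Theorem~\ref{NP concave up} on the function $1/f$. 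This is legitimate: $\log\tfrac1f$ is strictly concave up, $1/f$ is polynomial‑time computable, and the ratio hypothesis of Theorem~\ref{NP concave up} for $1/f$, namely $\frac{f(n-1)[f(n)]^3[f(n+3)]^3f(n+4)}{f(n-2)[f(n+1)]^3[f(n+2)]^3f(n+5)}>\frac1{1-\epsilon}$, is precisely the reciprocal of our hypothesis and so holds with a uniform margin. The output is a multiset $\mathcal D$ of polynomially many binary strings of polynomial length with $\mathcal M(\mathcal D)=\{0,1\}^{2n}\times\{0\}^t$, hence $|\mathcal M(\mathcal D)|=2^{2n}$; write $w(\mu):=\prod_{\eta\in\mathcal D}f(H(\eta,\mu))=1/\prod_{\eta\in\mathcal D}(1/f)(H(\eta,\mu))$, computable in polynomial time.

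The crux is a concentration estimate for $\theta$. Using the quantities $h_0,h_1,h_2,h_3$, $\gamma_{good}$, $\gamma_{bad}$ from the proof of Theorem~\ref{NP concave up} (for $1/f$ and the instance $\Gamma'$), one checks: every $\mu\in\mathcal M'_{\Gamma'}=\mathcal M'_\Gamma$ has the same value $w(\mu)=1/h_3=:w_{good}$, an explicitly computable number; every $\mu\notin\mathcal M'$ has $w(\mu)\le 1/\min\{h_0,h_1\}$, and because the multiplicities of the $\mathcal A$‑ and $\mathcal B$‑blocks scale with the number of clauses one gets $h_0/h_3=(\gamma_{bad}/\gamma_{good})^{Rk}$ and $h_1/h_3\ge(\gamma_{bad}/\gamma_{good})^{Rk}$; and every $\mu\in\mathcal M'\setminus\mathcal M'_{\Gamma'}$ violates a whole group of $R$ identical clauses of $\Gamma'$, so $w(\mu)\le w_{good}(\gamma_{good}/\gamma_{bad})^{R}$. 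Since $\gamma_{bad}/\gamma_{good}>\tfrac1{1-\epsilon}$, in every case $w(\mu)\le w_{good}(1-\epsilon)^{R}$ for $\mu\notin\mathcal M'_\Gamma$, while $2^{2n}(1-\epsilon)^{R}\le\tfrac18$ by the choice of $R$. Hence, whenever $\Gamma$ is satisfiable,
\[
\theta(\mathcal M'_\Gamma)=\frac{|\mathcal M'_\Gamma|\,w_{good}}{\sum_{\mu\in\mathcal M(\mathcal D)}w(\mu)}\ \ge\ \frac{w_{good}}{w_{good}+2^{2n}w_{good}(1-\epsilon)^{R}}\ \ge\ \frac{8}{9}.
\]

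The decision procedure is then: build $\mathcal D$; run the hypothesized chain for $\mathcal D$ for polynomially many steps so that the output median $\mu$ is within total variation $\tfrac19$ of $\theta$; via the correspondence of Definition~\ref{defi: bijection}, test in polynomial time whether $\mu\in\mathcal M'$ and the truth assignment it encodes satisfies $\Gamma$; accept iff it does. If $\Gamma$ is unsatisfiable no median encodes a satisfying assignment, so the procedure rejects with probability $1$; if $\Gamma$ is satisfiable it samples a median of $\mathcal M'_\Gamma$, and hence accepts, with probability at least $\tfrac89-\tfrac19>\tfrac12$. Thus D3SAT$\,\in\,$RP and $\mathrm{RP}=\mathrm{NP}$ as above. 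The step I expect to be the main obstacle is exactly the concentration bound: a single violated clause separates a near‑miss median from a satisfying one only by the constant factor $\gamma_{bad}/\gamma_{good}$, whereas there may be exponentially many barely‑unsatisfying medians, so one must blow the gap past $2^{2n}$ — and the strict margin $1-\epsilon$ on the per‑clause ratio is precisely what lets $R=\Theta(n/\epsilon)$ clause repetitions do this while keeping the instance polynomial (without a uniform gap the required $R$ need not be polynomial).
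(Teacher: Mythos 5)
Your proposal is correct and follows the same overall strategy as the paper: apply the Theorem~\ref{NP concave up} construction to $g=1/f$ (legitimate, since your ratio hypothesis for $f$ is exactly the reciprocal of the one needed for $1/f$), amplify the weight gap between medians in $\mathcal{M}'_\Gamma$ and all other medians until it beats the $2^{2n}$ count of medians, and then observe that a rapidly mixing chain for the stationary law $\theta\propto\prod_i f(H(\nu_i,\cdot))$ yields an RP algorithm for D3SAT, hence RP$=$NP. The one genuine difference is the amplification device: the paper keeps $\Gamma$ fixed and takes $r$ disjoint copies of the whole multiset, $\mathcal{D}(r)=\mathcal{D}\uplus\cdots\uplus\mathcal{D}$, which raises every weight to the $r$-th power and so only needs the single gap $h_2/h_3=\gamma_{bad}/\gamma_{good}>1/(1-\epsilon)$; you instead repeat each clause $R=\Theta(n/\epsilon)$ times before running the reduction, which forces you to control the classes outside $\mathcal{M}'$ separately via $h_0/h_3=(\gamma_{bad}/\gamma_{good})^{Rk}$ and $h_1/h_3\ge(\gamma_{bad}/\gamma_{good})^{Rk}$. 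Those identities do follow from the computations in the proof of Theorem~\ref{NP concave up} (the exact cancellation $\frac{\alpha_{bad}}{\alpha_{good}}\bigl(\frac{\gamma_{min}}{\gamma_{bad}}\bigr)^{Rk}=1$ and the analogous bound for the $\mathcal{B}$-blocks), so your route is sound, just slightly heavier than the paper's $\mathcal{D}(r)$ trick; both blow-ups are polynomial. On the other hand you are more careful than the paper on two points it glosses over: budgeting the sampler's total-variation error (your $1/9$ versus $8/9$ concentration) and brute-forcing the finitely many exceptional $n$.
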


\begin{proof}
Fix a function $f$ as described in the theorem. Because $\log f(x)$ is strictly concave down, $\log (f(x))^{-1}$ is strictly concave up. Set $g(x):=(f(x))^{-1}$.

Now recall the proof of Theorem~\ref{NP concave up} for strictly concave up functions. Take a D3CNF $\Gamma$ with $n$ variables and create a multiset of binary strings, $\mathcal{D}$. The set of medians for $\mathcal{D}$ is $\mathcal{M}=\{0,1\}^{2n} \times \{0\}^t$. There is a one-to-one correspondence between the medians in the subset $\mathcal{M}' = \{01,10\}^n \times \{0\}^t$ and the truth assignments for $\Gamma$. Those medians which correspond to satisfying truth assignments for $\Gamma$ form the set $\mathcal{M}'_{\Gamma}$. The multiset $\mathcal{D}$ is constructed so that each $\mu\in \mathcal{M}'_{\Gamma}$ has 
 \[\prod_{\eta \in \mathcal{D}} \frac{1}{f(H(\eta, \mu))} = \prod_{\eta \in \mathcal{D}} g(H(\eta, \mu)) = \alpha_{good} \beta_{good}^n \gamma_{good}^k =: h_3 \]
 and all other medians have 
 \[\prod_{\eta \in \mathcal{D}} \frac{1}{f(H(\eta, \mu))} = \prod_{\eta \in \mathcal{D}} g(H(\eta, \mu)) > \alpha_{good} \beta_{good}^n \gamma_{bad}\gamma_{good}^{k-1}=: h_2 .\]
 Equivalently, if $\mu\in \mathcal{M}'_\Gamma$, then
  \[\prod_{\eta \in \mathcal{D}} {f(H(\eta, \mu))} = \frac{1}{ h_3 }.\]
  Otherwise, 
  \[\prod_{\eta \in \mathcal{D}} {f(H(\eta, \mu))} < \frac{1}{ h_2 }.\]  
 
 Further, 
 \begin{align*}
 \frac{h_2}{h_3} 
 &= \frac{\gamma_{bad}}{\gamma_{good}} \\
 &= \frac{ g(n-1)[g(n+2)]^3[g(n+3)]^3 g(n+6) }{ g(n)[g(n+1)]^3 [g(n+4)]^3[g(n+5)] }  \\
 &=\frac{ f(n)[f(n+1)]^3 [f(n+4)]^3[f(n+5)] }{f(n-1)[f(n+2)]^3[f(n+3)]^3 f(n+6)}  \\
 &> \frac{1}{ 1-\epsilon}
 \end{align*}
  where the last inequality is a result of the assumption in the theorem statement.
 As a result 
 \[\frac{1}{h_2} \left(\frac{1}{1-\epsilon}\right) < \frac{1}{h_3}.\]
 
 Now select an integer $r$, dependent only on the values of $n$ and $\epsilon$, such that $\left(\frac{1}{1-\epsilon}\right)^r > 2^{2n+2}$.
 Create a new multiset $\mathcal{D}(r)$ of binary strings such that 
 \[\mathcal{D}(r) = \underbrace{\mathcal{D} \uplus \ldots \uplus \mathcal{D}}_{r \text{ times}}.\]
 
The set of medians for $\mathcal{D}(r)$ is the same as the set of medians for $\mathcal{D}$. However this time, for a median $\mu\in\mathcal{M}'_\Gamma$, 
\[\prod_{\eta \in \mathcal{D}(r)} {f(H(\eta, \mu))} = \left(\frac{1}{ h_3 }\right)^r.\]
Otherwise, if $\mu\in \M \setminus \M'_\Gamma$,
  \[\prod_{\eta \in \mathcal{D}(r)} {f(H(\eta, \mu))} < \left(\frac{1}{ h_2 }\right)^r.\]
 By the choice of $r$, 
 \[  \left( \frac{1}{h_2}  \right)^r 2^{2n} <\left( \frac{1}{h_2}  \right)^r 2^{2n+2} <\left( \frac{1}{h_2} \left(\frac{1}{1-\epsilon}\right)\right)^r < \left(\frac{1}{h_3}\right)^r.\]
Since $\mathcal{M}=\{0,1\}^{2n} \times \{0\}^t$, $|\M| = 2^{2n}$ and the above inequality shows that for each $\mu_0\in \mathcal{M}'_\Gamma$, 
\[\prod_{\eta \in \mathcal{D}} {f(H(\eta, \mu_0))} > \sum_{\mu\in \mathcal{M} \setminus \mathcal{M}'_{\Gamma}} \prod_{\eta \in \mathcal{D}} {f(H(\eta, \mu))}.\]
Further,
\[\prod_{\eta \in \mathcal{D}} {f(H(\eta, \mu_0))} > \frac{1}{2} \sum_{\mu\in \mathcal{M}} \prod_{\eta \in \mathcal{D}} {f(H(\eta, \mu))}.\]

 Now suppose that we had a rapidly mixing Markov chain on the medians for this instance as stated in the theorem. From the calculations above, it must sample medians which correspond to satisfying truth assignments for $\Gamma$ with probability at least $\frac{1}{2}$. This is precisely an RP for D3SAT. However, this immediately implies RP=NP because D3SAT is NP-complete. 
\end{proof}

The following theorem gives the same result as the last one for different functions $f$. In particular, it switches the inequality that $f$ is required to satisfy. 
\begin{thm}
Fix a function $f(x): \mathbb{Z}^{+} \cup \{0\} \rightarrow [0,\infty)$  which satisfies the following properties:
\begin{itemize}
\item $\log f(x)$ is strictly concave down,
\item the function values of $f$ can be computed in polynomial time, and 
\item  there exists $\epsilon>0$ such that for all but finitely many $n\in \mathbb{Z}$, $n\geq 2$, 
\[ \frac{f(n-2)[f(n+1)]^3[f(n+2)]^3 f(n+5)}{f(n-1)[f(n)]^3[f(n+3)]^3f(n+4)} > 1+\epsilon.\]
\end{itemize}
For arbitrary $m,s\in \mathbb{Z}^+$, let $S:=\{\nu_1, \nu_2, \ldots, \nu_m\}$ be a multiset of binary strings, each of length $s$.
If there is a rapidly mixing Markov chain with distribution proportional to $\prod_{i\in[m]} f\left(H(\nu_i,\mu)\right)$,  
then RP=NP. 
\end{thm}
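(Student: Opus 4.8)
The plan is to run the argument of Theorem~\ref{thm: FPAUS1} almost verbatim, with Theorem~\ref{NP concave up2} replacing Theorem~\ref{NP concave up}, since the hypothesized ratio now exceeds $1+\epsilon$ rather than lying below $1-\epsilon$. First I would pass to $g(x):=(f(x))^{-1}$, which is computable in polynomial time and has $\log g$ strictly concave up. The hypothesis on $f$ rewrites as
\[ \frac{g(n-2)[g(n+1)]^3[g(n+2)]^3 g(n+5)}{g(n-1)[g(n)]^3[g(n+3)]^3g(n+4)} = \frac{f(n-1)[f(n)]^3[f(n+3)]^3f(n+4)}{f(n-2)[f(n+1)]^3[f(n+2)]^3 f(n+5)} < \frac{1}{1+\epsilon} < 1 \]
for all but finitely many $n$, so $g$ meets the hypotheses of Theorem~\ref{NP concave up2}.

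Given a D3CNF $\Gamma$ with $n$ variables and $k$ clauses, I would then carry out the construction in the proof of Theorem~\ref{NP concave up2} applied to $g$. This produces a multiset $\mathcal{D}$ of binary strings whose median set is $\mathcal{M}=\{0,1\}^{2n}\times\{0\}^t$, with the subset $\mathcal{M}'_\Gamma$ in bijection (Definition~\ref{defi: bijection}) with the satisfying assignments of $\Gamma$, together with quantities $h_0,h_1,h_2,h_3$ for which $\prod_{\eta\in\mathcal{D}} g(H(\eta,\mu)) = h_3$ whenever $\mu\in\mathcal{M}'_\Gamma$ and $\prod_{\eta\in\mathcal{D}} g(H(\eta,\mu)) \ge h_2$ for every other median (using $h_3<h_2\le h_1$ and $h_3<h_2\le h_0$). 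The quantitative ingredient I need, the $g$-analogue of Fact~\ref{gamma comp} within the proof of Theorem~\ref{NP concave up2}, is
\[ \frac{h_2}{h_3} = \frac{\gamma_{bad}}{\gamma_{good}} = \frac{g(n)[g(n+1)]^3[g(n+4)]^3 g(n+5)}{g(n-1)[g(n+2)]^3[g(n+3)]^3 g(n+6)} > 1+\epsilon, \]
the last inequality being the reciprocal of the displayed $g$-ratio above with $n$ replaced by $n+1$. Translating back to $f$: each $\mu\in\mathcal{M}'_\Gamma$ has $\prod_{\eta\in\mathcal{D}} f(H(\eta,\mu)) = h_3^{-1}$, while every other median has $\prod_{\eta\in\mathcal{D}} f(H(\eta,\mu)) \le h_2^{-1}$, and $h_2^{-1}(1+\epsilon) < h_3^{-1}$.

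Finally I would amplify by replication, exactly as in Theorem~\ref{thm: FPAUS1}: choose an integer $r=O(n)$ (depending only on $n$ and $\epsilon$) with $(1+\epsilon)^r > 2^{2n+2}$ and let $\mathcal{D}(r)$ consist of $r$ disjoint copies of $\mathcal{D}$. The medians are unchanged, and now $\prod_{\eta\in\mathcal{D}(r)} f(H(\eta,\mu)) = h_3^{-r}$ for $\mu\in\mathcal{M}'_\Gamma$ whereas $\prod_{\eta\in\mathcal{D}(r)} f(H(\eta,\mu)) \le h_2^{-r}$ for all other medians, with $2^{2n}\,h_2^{-r} < h_3^{-r}$. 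Since $|\mathcal{M}|=2^{2n}$, for any $\mu_0\in\mathcal{M}'_\Gamma$ we get $\prod_{\eta\in\mathcal{D}(r)} f(H(\eta,\mu_0)) > \tfrac12\sum_{\mu\in\mathcal{M}}\prod_{\eta\in\mathcal{D}(r)} f(H(\eta,\mu))$. Hence any distribution proportional to $\prod_{\eta\in\mathcal{D}(r)} f(H(\eta,\mu))$ places total probability at least $\tfrac12$ on $\mathcal{M}'_\Gamma$, and probability $0$ there when $\Gamma$ is unsatisfiable. A rapidly mixing Markov chain with this stationary distribution therefore yields, after polynomially many steps, a median lying in $\mathcal{M}'_\Gamma$ with probability at least $\tfrac13$ (say); since membership in $\mathcal{M}'_\Gamma$, and hence the corresponding satisfying assignment, is checkable in polynomial time, this is an RP algorithm for D3SAT. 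As D3SAT is NP-complete, RP$=$NP. The whole argument is bookkeeping; the one place to be careful is the second paragraph, namely confirming that inverting $f$ turns the hypothesized $f$-ratio into precisely the $g$-ratio of Theorem~\ref{NP concave up2} with the inequality correctly reversed, and that the resulting gap satisfies $h_2/h_3 > 1+\epsilon$ strictly with the stated $\epsilon$ — not merely $>1$ — so that the replication count $r$ stays polynomially bounded and $\mathcal{D}(r)$ has polynomial size.
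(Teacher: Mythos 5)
Your proposal is correct and takes essentially the same route as the paper, whose proof of this theorem simply reruns the argument of Theorem~\ref{thm: FPAUS1} using the construction of Theorem~\ref{NP concave up2}; your verification that inverting $f$ and shifting $n\mapsto n+1$ turns the hypothesis into $h_2/h_3=\gamma_{bad}/\gamma_{good}>1+\epsilon$ is exactly the detail the paper leaves implicit. (Your intermediate claim that a single median in $\mathcal{M}'_\Gamma$ exceeds half the total sum is stated in the same loose way as in the paper and is not literally what is needed when $|\mathcal{M}'_\Gamma|\ge 2$, but the conclusion you actually use---stationary mass at least $\tfrac12$ on $\mathcal{M}'_\Gamma$---follows correctly from the comparison with $\sum_{\mu\notin\mathcal{M}'_\Gamma}$.)
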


\begin{proof}
The proof for this theorem follows the same line of reasoning as the proof for Theorem~\ref{thm: FPAUS1}. However, it makes use of details in Theorem~\ref{NP concave up2} rather than Theorem~\ref{NP concave up}.
\end{proof}

When $f$ is a function with $\log f(x)$ is concave down, we examine the possibility of an FPRAS (Definition~\ref{defn:FPRAS}) for $Z(B,f(x))$. 
\begin{thm}
Fix a function $f(x): \mathbb{Z}^{+} \cup \{0\} \rightarrow [0,\infty)$  for which:
\begin{itemize}
\item $\log f(x)$ is strictly concave down,
\item the function values of $f$ can be computed in polynomial time, and 
\item  there exists $\epsilon>0$ such that for all but finitely many $n\in \mathbb{Z}$, $n\geq 2$, 
\[ \frac{f(n-2)[f(n+1)]^3[f(n+2)]^3 f(n+5)}{f(n-1)[f(n)]^3[f(n+3)]^3f(n+4)} < 1-\epsilon.\]
\end{itemize}
For arbitrary $m,s\in \mathbb{Z}^+$, let $S:=\{\nu_1, \nu_2, \ldots, \nu_m\}$ be a multiset of binary strings, each of length $s$.
If there is an FPRAS for calculating 
\[Z(B, f(x)) = \sum_{\mu\in \mathcal{M}} \prod_{i\in[m]} f(H(\nu_i, \mu)),\]
then RP=NP. 
\label{FPRAS1}
\end{thm}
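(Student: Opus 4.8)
The plan is to use the hypothetical FPRAS to decide D3SAT with bounded two‑sided error, thereby placing an NP‑complete problem in BPP, and then invoke Theorem~\ref{thm: BPP} to conclude RP$=$NP. The instance we feed the FPRAS is exactly the amplified construction already built inside the proof of Theorem~\ref{thm: FPAUS1}. Concretely: set $g(x):=(f(x))^{-1}$, so $\log g$ is strictly concave up and, since the $f$‑ratio is $<1-\epsilon$, the corresponding $g$‑ratio exceeds $\tfrac{1}{1-\epsilon}>1$; thus $g$ satisfies the hypotheses of Theorem~\ref{NP concave up}. Given a D3CNF $\Gamma$ with $n$ variables and $k$ clauses, form the multiset $\mathcal{D}$ as in the proof of Theorem~\ref{NP concave up} applied to $g$, and then the amplified multiset $\mathcal{D}(r)=\mathcal{D}\uplus\cdots\uplus\mathcal{D}$ consisting of $r$ disjoint copies, each carrying its own block of additional‑ones coordinates, where $r$ is the least integer with $(1-\epsilon)^{-r}>2^{2n+2}$. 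Because $\epsilon$ is a fixed constant, $r=\Theta(n)$, so $\mathcal{D}(r)$ has polynomially many strings of polynomial length and is computable in polynomial time.

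Next I would record the dichotomy, which is immediate from the bounds established in the proof of Theorem~\ref{thm: FPAUS1}. Write $h_2:=\alpha_{good}\beta_{good}^{\,n}\gamma_{bad}\gamma_{good}^{\,k-1}$ and $h_3:=\alpha_{good}\beta_{good}^{\,n}\gamma_{good}^{\,k}$, where $\alpha_{good},\beta_{good},\gamma_{good},\gamma_{bad}$ are the quantities of Claims~\ref{alpha}, \ref{beta}, \ref{gamma} for $g$; all of these, and hence the threshold $\tau:=2^{2n+1}(1/h_2)^r$, are computable in polynomial time from $f$. Each median $\mu$ of $\mathcal{D}(r)$ satisfies $\prod_{\eta\in\mathcal{D}(r)}f(H(\eta,\mu))=(1/h_3)^r$ when $\mu\in\mathcal{M}'_\Gamma$, and $\prod_{\eta\in\mathcal{D}(r)}f(H(\eta,\mu))<(1/h_2)^r$ otherwise, while $|\mathcal{M}|=2^{2n}$. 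Hence, writing $Z:=Z(\mathcal{D}(r),f(x))$, if $\Gamma$ is satisfiable then $Z\ge (1/h_3)^r>(1-\epsilon)^{-r}(1/h_2)^r>2\tau$, whereas if $\Gamma$ is unsatisfiable then $\mathcal{M}'_\Gamma=\emptyset$ and $Z<2^{2n}(1/h_2)^r=\tau/2$.

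Then I would run the assumed FPRAS on the instance $\mathcal{D}(r)$ with $\epsilon'=\tfrac{1}{2}$ and $\delta'=\tfrac{1}{3}$; it runs in time polynomial in $|\mathcal{D}(r)|$ (the parameters $(\epsilon')^{-1}$ and $-\log\delta'$ being constants), hence polynomial in $n$ and $k$, and outputs a number $W$ with $\tfrac{2}{3}Z\le W\le \tfrac{3}{2}Z$ with probability at least $\tfrac{2}{3}$. Accept $\Gamma$ precisely when $W>\tau$. In the satisfiable case $W\ge\tfrac{2}{3}Z>\tfrac{4}{3}\tau>\tau$ and in the unsatisfiable case $W\le\tfrac{3}{2}Z<\tfrac{3}{4}\tau<\tau$, so the answer is correct with probability at least $\tfrac{2}{3}$ in both cases. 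This is a BPP algorithm for D3SAT; since D3SAT is NP‑complete, NP$\cap$BPP$\neq\emptyset$, and Theorem~\ref{thm: BPP} yields RP$=$NP.

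I expect the only genuine subtlety to be bookkeeping rather than a new idea. The key point is that $r$, and therefore the size of $\mathcal{D}(r)$, stays polynomial: this is exactly where the strengthened hypothesis ``$<1-\epsilon$'' is needed, because a separation $\gamma_{bad}/\gamma_{good}$ that merely exceeds $1$ could be subexponentially small in $n$ and force an exponential $r$. One must also check that $\tau$ is polynomial‑time computable and that the amplified instance $\mathcal{D}(r)$ has exactly the same set of medians as $\mathcal{D}$ with each copy contributing independently — which is why the additional‑ones coordinates of the copies must be kept disjoint, so that no new conflicts and no new medians are introduced. Finally, the reason the conclusion is RP$=$NP rather than a direct NP$=$RP‑type statement via a one‑sided test is that the FPRAS error is two‑sided (the unsatisfiable case gives $Z$ small, not zero), which only lands D3SAT in BPP; Theorem~\ref{thm: BPP} is precisely what bridges that gap.
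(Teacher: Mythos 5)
Your proposal is correct and follows essentially the same route as the paper: it reuses the amplified multiset $\mathcal{D}(r)$ from the proof of Theorem~\ref{thm: FPAUS1}, exploits the gap $(1/h_2)^r 2^{2n+2} < (1/h_3)^r$ guaranteed by the choice of $r$, runs the assumed FPRAS with constant accuracy and confidence to separate satisfiable from unsatisfiable D3CNFs, and concludes via the BPP membership of D3SAT and Theorem~\ref{thm: BPP} that RP$=$NP. The only differences are cosmetic (accuracy parameter $1/2$ instead of $1$, an explicit threshold $\tau$, and an explicit remark that $r=\Theta(n)$ keeps the instance polynomial, a point the paper leaves implicit).
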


\begin{proof}
Let $r$ be an integer so that $\left(\frac{1}{1-\epsilon}\right)^{r} > 2^{2n+2}$. In the proof of Theorem~\ref{thm: FPAUS1}, we created a new multiset of strings $\mathcal{D}(r)$. The set of medians $\mathcal{M}$ for $\mathcal{D}(r)$ is precisely $\{0,1\}^{2n}\times \{0\}^t$ and each median $\mu \in \M'_\Gamma$ which corresponds to a satisfying truth assignment for $\Gamma$ has 
\[\prod_{\eta \in \mathcal{D}(r)} {f(H(\eta, \mu))} = \left(\frac{1}{ h_3 }\right)^r.\] 
All other medians have
  \[\prod_{\eta \in \mathcal{D}(r)} {f(H(\eta, \mu))} < \left(\frac{1}{ h_2 }\right)^r.\] 
  Therefore, if $\Gamma$ has no satisfying assignments, 
  \[\sum_{\mu\in \mathcal{M}} \prod_{\eta \in \mathcal{D}(r)} {f(H(\eta, \mu))} < 2^{2n} \left(\frac{1}{ h_2 }\right)^r.\]
  If there is a satisfying assignment for $\Gamma$, then 
  \[\sum_{\mu\in \mathcal{M}} \prod_{\eta \in \mathcal{D}(r)} {f(H(\eta, \mu))} \geq  \left(\frac{1}{ h_3 }\right)^r.\]
  By the choice of $r$, we have the following inequality to relate the two quantities:
  \[\left( \frac{1}{h_2}  \right)^r 2^{2n+2} <  \left(\frac{1}{h_3}\right)^r.\]
  
 Now suppose that there is an FPRAS for $T:=\sum_{\mu\in \mathcal{M}} \prod_{\eta \in \mathcal{D}(r)} {f(H(\eta, \mu))}$. In other words, for any $\epsilon, \delta >0$, there is a randomized algorithm as 
described in Definition~\ref{defn:FPRAS} which outputs a quantity $\hat T$ such that 
\[P\left( \frac{T}{1+\epsilon} \leq \hat T \leq T(1+\epsilon) \right) \geq 1-\delta.\]
Consider the case when $\delta=\frac{1}{3}$ and $\epsilon = 1$. Therefore,
\[P\left( \frac{1}{2}T \leq \hat T \leq 2T \right) \geq \frac{2}{3}.\]
Therefore, if $\Gamma$ can be satisfied, then $T \geq  \left(\frac{1}{ h_3 }\right)^r$ and the probability that $\hat T$ is at least $\frac{1}{2}T =\frac{1}{2}\left(\frac{1}{ h_3 }\right)^r > 2^{2n+1}\left(\frac{1}{ h_2 }\right)^r$ is $\frac{2}{3}$. On the other hand, if $\Gamma$ cannot be satisfied, then $T< 2^{2n} \left(\frac{1}{ h_2 }\right)^r$ and the probability that $\hat T$ is at most $2T = 2^{2n+1}  \left(\frac{1}{ h_2 }\right)^r$ is $\frac{2}{3}$. Therefore, we have a BPP algorithm (Definition~\ref{defn: BPP}) for D3SAT. 
Because D3SAT is NP-complete, Papadimitriou's Theorem~\ref{thm: BPP} implies RP=NP. 
 \end{proof}

A similar result holds for functions $f(x)$ which satisfy the opposite inequality. We do not give a proof as it follows the same reasoning in the proof of Theorem~\ref{FPRAS1}.
\begin{thm}
Fix a function $f(x): \mathbb{Z}^{+} \cup \{0\} \rightarrow [0,\infty)$  for which:
\begin{itemize}
\item $\log f(x)$ is strictly concave down,
\item the function values of $f$ can be computed in polynomial time, and 
\item  there exists $\epsilon>0$ such that for all but finitely many $n\in \mathbb{Z}$, $n\geq 2$, 
\[ \frac{f(n-2)[f(n+1)]^3[f(n+2)]^3 f(n+5)}{f(n-1)[f(n)]^3[f(n+3)]^3f(n+4)} > 1+\epsilon.\]
\end{itemize}
For arbitrary $m,s\in \mathbb{Z}^+$, let $S:=\{\nu_1, \nu_2, \ldots, \nu_m\}$ be a multiset of binary strings, each of length $s$.
If there is an FPRAS for calculating 
\[Z(B,f(x))=\sum_{\mu\in \mathcal{M}} \prod_{i\in[m]} f(H(\nu_i, \mu)),\]
then RP=NP. 
\end{thm}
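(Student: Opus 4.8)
The plan is to repeat the argument of Theorem~\ref{FPRAS1} essentially verbatim, substituting Theorem~\ref{NP concave up2} for Theorem~\ref{NP concave up} --- the same substitution that distinguishes Theorem~\ref{thm: FPAUS1} from its successor. Since $\log f(x)$ is strictly concave down, $g(x):=(f(x))^{-1}$ has $\log g(x)$ strictly concave up, and $g$ is still computable in polynomial time. Replacing $n$ by $n+1$ in the displayed hypothesis on $f$ and rewriting it in terms of $g$, one checks that
\[\frac{g(n-2)[g(n+1)]^3[g(n+2)]^3 g(n+5)}{g(n-1)[g(n)]^3[g(n+3)]^3 g(n+4)} < \frac{1}{1+\epsilon} < 1\]
for all but finitely many $n\geq 2$, so $g$ satisfies the hypotheses of Theorem~\ref{NP concave up2}. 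I would then fix an arbitrary D3CNF $\Gamma$ with $n$ variables and $k$ clauses and feed $g$ and $\Gamma$ into the construction in the proof of Theorem~\ref{NP concave up2}. This produces a multiset $\mathcal{D}$ of binary strings whose median set is $\mathcal{M}=\{0,1\}^{2n}\times\{0\}^t$, with the subset $\mathcal{M}'=\{01,10\}^{n}\times\{0\}^t$ in bijection with the truth assignments of $\Gamma$ and $\mathcal{M}'_\Gamma\subseteq\mathcal{M}'$ corresponding to the satisfying ones. That proof shows $\prod_{\eta\in\mathcal{D}} g(H(\eta,\mu))=h_3$ for every $\mu\in\mathcal{M}'_\Gamma$ and $\prod_{\eta\in\mathcal{D}} g(H(\eta,\mu))\geq h_2$ for every other median, where $h_2/h_3=\gamma_{bad}/\gamma_{good}$; by the index shift above, $h_2/h_3>1+\epsilon$. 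Inverting, every $\mu\in\mathcal{M}'_\Gamma$ has $\prod_{\eta\in\mathcal{D}} f(H(\eta,\mu))=h_3^{-1}$, while every other median has $\prod_{\eta\in\mathcal{D}} f(H(\eta,\mu))\leq h_2^{-1}<(1+\epsilon)^{-1}h_3^{-1}$.

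Next I would amplify this gap as in Theorem~\ref{FPRAS1}. Let $r$ be the least positive integer with $(1+\epsilon)^r>2^{2n+2}$; since $\epsilon$ is a fixed constant, $r=O(n)$, so $r$ and the multiset $\mathcal{D}(r):=\underbrace{\mathcal{D}\uplus\cdots\uplus\mathcal{D}}_{r\text{ times}}$ have size polynomial in $n$ and $k$. The median set of $\mathcal{D}(r)$ is again $\mathcal{M}$, and now each $\mu\in\mathcal{M}'_\Gamma$ has $\prod_{\eta\in\mathcal{D}(r)} f(H(\eta,\mu))=h_3^{-r}$, while every other median has $\prod_{\eta\in\mathcal{D}(r)} f(H(\eta,\mu))\leq h_2^{-r}$, and $2^{2n+2}h_2^{-r}<h_3^{-r}$. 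Writing $T:=\sum_{\mu\in\mathcal{M}}\prod_{\eta\in\mathcal{D}(r)} f(H(\eta,\mu))$ and using $|\mathcal{M}|=2^{2n}$, we conclude $T\geq h_3^{-r}$ when $\Gamma$ is satisfiable and $T\leq 2^{2n}h_2^{-r}$ when it is not, the two ranges being separated by a factor larger than $4$.

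Finally, suppose $Z(B,f(x))$ admits an FPRAS and run it on $\mathcal{D}(r)$ with $\delta=\tfrac13$ and approximation parameter $1$, obtaining $\hat T$ with $P(\tfrac12 T\leq\hat T\leq 2T)\geq\tfrac23$. If $\Gamma$ is satisfiable, then with probability at least $\tfrac23$ we have $\hat T\geq\tfrac12 h_3^{-r}>2^{2n+1}h_2^{-r}$; if $\Gamma$ is unsatisfiable, then with probability at least $\tfrac23$ we have $\hat T\leq 2T\leq 2^{2n+1}h_2^{-r}$. Comparing $\hat T$ with the threshold $2^{2n+1}h_2^{-r}$ therefore decides D3SAT with two-sided bounded error, so $\mathrm{D3SAT}\in\mathrm{BPP}$; since D3SAT is NP-complete, Theorem~\ref{thm: BPP} yields $\mathrm{RP}=\mathrm{NP}$. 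The only genuinely new bookkeeping relative to Theorem~\ref{FPRAS1} is checking that inverting $f$ and shifting the index turns the displayed hypothesis into precisely the one demanded by Theorem~\ref{NP concave up2}; everything else is a verbatim repeat, so I anticipate no real obstacle.
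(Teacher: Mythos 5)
Your proposal is correct and is exactly the argument the paper intends: the paper omits this proof, stating only that it "follows the same reasoning in the proof of Theorem~\ref{FPRAS1}," with the construction of Theorem~\ref{NP concave up2} (for $g=1/f$) replacing that of Theorem~\ref{NP concave up}, which is precisely what you do. Your bookkeeping — inverting $f$, noting the index shift that identifies the hypothesis with $\gamma_{bad}/\gamma_{good}>1+\epsilon$, amplifying with $\mathcal{D}(r)$, and invoking Theorem~\ref{thm: BPP} via a BPP algorithm for D3SAT — matches the paper's template step for step.
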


\section{Set-up for results on binary trees}
\label{sec:binary}

Previously, we explored the value $Z(B,x!)$ as it related to the number of ways to label a star phylogenetic tree. In this section, we divert our exploration to binary phylogenetic trees. First we give a precise definition of a binary tree.

\begin{defn}
A tree is a binary tree if it is rooted and every non-leaf vertex has exactly two children.
\end{defn}

Fix a multiset $B$ of $m$ binary strings from $\{0,1\}^n$. Also fix a binary tree $T$ with $m$ leaves. Label the leaves of $T$ with the strings from $B$ via the surjective function $\varphi: L(T) \rightarrow B$. The equivalent of a median in this setting is a labeling $\varphi': V(T) \rightarrow \{0,1\}^n$ which agrees with $\varphi$ on $L(T)$ and minimizes $\sum_{uv\in E(T)} H(\varphi'(u), \varphi'(v))$.  Such a vertex labeling $\varphi'$ is called a \emph{most parsimonious labeling}. Let $\mathcal{M}(T,\varphi)$ be the set of most parsimonious labelings which extend $\varphi$ for the binary tree $T$. 

As with the star trees, we will label each edge of $T$ with a scenario. Given a most parsimonious labeling $\varphi'$ for $V(T)$, a scenario for the edge $uv$ is a permutation of the bits in which $\varphi'(u)$ and $\varphi'(v)$ differ. 

A \emph{most parsimonious scenario} for a binary tree $T$ with leaf labeling $\varphi: L(T) \rightarrow B$ consists of a most parsimonious labeling $\varphi'$ of the vertices of $T$ and a scenario to label each edge of $T$. We desire to count the number of most parsimonious scenarios which is 
\[Z_{T,\varphi}(B,x!) := \sum_{\varphi'\in \mathcal{M}(T,\varphi)} \prod_{uv\in E(T)} H(\varphi'(u), \varphi'(v))!.\]

Formally, the problem statement is below:
\begin{defn}[\#Binary]
Given arbitrary integer $m \geq 2$, let $T$ be a binary tree with $m$ leaves. Let $B=\{\nu_i\}_{i=1}^m$ be an arbitrary multiset of binary strings and let $\varphi: L(T) \rightarrow B$ be a surjective function. Determine the value of 
$Z_{T,\varphi}(B,x!) .$
\end{defn}

The main result of Section~\ref{sec: bin result} is the theorem which states \#Binary is in \#P-complete. 
In this section, we develop several tools and algorithms which lay the foundation for our main theorem.

Let $\Gamma= c_1 \wedge c_2 \wedge \ldots \wedge c_{k}$ be a D3CNF with variables $\{v_1, v_2, \ldots, v_{n}\}$. Select new variables $\{w_1, w_2, \ldots, w_{n}\}$ which do not occur in $\Gamma$. 
For each $i\in [n]$, interpretting subscript $n+1$ as $1$, define the following D3CNF, 
\begin{align}
\Phi_i:=(v_i \vee w_i \vee v_{i+1}) \wedge (v_i \vee w_i \vee \overline{v_{i+1}}) \wedge (\overline{v_i} \vee \overline{w_i} \vee v_{i+1}) \wedge (\overline{v_i} \vee \overline{w_i} \vee \overline{v_{i+1}}). 
\label{def Phi}
\end{align}
Observe that $\Phi_i$ is equivalent to the ``exclusive or'' $(v_i \vee w_i) \wedge (\overline{v_i} \vee \overline{w_i})$.
Define
\begin{equation} \Psi(\Gamma) := \Gamma \wedge \bigwedge_{i=1}^{n} \Phi_i. \label{def Phi2} \end{equation}
Necessarily, if $\Gamma$ is a D3CNF then so is $\Psi(\Gamma)$.
\begin{lemma}
For $\Gamma$, an arbitrary D3CNF, it is \#P-complete to determine the number of satisfying truth assignments for $\Psi(\Gamma)$.
\label{lemma: xor}
\end{lemma}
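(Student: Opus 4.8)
The plan is to reduce from \#D3SAT, which is \#P-complete by Lemma~\ref{Lemma: Distinct}. Given an arbitrary D3CNF $\Gamma$ with variables $\{v_1,\dots,v_n\}$ and $k$ clauses, I form $\Psi(\Gamma)$ as in \eqref{def Phi2}; this is computable in polynomial time and, as noted in the excerpt, is again a D3CNF since each $\Phi_i$ is already in D3CNF and conjoining D3CNFs preserves the property. So the map $\Gamma \mapsto \Psi(\Gamma)$ is a valid polynomial-time transformation, and it remains only to relate the number of satisfying truth assignments.

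First I would analyze the new constraints $\Phi_i$ in isolation. As the excerpt observes, $\Phi_i$ is logically equivalent to $(v_i\vee w_i)\wedge(\overline{v_i}\vee\overline{w_i})$, i.e.\ to $v_i \oplus w_i$ (exclusive or). Hence a truth assignment to $\{v_i\}_{i=1}^n \cup \{w_i\}_{i=1}^n$ satisfies $\bigwedge_{i=1}^n \Phi_i$ if and only if, for each $i$, exactly one of $v_i,w_i$ is true — equivalently $w_i = \overline{v_i}$. In particular, once the values of $v_1,\dots,v_n$ are fixed, the values of $w_1,\dots,w_n$ are \emph{forced}.

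Next I would combine this with $\Gamma$. A truth assignment $\sigma$ on $\{v_i\}\cup\{w_i\}$ satisfies $\Psi(\Gamma) = \Gamma \wedge \bigwedge_i \Phi_i$ iff it satisfies $\Gamma$ (which only involves the $v_i$'s) and satisfies every $\Phi_i$ (which forces $w_i=\overline{v_i}$). Therefore the satisfying assignments of $\Psi(\Gamma)$ are in bijection with the satisfying assignments of $\Gamma$: given a satisfying assignment of $\Gamma$, extend it uniquely by $w_i := \overline{v_i}$; conversely, any satisfying assignment of $\Psi(\Gamma)$ restricts to a satisfying assignment of $\Gamma$, and the restriction is injective on satisfying assignments of $\Psi(\Gamma)$ since the $w_i$ values are determined by the $v_i$ values. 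Consequently $\Psi(\Gamma)$ has exactly as many satisfying truth assignments as $\Gamma$. Since counting satisfying assignments of $\Gamma$ is \#P-complete (Lemma~\ref{Lemma: Distinct}) and membership in \#P is clear (a satisfying assignment is a polynomial-size certificate checkable in polynomial time), the conclusion follows.

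There is essentially no hard step here; the only thing to be careful about is to verify explicitly that each $\Phi_i$ as written in \eqref{def Phi} — including the wrap-around index $v_{n+1}=v_1$ — is genuinely a D3CNF (three distinct literals per clause, no clause containing both a variable and its negation) so that $\Psi(\Gamma)$ lies in D3CNF and the reduction targets \#D3SAT rather than \#3SAT. The role of the auxiliary variables $w_i$ and the cyclic structure of $\Phi_i$ is not needed for this lemma itself; it is presumably set up this way to be reused later in the binary-tree construction, so I would simply confirm the D3CNF conditions and the exclusive-or equivalence and move on.
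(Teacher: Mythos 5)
Your proposal is correct and follows essentially the same route as the paper: a polynomial-time reduction from \#D3SAT via the observation that each $\Phi_i$ enforces $w_i=\overline{v_i}$, yielding a bijection between the satisfying assignments of $\Gamma$ and of $\Psi(\Gamma)$. Your added remarks on verifying the D3CNF conditions and \#P membership are fine but not points where the paper's argument differs.
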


\begin{proof}
We have already shown in Lemma~\ref{Lemma: Distinct} that \#D3SAT is in \#P-complete. So to prove this result, we will show that the satisfying truth assignments for $\Gamma$ and for $\Psi(\Gamma)$ are in one-to-one correspondence.

Any truth assignment which satisfies $\Psi(\Gamma)$, when restricted to $\{v_1, v_2, \ldots, v_{n}\}$  will necessarily satisfy $\Gamma$. 

For the other direction, recall that $\Phi_i$ is equivalent to the ``exclusive or'' for $v_i$ and $w_i$. 
Therefore, given a satisfying truth assignment for $\Gamma$, we can create a unique satisfying truth assignment for $\Psi(\Gamma)$ by assigning to each $w_i$ the opposite value of $v_i$. 
\end{proof}

Next we provide two different algorithms for creating most parsimonious labelings given a rooted binary tree and a leaf-labeling $\varphi: L(T) \rightarrow \{0,1\}^n$. If we restrict $\varphi(\ell)$ to a single coordinate $c$ for every leaf $\ell$, we obtain a labeling $\varphi_c: L(T) \rightarrow \{0,1\}$. Each algorithm presented below will consider leaf labels from the set $\{0,1\}$ and output a most parsimonious labeling $\varphi'_c: V(T) \rightarrow \{0,1\}$. Obtaining a most parsimonious labeling for each coordinate in this way, we combine these labelings to create a most parsimonious labeling $\varphi': V(T) \rightarrow \{0,1\}^n$ for $T$ and the original leaf-labeling $\varphi$. 

Let $T$ be a binary tree with root $\rho$ and let $\varphi: L(T)\rightarrow \{0,1\}$ be a labeling for the leaves. Let $\varphi': V(T) \rightarrow \{0,1\}$ be a most parsimonious labeling which extends $\varphi$. Because each vertex is labeled with a single bit, $H(\varphi'(u), \varphi'(v))\in \{0,1\}$ for any edge $uv$. By definition, the most parsimonious labeling $\varphi'$ minimizes the sum $\sum_{uv\in E(T)} H(\varphi'(u), \varphi'(v))$. Consequently, $\varphi'$ must minimize the number of edges $uv$ such that $\varphi'(u) \neq \varphi'(v)$.

First, we have Fitch's algorithm to find most parsimonious labelings. 
\begin{FitAlg} [\cite{fitch}] 
Let $T$ be a binary tree with root $\rho$ and leaf-labeling $\varphi: L(T)\rightarrow \{0,1\}$. The following algorithm, completed in two parts, will find a most parsimonious labeling $\varphi': V(T) \rightarrow \{0,1\}$ which extends $\varphi$. 

\begin{enumerate}[leftmargin=*, label=Part \arabic*:]
\item Define a function $B$ on the vertices of $T$ as follows: 
For each leaf $\ell$, set $B(\ell) :=\{ \varphi(\ell)\}.$ Extend this assignment to all vertices of $T$ by the following rule: For a vertex $u$ with children $v_1, v_2$ such that $B(v_1)$ and $B(v_2)$ have been defined, set
\begin{align}
 B(u) := \begin{cases} B(v_1) \cap B(v_2) & \text{ if }B(v_1) \cap B(v_2)\neq \emptyset,\\
				B(v_1) \cup B(v_2) & \text{ otherwise. }
		\end{cases}  \label{Fitch B} \end{align}  
\item Select a single element $\alpha \in B(\rho)$.
Define a function $\varphi'$ on the vertices of $T$ as follows:  Set $\varphi'(\rho):=\alpha$. Extend $\varphi'$ to $V(T)$ by the following rule: If $v$ is a child of $u$ and $\varphi'(u)$ is defined, then 
\begin{align}
\varphi'(v) := \begin{cases} \varphi'(u) & \text{ if } \varphi'(u) \in B(v),\\
				1-\varphi'(u) & \text{ if } \varphi'(u) \not\in B(v).
			\end{cases}
			\label{Fitch phi}
			\end{align}
 \end{enumerate}
The resulting $\varphi'$ is a most parsimonious labeling extending $\varphi$ and is called a Fitch solution. 
\end{FitAlg}
While Fitch solutions are most parsimonious labelings, there are cases when Fitch's algorithm finds some of the most parsimonious labelings but not all of them. However, Sankoff's algorithm, described below, will produce all most parsimonious labelings \parencite{erdos}. 

\begin{SankAlg}[\cite{erdos, sankoff}]
Let $T$ be a binary tree with root $\rho$ and leaf labeling $\varphi: L(T)\rightarrow \{0,1\}$. 
This algorithm is completed in two steps. 

\begin{enumerate}[leftmargin=*, label=Part \arabic*:]
\item Define functions $s_0$ and $s_1$ on the vertices of $T$ as follows:
First, for each leaf $\ell$, 
		\begin{align}
		s_0(\ell):=\begin{cases}
		0 & \text{if } \varphi(\ell)=0,\\
		\infty & \text{otherwise.} 
		\end{cases}
		\label{Sankoff s0}
		\end{align}
		\[s_1(\ell):=\begin{cases}
		0 & \text{if } \varphi(\ell)=1,\\
		\infty & \text{otherwise.} 
		\end{cases}\]
 Extend these functions recursively to all vertices by the following: If $v_0$ and $v_1$ are children of $u$ and $s_i(v_j)$ has been defined for all $i,j\in \{0,1\}$, then 
 	\begin{align}s_0(u) := \min\{s_0(v_0), s_1(v_0) + 1\} + \min\{s_0(v_1), s_1(v_1)+1\}, \label{Sank s0 gen} \end{align}
 	\begin{align}s_1(u) := \min\{s_0(v_0)+1,  s_1(v_0)\} + \min\{s_0(v_1)+1, s_1(v_1)\}.\label{Sank s1 gen}\end{align}
Note: For any $v\in V(T)$,  $s_i(v)$ counts the minimum number of edges, within the subtree containing $v$ and its descendants, that will witness a change if a most parsimonious labeling assigned label $i$ to vertex $v$.
A leaf will have $s_0(\ell)=\infty$ (or $s_1(\ell)=\infty$ if it is impossible for a most parsimonious labeling to label $\ell$ with a 0 (1), because most parsimonious labelings must agree with the original leaf label.   
	
\item For each $v\in V(T)$, select $\alpha_v \in \{0,1\}$. Define the function $\varphi'$ on the vertices of $T$ as follows: For root $\rho$, define 
\[\varphi'(\rho):=\begin{cases}
		0 & \text{if } s_0(\rho) < s_1(\rho),\\
		\alpha_\rho & \text{if } s_0(\rho) = s_1(\rho),\\
		1 & \text{if } s_0(\rho) > s_1(\rho).
		\end{cases}\]
Extend $\varphi'$ to $V(T)$ by the following rule: 
If $v$ is a child of $u$ and $\varphi'(u)$ is defined, then define $\varphi'(v)$ as follows:
		 If $\varphi'(u)=0$, then 
		 \begin{align}
		 \varphi'(v):=\begin{cases}
		0 & \text{if }  s_0(v) < s_1(v)+1,\\
		\alpha_v & \text{if } s_0(v) = s_1(v)+1,\\
		1 & \text{if }  s_0(v) > s_1(v)+1.
		\end{cases}
		\label{Sank phi0}
		\end{align}
		If $\varphi'(u)=1$, then 
		 \begin{align}
		 \varphi'(v):=\begin{cases}
		1 & \text{if }  s_1(v) < s_0(v)+1,\\
		\alpha_v & \text{if } s_1(v) = s_0(v)+1,\\
		0 & \text{if }  s_1(v) > s_0(v)+1.
		\end{cases}
		\label{Sank phi1}
		\end{align}
 \end{enumerate}
 The resulting $\varphi'$ is a most parsimonious labeling for $T$ extending $\varphi$ and is called a Sankoff solution.
\end{SankAlg}

The following lemma draws a connection between the solutions found from each algorithm. 

\begin{lemma}
Let $T$ be a binary tree with leaf-labeling $\varphi: L(T) \rightarrow \{0,1\}$. Suppose that, for each $u,v\in V(T)$ with $v$ a child of $u$, the function $B$ in Fitch's algorithm satisfies
\begin{align}
 B(v) &=\{0,1\} \Rightarrow B(u)=\{0,1\}.
 \label{Prop for Fitch}
\end{align}
Then for $T$ and $\varphi$, all Sankoff solutions are Fitch solutions. In other words, Fitch's algorithm finds all most parsimonious labelings.
\label{Lemma: Fitch}
\end{lemma}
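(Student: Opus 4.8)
The plan is to show that the combinatorial data $B(\cdot)$ built in Part~1 of Fitch's algorithm is determined by, and determines, the sign of $s_1(v)-s_0(v)$ coming from Part~1 of Sankoff's algorithm, and then to use hypothesis~\eqref{Prop for Fitch} to verify that every tie-breaking option available in Part~2 of Sankoff's algorithm is in fact already dictated by the Fitch propagation rule \eqref{Fitch phi}.

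First I would prove, by structural induction on $T$ from the leaves upward, the ``dictionary''
\[ B(v)=\{0\}\iff s_0(v)<s_1(v),\qquad B(v)=\{1\}\iff s_0(v)>s_1(v),\qquad B(v)=\{0,1\}\iff s_0(v)=s_1(v).\]
The base case is immediate from \eqref{Sankoff s0} and the leaf rule $B(\ell)=\{\varphi(\ell)\}$ (note $s_0(\ell)\ne s_1(\ell)$ at every leaf). For an internal vertex $v$ with children $v_0,v_1$, a direct computation from \eqref{Sank s0 gen}--\eqref{Sank s1 gen} shows that the contribution of a child $c$ to $s_1(v)-s_0(v)$ is $+1$ when $B(c)=\{0\}$, is $-1$ when $B(c)=\{1\}$, and is $0$ when $B(c)=\{0,1\}$ (this uses only $s_0(c)<s_1(c)$, $s_0(c)>s_1(c)$, or $s_0(c)=s_1(c)$ respectively, so it is insensitive to magnitudes and to $\infty$). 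Hence $s_1(v)-s_0(v)=\delta(v_0)+\delta(v_1)$ with each $\delta(v_i)\in\{-1,0,1\}$, and one checks that the sign of this sum is exactly what the intersection/union recursion \eqref{Fitch B} assigns to $B(v)$. Two consequences I will record: for internal $v$ one has $|s_0(v)-s_1(v)|\le 2$, and $|s_0(v)-s_1(v)|=1$ forces at least one child $c$ of $v$ with $B(c)=\{0,1\}$.

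Next I would use \eqref{Prop for Fitch} to kill the ``dangerous'' Sankoff choices. Fix a Sankoff solution $\varphi'$ and a vertex $v$ with parent $u$. Reading off \eqref{Sank phi0}--\eqref{Sank phi1}: in every non-tie branch the value assigned to $v$ depends only on the sign of $s_0(v)-s_1(v)$, which by the dictionary is precisely the value the Fitch rule \eqref{Fitch phi} assigns given $\varphi'(u)$; and when $B(v)=\{0,1\}$ (i.e. $s_0(v)=s_1(v)$) the branch taken is ``copy the parent'', again matching \eqref{Fitch phi}. The only possible disagreement is in a tie, which for a non-root $v$ requires $\varphi'(u)=0$ with $s_0(v)=s_1(v)+1$, or $\varphi'(u)=1$ with $s_1(v)=s_0(v)+1$. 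In either case $|s_0(v)-s_1(v)|=1$ and $B(v)$ is the singleton opposite to $\varphi'(u)$; but then, by the dictionary (and since leaf scores are $0$ and $\infty$), $v$ is internal, and the consequence recorded above yields a child $c$ of $v$ with $B(c)=\{0,1\}$, so \eqref{Prop for Fitch} would force $B(v)=\{0,1\}$---a contradiction. Hence under \eqref{Prop for Fitch} no non-root tie occurs, and every branch of Sankoff's Part~2 outputs exactly the Fitch Part~2 value.

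To finish, given a Sankoff solution $\varphi'$, the dictionary applied at $\rho$ gives $\varphi'(\rho)\in B(\rho)$ (if $s_0(\rho)\ne s_1(\rho)$ the root value lies in the singleton $B(\rho)$; if $s_0(\rho)=s_1(\rho)$ then $B(\rho)=\{0,1\}$). Run Fitch's Part~2 with $\alpha:=\varphi'(\rho)$; a downward induction along $T$ using the previous paragraph shows the resulting labeling agrees with $\varphi'$ at every vertex, so $\varphi'$ is a Fitch solution, proving Lemma~\ref{Lemma: Fitch}. The step I expect to be the real work is the inductive dictionary: pinning down the child-contribution identity exactly (including the leaf/$\infty$ bookkeeping) and, crucially, recording that $|s_0-s_1|$ can equal $2$ at an internal vertex, which is what makes the ``$\pm1$-gap'' condition in the tie analysis a genuine restriction and lets \eqref{Prop for Fitch} do its job. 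Everything after that is a routine case check against the branches of \eqref{Sank phi0}--\eqref{Sank phi1}, \eqref{Fitch phi}, and \eqref{Fitch B}.
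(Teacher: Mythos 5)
Your proposal is correct, and it reaches the conclusion by a route that differs from the paper's in a meaningful way. The paper also runs an upward induction tying Fitch's $B$ to Sankoff's $(s_0,s_1)$, but it does so \emph{under} hypothesis \eqref{Prop for Fitch}, pinning down exact values: singleton-$B$ non-leaves have $\{s_0,s_1\}=\{0,2\}$ (Claim~\ref{cl: B0}) and $B=\{0,1\}$ forces $s_0=s_1$ (Claim~\ref{cl: B01}); it then compares the top-down phases via Claim~\ref{equal for 0} and a root-to-vertex path of $\{0,1\}$-vertices in Claim~\ref{equal for 01}. You instead prove a hypothesis-free ``sign dictionary'' ($B(v)=\{0\}$, $\{1\}$, $\{0,1\}$ iff $s_0(v)<s_1(v)$, $>$, $=$, respectively), which is a genuinely more general statement about the two algorithms, and you localize the role of \eqref{Prop for Fitch} to a single point: it rules out $|s_0(v)-s_1(v)|=1$ at any vertex, which is exactly the condition under which Sankoff's Part~2 could take a non-root tie branch and deviate from the Fitch propagation rule \eqref{Fitch phi}. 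Your child-contribution computation ($\delta(c)\in\{-1,0,+1\}$, including the leaf/$\infty$ cases), the observation that a gap of $1$ forces a child with $B=\{0,1\}$, and the final downward induction matching each non-tie Sankoff branch against \eqref{Fitch phi} are all sound. What your approach buys is a cleaner separation of the unconditional Fitch--Sankoff relationship from the specific consequence of \eqref{Prop for Fitch} (no ties below the root); what the paper's approach buys is very concrete score values that it can and does reuse directly in the subsequent analysis of the trees $\mathcal{B}_i$. Either argument proves Lemma~\ref{Lemma: Fitch}.
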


In order to prove Lemma~\ref{Lemma: Fitch}, we first establish a series of claims (\ref{cl: B0} through \ref{equal for 01}) under the assumptions of Lemma~\ref{Lemma: Fitch}. 

\begin{claim}
For any non-leaf vertex $v$, if $B(v)=\{x\}$ for some $x\in \{0,1\}$ in Fitch's algorithm, then $s_0(v)=0$ and $s_1(v)=2$ in Sankoff's algorithm. 
\label{cl: B0}
\end{claim}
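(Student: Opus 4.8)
The plan is to prove Claim~\ref{cl: B0} by induction on the number of vertices in the subtree of $T$ rooted at $v$. By the symmetry between the two labels in both Fitch's and Sankoff's algorithms it suffices to treat the case $x=0$ (the case $x=1$ is identical with the roles of $s_0$ and $s_1$ interchanged). So fix a non-leaf vertex $v$ with children $v_0,v_1$ and with $B(v)=\{0\}$, and assume the claim for every non-leaf vertex whose subtree has fewer vertices.

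\textbf{Step 1: the children also carry the singleton $\{0\}$.} Since $B(v)=\{0\}\neq\{0,1\}$, the contrapositive of the hypothesis \eqref{Prop for Fitch} of Lemma~\ref{Lemma: Fitch} gives $B(v_0)\neq\{0,1\}$ and $B(v_1)\neq\{0,1\}$; as Fitch's $B$-values are nonempty subsets of $\{0,1\}$, each $B(v_j)$ is a singleton $\{a_j\}$. If $a_0\neq a_1$, then $B(v_0)\cap B(v_1)=\emptyset$, so rule \eqref{Fitch B} would force $B(v)=\{0,1\}$, a contradiction; hence $a_0=a_1$, and then \eqref{Fitch B} gives $B(v)=\{a_0\}$, forcing $a_0=a_1=0$. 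Thus $B(v_0)=B(v_1)=\{0\}$. This is the only step that uses the special hypothesis of Lemma~\ref{Lemma: Fitch}, and it is the crux: it prevents a singleton Fitch label from sitting above a $\{0,1\}$ descendant, which is exactly the configuration that would otherwise break the claim, and (iterating it down the subtree) it says every leaf below $v$ is labeled $0$.

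\textbf{Step 2: bound the children's Sankoff values.} For each $j\in\{0,1\}$: if $v_j$ is a leaf, then $B(v_j)=\{0\}$ means $\varphi(v_j)=0$, so the leaf rules of Sankoff's algorithm (cf.\ \eqref{Sankoff s0}) give $s_0(v_j)=0$ and $s_1(v_j)=\infty$; if $v_j$ is a non-leaf, then $B(v_j)=\{0\}$ together with the inductive hypothesis applied to the strictly smaller subtree at $v_j$ give $s_0(v_j)=0$ and $s_1(v_j)=2$. In all cases $s_0(v_j)=0$ and $s_1(v_j)\geq 1$.

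\textbf{Step 3: plug into the recursions.} From $s_0(v_j)=0$ we get $\min\{s_0(v_j),s_1(v_j)+1\}=0$ for $j=0,1$, so by \eqref{Sank s0 gen} we have $s_0(v)=0$; and from $s_0(v_j)=0$ and $s_1(v_j)\geq 1$ we get $\min\{s_0(v_j)+1,s_1(v_j)\}=1$ for $j=0,1$, so by \eqref{Sank s1 gen} we have $s_1(v)=1+1=2$. This closes the induction. Everything after Step 1 is routine bookkeeping with the Sankoff recursions; the one point that genuinely requires care is recognizing in Step 1 that hypothesis \eqref{Prop for Fitch} forces the whole subtree rooted at $v$ to be labeled constantly $0$.
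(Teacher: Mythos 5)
Your proof is correct and follows essentially the same route as the paper's: an induction from the leaves upward (yours on subtree size, the paper's by reverse induction on distance from the root) whose engine is that a singleton Fitch label forces both children to carry the same singleton, followed by direct evaluation of the Sankoff recursions \eqref{Sank s0 gen} and \eqref{Sank s1 gen}. Your Step 1 merely makes explicit the use of hypothesis \eqref{Prop for Fitch} that the paper leaves implicit, which is a welcome clarification rather than a different approach.
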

\begin{proof}
The proof proceeds by reverse induction on the distance from the root. 
For the base case, we consider those vertices whose children are both leaves. Let $v$ be such a vertex with children $v_\ell$ and $v_r$. By symmetry of the argument, assume $B(v) = \{0\}$. Then $B(v_\ell)=B(v_r) = \{0\}$ which only happens for leaves if $\varphi(v_\ell) = \varphi(v_r)=0$. By \eqref{Sankoff s0}, $s_0(v_\ell) =s_0(v_r)=0$ and $s_1(v_\ell)=s_1(v_r)=\infty$. As desired, \eqref{Sank s0 gen} implies $s_0(v) = 0$ and \eqref{Sank s1 gen} implies $s_1(v) =2$. 

For the inductive hypothesis, assume that each vertex $v$ of distance at least $d\geq 1$ from the root has  either $s_0(v)=0$ and $s_1(v)=2$ or  $s_0(v)=2$ and $s_1(v)=0$. 
Let $u$ be a vertex of distance $d-1$ from the root. Again, we assume $B(u)=\{0\}$ as the argument for the case when $B(u)=\{1\}$ is very similar. This vertex has two children, $u_\ell$ and $u_r$. There are three cases to consider. 
\begin{enumerate}
\item[(1)] If $u_\ell$ and  $u_r$ are leaves, then the argument in the base case gives $s_0(u) =0$ and $s_1(u)=2$ as desired. 
\item[(2)] If $u_\ell$ is a leaf and $u_r$ is not a leaf, then $s_0(u_r) =0$ and $s_1(u_r)=\infty$ and, by the inductive hypothesis $s_0(u_\ell) =0$ and $s_1(u_\ell)=2$. Therefore \eqref{Sank s0 gen} implies $s_0(u) =0 $ and \eqref{Sank s1 gen} implies $s_1(u) =2$. 
\item[(3)] If $u_\ell$ and $u_r$ are not leaves, then by the inductive hypothesis, $s_0(u_\ell) =s_0(u_r)=0$ and $s_1(u_\ell)=s_1(u_r)=2$. Again,  \eqref{Sank s0 gen} implies $s_0(u) =0 $ and \eqref{Sank s1 gen} implies $s_1(u) =2$. 
\end{enumerate}
This complete the proof of the claim. 
\end{proof}

\begin{claim}
For any vertex $v$ with $B(v)=\{0,1\}$ from Fitch's algorithm, we will have $s_0(v) = s_1(v)$ in Sankoff's algorithm.
\label{cl: B01} 
\end{claim}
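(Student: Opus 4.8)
The plan is to prove this by induction working from the leaves of $T$ upward, so as to match the bottom-up recursion that defines $s_0$ and $s_1$. First I would record how a vertex can acquire the Fitch set $\{0,1\}$: by \eqref{Fitch B}, if $v$ has children $v_1,v_2$ and $B(v)=\{0,1\}$, then either $B(v_1)\cap B(v_2)=\emptyset$ (which forces $\{B(v_1),B(v_2)\}=\{\{0\},\{1\}\}$) or $B(v_1)\cap B(v_2)=\{0,1\}$ (which forces $B(v_1)=B(v_2)=\{0,1\}$). Thus the two children of an ambiguous vertex are either both ambiguous, or one has $B=\{0\}$ and the other $B=\{1\}$.

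Next I would isolate the two ``contributions'' each child $w$ makes to its parent's Sankoff values, $m_0(w):=\min\{s_0(w),s_1(w)+1\}$ and $m_1(w):=\min\{s_0(w)+1,s_1(w)\}$, so that \eqref{Sank s0 gen}--\eqref{Sank s1 gen} become $s_0(v)=m_0(v_1)+m_0(v_2)$ and $s_1(v)=m_1(v_1)+m_1(v_2)$. A one-line computation then gives $(m_0(w),m_1(w))$ in each relevant case: (i) $w$ a leaf with $\varphi(w)=x$ gives $(m_0,m_1)=(0,1)$ if $x=0$ and $(1,0)$ if $x=1$ directly from \eqref{Sankoff s0}; (ii) $w$ a non-leaf with $B(w)$ a singleton gives $\{s_0(w),s_1(w)\}=\{0,2\}$ by Claim~\ref{cl: B0}, hence again $(m_0(w),m_1(w))\in\{(0,1),(1,0)\}$; (iii) $w$ a non-leaf with $B(w)=\{0,1\}$ gives, by the induction hypothesis, $s_0(w)=s_1(w)=:s$, hence $(m_0(w),m_1(w))=(s,s)$. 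I would run the induction on the height of $v$ (distance to its farthest leaf descendant), so every child of $v$ has strictly smaller height and case (iii) may invoke the hypothesis.

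Finally I would combine these for $v$ with $B(v)=\{0,1\}$, using the dichotomy from the first paragraph. If both children are ambiguous, then $s_0(v)=m_0(v_1)+m_0(v_2)=s(v_1)+s(v_2)=m_1(v_1)+m_1(v_2)=s_1(v)$. If one child has $B=\{0\}$ and the other $B=\{1\}$, then by cases (i)/(ii) their pairs $(m_0,m_1)$ are $(0,1)$ and $(1,0)$ in some order, so $s_0(v)=0+1=1=1+0=s_1(v)$. In either case $s_0(v)=s_1(v)$, which closes the induction. I do not anticipate a real obstacle; the only point needing care is the uniform treatment of leaves (which carry the value $\infty$) alongside internal vertices, and that is exactly what the separate $(m_0,m_1)$ bookkeeping step is designed to absorb, since in all three cases the resulting pair is finite and symmetric enough to make the final case analysis immediate.
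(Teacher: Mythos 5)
Your proof is correct and follows essentially the same route as the paper: a bottom-up induction with the dichotomy that an ambiguous vertex's children are either both ambiguous or carry $B=\{0\}$ and $B=\{1\}$, invoking Claim~\ref{cl: B0} for singleton children and the induction hypothesis for ambiguous ones. Your explicit $(m_0,m_1)$ bookkeeping and separate treatment of leaf children (with $s$-value $\infty$) is a slightly tidier handling of a detail the paper glosses over, but it is not a different argument.
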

\begin{proof}
This claim is also proven by induction on distance from the root where the base case examines those vertices with greatest distance from the root. 

For the base case, let $v$ be a vertex with $B(v)=\{0,1\}$ and none of its descendants $u$ have $B(u)=\{0,1\}$. For children $v_\ell$ and $v_r$ of $v$ we may assume $B(v_\ell)=\{0\}$ and $B(v_r)=\{1\}$ by \eqref{Fitch B}. By Claim~\ref{cl: B0}, 
\[s_0(v_\ell)=s_1(v_r)=0 \text{ and } s_0(v_r)=s_1(v_\ell)=2.\]
Therefore $s_0(v) = s_1(v)=1$. 

For the inductive hypothesis, suppose all vertices $u$ with $B(u)=\{0,1\}$ of distance at least $d\geq 1$ from the root have $s_0(u_\ell) = s_1(u_r)$. Let $v$ be a vertex at distance $d-1$ from the root with $B(v) = \{0,1\}$. There are three cases to consider: 
\begin{enumerate}
\item[(1)] If $v$ has a child $v_\ell$ with $B(v_\ell)=\{0\}$, then by \eqref{Fitch B} the other child $v_r$ must have $B(v_r)=\{1\}$  and we can use the argument in the base case to see $s_0(v_\ell) = s_1(v_r)$. 
\item[(2)] If $v$ has a child $v_\ell$ with $B(v_\ell)=\{1\}$, then by \eqref{Fitch B}, $v$ must have another child $v_r$ with $B(v_r)=\{0\}$. This puts us back in case 1. 
\item[(3)] If $v$ has a child $v_\ell$ with $B(v_\ell)=\{0,1\}$, then by  \eqref{Fitch B}, $v$ must have another child $v_r$ with $B(v_r)=\{0,1\}$. By the inductive hypothesis, $s_0(v_\ell) = s_1(v_\ell)$ and $s_0(v_r) = s_1(v_r)$. By \eqref{Sank s0 gen} and \eqref{Sank s1 gen}, $s_0(v) = s_1(v)$. 
\end{enumerate}
This completes the proof of the claim. 
\end{proof}

\begin{claim}
For any non-leaf vertex $v$ with $B(v)=\{i\}$ $(i\in \{0,1\})$, both Fitch's algorithm and Sankoff's algorithm will define $\varphi'(v)=i$. 
\label{equal for 0}
\end{claim}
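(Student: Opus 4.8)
The strategy is to combine Claim~\ref{cl: B0} with a direct reading of Part 2 of each algorithm. Suppose $v$ is a non-leaf vertex with $B(v)=\{i\}$; by symmetry we take $i=0$, the case $i=1$ being identical with the roles of $s_0$ and $s_1$ swapped. By Claim~\ref{cl: B0}, in Sankoff's algorithm we have $s_0(v)=0$ and $s_1(v)=2$.

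First I would handle the case $v=\rho$. In Fitch's algorithm, Part 2 selects $\alpha\in B(\rho)=\{0\}$, so necessarily $\varphi'(\rho)=0$. In Sankoff's algorithm, since $s_0(\rho)=0<2=s_1(\rho)$, the first branch of the root rule gives $\varphi'(\rho)=0$. So both algorithms assign $\varphi'(v)=0=i$.

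Next I would handle the case where $v$ is not the root, so $v$ has a parent $u$. The point is that the parent's label is forced to be $0$ as well: by the hypothesis~\eqref{Prop for Fitch} of Lemma~\ref{Lemma: Fitch} — stated contrapositively, $B(v)\ne\{0,1\}$ together with the structure of~\eqref{Fitch B} — one checks that $B(u)\subseteq\{0\}$, i.e. $B(u)=\{0\}$ (indeed if $B(u)=\{0,1\}$ then, tracing~\eqref{Fitch B}, the sibling of $v$ would have $B=\{1\}$, which is fine, so this needs a touch more care: the cleanest route is to induct down from the root, using that once a parent is labeled $0$, Claim~\ref{cl: B0} forces the Sankoff comparison at $v$). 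Concretely, I would argue by induction on distance from the root that every vertex $w$ with $B(w)=\{j\}$ has $\varphi'(w)=j$ in both algorithms. For the inductive step at $v$ with parent $u$: by the induction hypothesis applied to $u$ (whose $B$-value is a singleton, since $B(v)$ is a singleton forces, via~\eqref{Fitch B}, that $B(u)$ is either that same singleton or — only when the sibling contributes the opposite bit — possibly $\{0,1\}$; I will need to dispose of the latter). If $B(u)=\{0\}$, then $\varphi'(u)=0$ by induction, and in Fitch's algorithm~\eqref{Fitch phi} gives $\varphi'(v)=0$ since $0\in B(v)$; in Sankoff's algorithm~\eqref{Sank phi0} applies, and since $s_0(v)=0<1=s_1(v)+1$ we get $\varphi'(v)=0$. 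If instead $B(u)=\{0,1\}$, then $\varphi'(u)$ could a priori be $0$ or $1$; when $\varphi'(u)=0$ the previous computation applies, and when $\varphi'(u)=1$, Fitch's~\eqref{Fitch phi} gives $\varphi'(v)=1-1=0$ since $1\notin B(v)$, while Sankoff's~\eqref{Sank phi1} applies with $s_1(v)=2>1=s_0(v)+1$, again forcing $\varphi'(v)=0$. Either way $\varphi'(v)=0=i$ in both algorithms.

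The main obstacle I anticipate is the bookkeeping around the parent's $B$-value: the hypothesis~\eqref{Prop for Fitch} guarantees that a $\{0,1\}$-child forces a $\{0,1\}$-parent, but here the child is a singleton, so I must instead argue that whatever $B(u)$ is, the inductive hypothesis pins down $\varphi'(u)$ enough (to $0$, or to an arbitrary element of $\{0,1\}$) that~\eqref{Fitch phi} and~\eqref{Sank phi0}/\eqref{Sank phi1} both land on $i$. The clean way to organize this is to prove the stronger statement — for all vertices $w$ with $B(w)=\{j\}$, both algorithms give $\varphi'(w)=j$ — simultaneously by top-down induction, so that the parent's label is already known when we reach $v$; the $s_0,s_1$ values needed at each step come for free from Claim~\ref{cl: B0}.
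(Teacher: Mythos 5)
Your proposal is correct and is essentially the paper's own argument: Claim~\ref{cl: B0} gives $s_0(v)=0$ and $s_1(v)=2$, and then both possible parent labels are handled at once, since $0\in B(v)$ (resp.\ $1\notin B(v)$) forces $\varphi'(v)=0$ in \eqref{Fitch phi}, while $s_0(v)<s_1(v)+1$ and $s_1(v)>s_0(v)+1$ force $\varphi'(v)=0$ in \eqref{Sank phi0} and \eqref{Sank phi1}. Because the conclusion at $v$ does not depend on the parent's label at all, your top-down induction and the bookkeeping about $B(u)$ are superfluous (the paper dispenses with them), and note the small slip ``$s_0(v)=0<1=s_1(v)+1$'': it should read $s_1(v)+1=3$, which does not affect the conclusion.
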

\begin{proof}
In Fitch's algorithm, this is an immediate consequence of \eqref{Fitch phi}. 

Now consider Sankoff's algorithm. If $B(v)=\{0\}$ then, by Claim~\ref{cl: B0}, $s_0(v)=0$ and $s_1(v)=2$. Observe $s_0(v)< s_1(v)+1$ and $s_1(v)>s_0(v)+1$. Therefore $\varphi'(v)=0$ by \eqref{Sank phi0} and \eqref{Sank phi1}.
\end{proof}

\begin{claim}
Suppose $B(\rho)=\{0,1\}$. For any vertex $v$ with $B(v)=\{0,1\}$, if both algorithms set $\varphi'(\rho):=0$, then 
both Fitch's algorithm and Sankoff's algorithm will set $\varphi'(v)=0$. Likewise, if $\varphi'(\rho)=1$, then both algorithms will set $\varphi'(v)=1$. 
\label{equal for 01}
\end{claim}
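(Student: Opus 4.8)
The plan is to argue by induction along the unique path in $T$ from the root $\rho$ down to $v$. The first step is to observe that the standing hypothesis \eqref{Prop for Fitch} of Lemma~\ref{Lemma: Fitch} forces this entire path to lie inside the ``ambiguous region'': writing the path as $\rho = u_0, u_1, \ldots, u_\ell = v$, each $u_{j+1}$ is a child of $u_j$, and $B(u_\ell) = B(v) = \{0,1\}$, so repeated application of \eqref{Prop for Fitch} gives $B(u_j) = \{0,1\}$ for every $j \in \{0,1,\ldots,\ell\}$. By Claim~\ref{cl: B01}, it follows that $s_0(u_j) = s_1(u_j)$ for every such $j$.

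Next I would prove, by induction on $j$, that both algorithms assign $\varphi'(u_j) = 0$, treating the case $\varphi'(\rho):=0$; the case $\varphi'(\rho):=1$ is symmetric. The base case $j=0$ is exactly the hypothesis that both algorithms set $\varphi'(\rho):=0$: this is an admissible choice in Sankoff's algorithm precisely because $s_0(\rho)=s_1(\rho)$ (so $\varphi'(\rho)=\alpha_\rho$ may be taken to be $0$), and an admissible choice in Fitch's algorithm because $0 \in B(\rho)$. For the inductive step, suppose both algorithms have assigned $\varphi'(u_j)=0$. In Fitch's algorithm, since $\varphi'(u_j)=0 \in \{0,1\} = B(u_{j+1})$, rule \eqref{Fitch phi} forces $\varphi'(u_{j+1})=\varphi'(u_j)=0$. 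In Sankoff's algorithm, since $s_0(u_{j+1})=s_1(u_{j+1})$ we have $s_0(u_{j+1}) < s_1(u_{j+1})+1$, so rule \eqref{Sank phi0}, applied with parent label $\varphi'(u_j)=0$, forces $\varphi'(u_{j+1})=0$. Setting $j=\ell$ yields $\varphi'(v)=0$ for both algorithms.

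The case $\varphi'(\rho):=1$ is handled identically, using \eqref{Sank phi1} in place of \eqref{Sank phi0} together with the inequality $s_1(u_{j+1}) < s_0(u_{j+1})+1$. I do not expect a genuine obstacle here: the only thing to get right is the observation that the whole root-to-$v$ path stays inside $\{w : B(w)=\{0,1\}\}$, so that the free Sankoff tie-breaking choices $\alpha_w$ along it are never forced against us, and that Claim~\ref{cl: B01} pins down the relevant Sankoff comparisons on that path. Once those two points are in place, the proof is a routine ``push-down'' induction matching the two recursions rule-by-rule.
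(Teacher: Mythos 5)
Your proposal is correct and follows essentially the same route as the paper: both arguments note that \eqref{Prop for Fitch} forces every vertex on the root-to-$v$ path to have $B=\{0,1\}$, invoke Claim~\ref{cl: B01} to get $s_0=s_1$ along that path, and then push the root's label down the path rule-by-rule via \eqref{Fitch phi} and \eqref{Sank phi0}/\eqref{Sank phi1}. Your write-up is, if anything, slightly more careful in spelling out why the whole path lies in the ambiguous region, which the paper simply asserts.
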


\begin{proof}
For any vertex $v$ with $B(v)=\{0,1\}$, there is a path $\rho=u_0, u_1, \ldots,u_{t-1}, u_t=v$ of vertices such that $B(u_i)=\{0,1\}$ for each $i\in[t]$. It suffices to show that, in both algorithms, if $\varphi'(u_i)=0$ for some $0\leq i <t$, then $\varphi'(u_{i+1})=0$. 

In Part 2 of Fitch's algorithm, if $\varphi'(u_i)=0$ and $B(u_{i+1})=\{0,1\}$, then \eqref{Fitch B} implies $\varphi'(u_{i+1})=0$. 

In Sankoff's algorithm, if $\varphi'(u_i)=0$ and $B(u_{i+1})=\{0,1\}$, then by Claim~\ref{equal for 01}, $s_0(u_{i+1})=s_1(u_{i+1})$. Thus $s_0(u_{i+1})<s_1(u_{i+1})+1$. Since $\varphi'(u_i)=0$, \eqref{Sank phi0} implies $\varphi'(u_{i+1})=0$.

A similar argument can be used to show that if $\varphi'(\rho)=1$, then $\varphi'(v)=1$. 
Therefore Fitch's algorithm and Sankoff's algorithm will agreed on $\varphi'(v)$ if they agree on $\varphi'(\rho)$. 
\end{proof}

\begin{proof}[Proof of Lemma~\ref{Lemma: Fitch}]
In each algorithm, once $\varphi(\rho)$ has been set, the algorithm deterministically outputs a most parsimonious labeling of all vertices. Therefore, it suffices to prove that both algorithms have the same choices for labeling the root and both algorithms output the same most parsimonious labeling for the same choice for $\varphi'(\rho)$.  

If $B(\rho)=\{0\}$ or $B(\rho)=\{1\}$, then there is only one choice in Fitch's algorithm for $\varphi'(\rho)$. By Claim~\ref{cl: B0}, Sankoff's algorithm has the same determined value for $\varphi'(\rho)$. Further, all vertices $v\in V(T)$ will have either $B(v)=\{0\}$ or $B(v)=\{1\}$ by condition \eqref{Prop for Fitch} and Claim~\ref{equal for 0} completes the proof. 

If $B(\rho)=\{0,1\}$, then in Fitch's algorithm, there are two choices for $\varphi'(\rho)$. By Claim~\ref{cl: B01}, $s_0(\rho)=s_1(\rho)$ in Sankoff's algorithm, which means there are also two choices for $\varphi'(\rho)$.
Claim~\ref{equal for 01} implies that if we make the same choice for the root, both algorithms give the same most parsimonious labeling $\varphi'$.  

Sankoff's algorithm is guaranteed to find all most parsimonious labelings and the most parsimonious labelings from Fitch's algorithm coincide with those from Sankoff's algorithm, this implies that Fitch's algorithm finds all most parsimonious labelings. 
\end{proof}

As mentioned earlier, these algorithms are designed for a tree $T$ with leaf-labeling $\varphi: L(T) \rightarrow \{0,1\}$. However, given a tree $T$ with leaf-labeling $\phi: L(T) \rightarrow \{0,1\}^n$, we can restrict all strings to a single coordinate and run one of the above algorithms to find a most parsimonious labelings for $V(T)$ in that coordinate. Repeat this for each coordinate. The most parsimonious labelings found for each coordinate can then be combined into a most parsimonious labeling of $V(T)$ that extends $\phi$.  

\section{Complexity result for \#Binary}
\label{sec: bin result}
Here is our main result on binary trees.
\begin{thm}
\#Binary is \#P-complete.
\label{thm: binary} 
\end{thm}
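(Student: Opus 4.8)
The plan is to prove membership in \#P directly, and then to prove \#P-hardness by a polynomial reduction from the problem of counting satisfying assignments of $\Psi(\Gamma)$, which is \#P-complete by Lemma~\ref{lemma: xor}.

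For membership: given a binary tree $T$ and a leaf labeling $\varphi$, a witness is a most parsimonious labeling $\varphi'$ of $V(T)$ together with, for each edge $uv$, a scenario (a permutation of the bits in which $\varphi'(u)$ and $\varphi'(v)$ differ). The number of such witnesses is exactly $Z_{T,\varphi}(B,x!)$, a witness has size polynomial in $|V(T)|$ and the string length, and a witness can be checked in polynomial time: run Sankoff's algorithm coordinate by coordinate to compute the minimum parsimony cost, verify that $\varphi'$ attains it, and verify that each edge scenario is a permutation of the correct set of bits. Hence \#Binary is in \#P.

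For hardness, fix a D3CNF $\Gamma$ with $n$ variables and $k$ clauses and form $\Psi(\Gamma)$, which has $2n$ variables and $k+4n$ clauses. Mimicking the proof of Theorem~\ref{thm: number SCJ scenarios}, for a prime $p$ in a suitable range (now chosen so that the product of the selected primes exceeds $2^{2n}$) I would build a multiset $B(p)$ of binary strings encoding $\Psi(\Gamma)$: a pair enforcing Property~1, one pair per variable enforcing the $\mathcal{M}'$ structure, and for each clause --- including the $4n$ XOR clauses --- the $50$-string gadget of Table~\ref{table: n+5} with an appropriate number of additional ones appended so that the target Hamming distances land at the right residues modulo $p$. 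The new ingredient is the tree: I would arrange for every string of $B(p)$ to occur in a complementary pair, let each complementary pair form a cherry, and assemble these cherries as the pendant subtrees of a caterpillar, allocating the coordinates that carry the additional ones in complementary blocks so that paired strings are complementary in every coordinate. The purpose of the XOR clauses $\Phi_i$, in combination with this complementary-pair structure, is to guarantee that Fitch's condition \eqref{Prop for Fitch} holds in every ambiguous coordinate, so that Lemma~\ref{Lemma: Fitch} applies: the Sankoff solutions coincide with the Fitch solutions, the most parsimonious labelings of $T$ are parametrized exactly by the root label on the coordinates $x_i,y_i$ (mirroring the medians of the star construction), and every internal edge of $T$ contributes $0!=1$ to the product.

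The rest of the argument then runs as in Theorem~\ref{thm: number SCJ scenarios}: a most parsimonious labeling corresponding to a satisfying assignment of $\Psi(\Gamma)$ has $\prod_{uv\in E(T)} H(\varphi'(u),\varphi'(v))!$ equal to a fixed constant $K(p)$ with $p\nmid K(p)$, every other labeling has this product $\equiv 0 \mod p$, so $Z_{T,\varphi}(B(p),x!)$ is congruent modulo $p$ to the number of satisfying assignments of $\Psi(\Gamma)$ times $K(p)$; repeating over all primes in the range and applying the Chinese Remainder Theorem recovers that number, hence the number of satisfying assignments of $\Gamma$. I expect the crux of the proof to be the analysis of the internal vertices: one must show that the only most parsimonious labelings are the ``rigid'' ones in which each internal vertex is constant on every coordinate (this is exactly what Lemma~\ref{Lemma: Fitch} and the XOR gadgets are for), and one must check separately that in the coordinates carrying the additional ones --- where condition \eqref{Prop for Fitch} may genuinely fail --- the most parsimonious labeling is unique, so that these coordinates contribute no spurious labelings to $\mathcal{M}(T,\varphi)$.
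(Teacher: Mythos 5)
Your membership argument is fine, and your overall plan for hardness (complementary pairs placed as cherries on a caterpillar, so that every most parsimonious labeling assigns one common string $\mu$ to all internal vertices, internal edges contribute $0!=1$, and $Z_{T,\varphi}$ collapses to the star-tree sum, which is then extracted modulo primes via CRT) can be made to work, and it is genuinely different from the paper's proof, which instead builds a clause gadget $\mathcal{B}_i$ out of sorting trees and $16n^2+8kn$ copies of a $248$-leaf unit subtree from earlier work, and finishes with a multiplicative gap argument ($2^{2n}B_{bad}<B_{good}$, so the count is $\lfloor B_{total}/B_{good}\rfloor$) rather than with the Chinese Remainder Theorem. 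However, as written your construction has a concrete flaw: you insist that paired strings be complementary \emph{in every} coordinate, including the coordinates carrying the additional ones. If every pair is complementary on each $e$-coordinate, then exactly half of the leaves carry a $1$ in every such coordinate. This has two bad consequences. First, each string then carries on the order of $t/2$ additional ones, so the carefully calibrated Hamming distances of Table~\ref{table: n+5} (which rely on each $e$-coordinate having a single $1$, so that a string's distance to a median is its distance on the first $2n$ bits plus its own small number of additional ones) are destroyed. Second, and more fundamentally, in each $e$-coordinate a cherry with complementary leaves contributes exactly $1$ to the parsimony score no matter how its parent is labeled, so \emph{both} constant internal labelings (all $0$ and all $1$) are most parsimonious in that coordinate. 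Hence the most parsimonious labelings of your tree are parametrized by root labels in $\{0,1\}^{2n+t}$, not $\{0,1\}^{2n}\times\{0\}^t$: the uniqueness in the $e$-coordinates that your last sentence says must be checked in fact fails, and the partition function acquires a factor-$2^{t}$ family of extra labelings whose products of edge factorials are not shown to vanish modulo $p$, so the congruence $Z\equiv |\mathcal{M}'_\Gamma|\,K(p) \bmod p$ is unjustified.

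The repair is to drop full complementarity and allocate additional ones exactly as in Definition~\ref{defi: ones}: each $e$-coordinate has a single $1$ among all leaves, and pairs are complementary only on the first $2n$ coordinates. Then in each $e$-coordinate the unique most parsimonious labeling is identically $0$ on internal vertices (the only single edge whose change separates the unique $1$-leaf from all $0$-leaves is its pendant edge), while in each $x_i,y_i$ coordinate exactly the two constant internal labelings are most parsimonious; consequently all internal vertices share one label $\mu\in\{0,1\}^{2n}\times\{0\}^{t}$ and $Z_{T,\varphi}(B(p),x!)=Z(\mathcal{D}(p),x!)$, after which the argument of Theorem~\ref{thm: number SCJ scenarios} applies verbatim. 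Two further remarks: with the $\mathcal{B}_j$-type pairs retained you do not need $\Psi(\Gamma)$ at all — you can reduce directly from $\Gamma$ using $\mathcal{D}(p)$ itself (in the paper the XOR clauses substitute for the $\mathcal{B}_j$ pairs inside its tree gadget, which is why it needs them); and the XOR clauses play no role in establishing condition \eqref{Prop for Fitch}, which in your cherry construction holds automatically since every internal vertex has $B=\{0,1\}$ in each ambiguous coordinate.
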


\begin{proof}
A proof similar to that of Lemma~\ref{lemma: sharpP} shows that \#Binary is in \#P. To prove \#Binary is in \#P-hard, we provide a polynomial reduction from \#D3SAT. 

Fix a D3CNF, $\Gamma=\bigwedge_{i\in[k]} c_i$ with $k$ clauses and $n$ variables. Let 
\[\Psi(\Gamma) := \bigwedge_{i\in[k]} c_i \wedge \bigwedge_{i\in [n]} \Phi_i\] with $2n$ variables, $\{v_1, v_2, \ldots v_{n}, w_1, w_2, \ldots, w_n\}$, where each clause $c_i$ has three distinct literals from $\{v_i, \ov{v_i}: i\in [n]\}$, and $\Phi_i$ is the D3CNF in \eqref{def Phi} which guarantees that,  for each $i\in [n]$,  $v_i$ and $w_i$ have different truth values in a satisfying assignment.   
By Lemma~\ref{lemma: xor}, there is a bijection between the satisfying truth assignments for $\Gamma$ and the satisfying truth assignments for $\Psi(\Gamma)$. We will construct a binary tree $\mathcal{B}$ and define a labeling $\varphi$ of its leaves with binary strings in $B$ such that  the number of satisfying truth assignment for $\Psi(\Gamma)$ is directly computable from $Z_{T, \varphi}(B,x!)$, the number of most parsimonious scenarios for binary tree $T$ with leaf labeling $\varphi$. 
 
Each $\Phi_i$ has 4 clauses, so $\Psi(\Gamma)$ has $k+4n$ clauses. Assign the names $c_{k+1}, \ldots, c_{k+4n}$ to the $4n$ clauses of $\bigwedge_{i\in [n]} \Phi_i$.
Then
\[\Psi(\Gamma)=\bigwedge_{i\in[k+4n]} c_i.\] 
 
For each $i\in [k+4n]$, we define a binary tree $\mathcal{B}_i$ which encodes clause $c_i$. The final binary tree $\mathcal{B}$ will join $\mathcal{B}_1, \mathcal{B}_2, \ldots, \mathcal{B}_{k+4n}$ by a comb. For $t=148(16n^2 + 8kn)(k+4n)$, the leaf-labeling $\varphi: L(\mathcal{B})\rightarrow \{0,1\}^{2n+t}$,  will assign a binary string with coordinates $(x_1, y_1, \ldots, x_n, y_n, e_1, \ldots, e_t)$ to each leaf. The $x_i$ coordinates will correspond to the $v_i$ variables and the $y_i$ coordinates will correspond to the $w_i$ variables of $\Psi(\Gamma)$. The $e_i$ coordinates will be for additional ones, used in a manner similar to the additional ones in the previous sections for star trees. 

In this section and the next, we denote the left child of a non-leaf vertex $v$ by $v_\ell$ and the right child by $v_r$. The height of a vertex is its graph distance from the root. 
The construction of $\mathcal{B}_i$ with its leaf labeling $\varphi$ will come in Definition~\ref{def: Bi}, but first we need some preliminary definitions. 

For any clause $c = v_\alpha \vee v_\beta \vee v_\gamma$ which is the disjunction of 3 distinct literals, \textcite{miklos} defined a \textit{unit subtree}, $\mathcal{U}$, with 248 leaves. They also defined a leaf-labeling $\hat{\varphi}: L(\mathcal{U}) \rightarrow \{0,1\}^{151}$ where the binary strings in the range have coordinates $(\hat{x}_\alpha, \hat{x}_\beta, \hat{x}_\gamma, \hat{e}_1, \hat{e}_2, \ldots, \hat{e}_{148})$. The first three coordinates correspond to the variables in $c_i$ and the remaining 148 coordinates are for additional ones. 
This unit subtree has some useful properties which will be discussed after Definition~\ref{def: Bi}.

For each $i\in [k+4n]$, let $\mathcal{U}_i$ be the unit subtree for clause $c_i$.
If $i\leq k$ where $c_i$ relates $v_\alpha, v_\beta, v_\gamma$, then $\mathcal{U}_i$ will have leaf labels with coordinates $\{x_\alpha, x_\beta, x_\gamma\}$ and 148 coordinates for additional ones. If $i> k$ where $c_i$ relates variables $v_\alpha, w_\alpha, v_{\alpha+1}$, then $\mathcal{U}_i$ will have leaf labels with coordinates $\{x_\alpha, y_\alpha, x_{\alpha+1}\}$ and 148 coordinates for additional ones. 

\begin{defn}
The tree $\mathcal{T}_i$ in Step 5 of Definition~\ref{def: T_i} is a comb joining $16n^2 + 8kn$ copies of $\mathcal{U}_i$, as in Figure~\ref{comb}. 
\label{def C_i}
\end{defn}

\begin{figure}
\centering
\begin{tabular}{c}
\begin{tikzpicture}[every node/.style={scale=.8}]

\draw (3,2.2) [fill=black] circle(0.01);
\draw (2.85,2.1) [fill=black] circle(0.01);
\draw (2.7, 2) [fill=black] circle(0.01);

\draw (0,1)--(3.3,3.2);
\draw (.6,1.4)--(1.2,1);
\draw (1.5,2)--(2.1,1.6);
\draw (3.3,3.2)--(3.9,2.8);
\draw (-.42,0)--(0,1)--(.42,0);
\foreach \x in {1,2,3,4} \draw [fill=black]({.06-\x*.12},0) circle(0.03);
\foreach \x in {4} \draw [fill=black]({-.06+\x*.12},0) circle(0.03);
\foreach \x in {1.3,2,2.8} \draw [fill=black]({-.06+\x*.12},0) circle(0.01);
\draw ({0},.4) node {$\mathcal{U}_i$};

\begin{scope}[shift={+(1.2,0)}]
\draw (-.42,0)--(0,1)--(.42,0);
\foreach \x in {1,2,3,4} \draw [fill=black]({.06-\x*.12},0) circle(0.03);
\foreach \x in {4} \draw [fill=black]({-.06+\x*.12},0) circle(0.03);
\foreach \x in {1.3,2,2.8} \draw [fill=black]({-.06+\x*.12},0) circle(0.01);
\draw ({0},.4) node {$\mathcal{U}_i$};
\end{scope}

\begin{scope}[shift={+(2.1,.6)}]
\draw (-.42,0)--(0,1)--(.42,0);
\foreach \x in {1,2,3,4} \draw [fill=black]({.06-\x*.12},0) circle(0.03);
\foreach \x in {4} \draw [fill=black]({-.06+\x*.12},0) circle(0.03);
\foreach \x in {1.3,2,2.8} \draw [fill=black]({-.06+\x*.12},0) circle(0.01);
\draw ({0},.4) node {$\mathcal{U}_i$};
\end{scope}

\begin{scope}[shift={+(3.9,1.8)}]
\draw (-.42,0)--(0,1)--(.42,0);
\foreach \x in {1,2,3,4} \draw [fill=black]({.06-\x*.12},0) circle(0.03);
\foreach \x in {4} \draw [fill=black]({-.06+\x*.12},0) circle(0.03);
\foreach \x in {1.3,2,2.8} \draw [fill=black]({-.06+\x*.12},0) circle(0.01);
\draw ({0},.4) node {$\mathcal{U}_i$};
\end{scope}

\node at (5,0) {};
\node at (-1,0) {};
\end{tikzpicture}
\end{tabular}
\caption[Comb connecting copies of $\mathcal{U}_i$ to create $\mathcal{T}_i$.]{Comb connecting $16n^2 + 8kn$ copies of $\mathcal{U}_i$ to create $\mathcal{T}_i$}
\label{comb}
\end{figure}
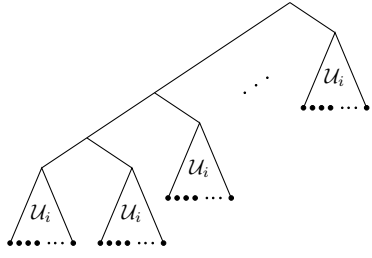

\begin{defn}
For three literals $a,b,c$, we define $S(a,b,c)$ to be the complete binary tree of height 3 with root $\rho$ with the vertices labeled with equations as follows: 
\begin{enumerate}
\item Assign the label ``$a=0$'' to vertex $\rho_\ell$ and ``$a=1$'' to $\rho_r$. 
\item For each vertex $u$ of height 1, assign the label ``$b=0$'' to $u_\ell$ and ``$b=1$'' to $u_r$. 
\item For each vertex $v$ of height 2, assign the label ``$c=0$'' to $v_\ell$ and ``$c=1$'' to $v_r$.
\end{enumerate}
This tree is pictured on the right in Figure~\ref{sorting tree}. We will use the representation on the left in place of $S(a,b,c)$ in future figures. 
\end{defn}

Next we construct $\hat{\mathcal{B}_i}$ which will have the same tree structure as the desired $\mathcal{B}_i$. However, $\hat{\mathcal{B}_i}$ will have all of its vertices labeled with equations while $\mathcal{B}_i$ will only have leaf labels which are binary strings. The leaf labeling of $\mathcal{B}_i$ will be induced by the vertex labels of $\hat{\mathcal{B}_i}$. Each leaf will essentially inherit the labels of its ancestors.

\begin{defn}
Fix $i\in [k+4n]$. Construct $\hat{\mathcal{B}_i}$, a binary tree with vertex labels, as follows.
\begin{enumerate}
\item[A.] If $i\in [k]$, then say clause $c_i$ has variables $v_\alpha,v_\beta,v_\gamma$. The construction of $\hat{\mathcal{B}_i}$ described below is drawn in Figure~\ref{Bin_Reduct_Tree}.
\begin{enumerate}
\item Draw a vertex $\rho^i$ with two children, $\rho^i_\ell$ and $\rho^i_r$. 
\item Label vertex $\rho^i_\ell$ with the equations ``$x_j=y_j=0$'' for each $j\in [n]\setminus\{\alpha, \beta, \gamma\}$. Label $\rho^i_r$ with ``$x_j=y_j=1$'' for all $j\in [n]\setminus\{\alpha, \beta, \gamma\}$.
\item From each of $\rho^i_\ell$ and $\rho^i_r$, hang a copy of $S(y_\alpha, y_\beta, y_\gamma)$. 
\item From each leaf of each copy of $S(y_\alpha, y_\beta, y_\gamma)$, hang a copy of $S(x_\alpha, x_\beta, x_\gamma)$.
\item Delete the left-most copy of $S(x_\alpha, x_\beta, x_\gamma)$, the one which hangs below the vertices with labels ``$y_{\alpha}=0$,'' ``$y_\beta=0$,'' ``$y_{\gamma}=0$,'' and with ancestor $\rho_{\ell}^i$, and replace it with a copy of the comb $\mathcal{T}_i$ from Definition~\ref{def C_i}.  
\end{enumerate}

\item[B.] If $i\in \{k+1, \ldots, k+4n\}$, then clause $c_i$ relates variables $v_\alpha, w_\alpha, v_{\alpha+1}$.
The construction of $\hat{\mathcal{B}_i}$ described below requires only a change of variables from the previous construction. 
\begin{enumerate}
\item Draw a vertex $\rho^i$ with two children, $\rho^i_\ell$ and $\rho^i_r$. 
\item Label $\rho^i_\ell$ with the system of equations ``$x_j=y_j=0$'' for all $j\in [n]\setminus\{\alpha, \alpha+1, \alpha+2\}$. Label $\rho^i_r$ with equations ``$x_j=y_j=1$'' for each $j\in [n]\setminus\{\alpha, \alpha+1, \alpha+2\}$.
\item From each of $\rho^i_\ell$ and $\rho^i_r$, hang a copy of $S(y_{\alpha+1}, x_{\alpha+2}, y_{\alpha+2})$. 
\item Hang a copy of $S(x_\alpha, y_\alpha, x_{\alpha+1})$ from each leaf of each and every copy of $S(y_{\alpha+1}, x_{\alpha+2}, y_{\alpha+2})$.
\item Delete the left-most copy of $S(x_\alpha, y_\alpha, x_{\alpha+1})$ and replace it with a copy of comb $\mathcal{T}_i$ from Definition~\ref{def C_i}.  
\end{enumerate}
\end{enumerate}
\label{def: T_i}
\end{defn}

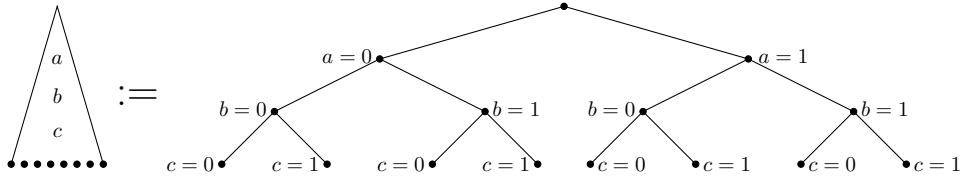
\begin{figure} 
\centering
\begin{tabular}{c}
\begin{tikzpicture}[scale=.7, every node/.style={scale=.8}]
\foreach \x in {0,2,4,6, 7,9,11,13} \draw [fill=black](\x,0) circle(0.06);
\foreach \x in {1,5,8,12} \draw [fill=black](\x,1) circle(0.06);
\foreach \x in {3,10} \draw [fill=black](\x,2) circle(0.06);
\draw [fill=black](6.5,3) circle(0.06);
\foreach \x in {0,4,7,11} \draw (\x,0)--(\x+1,1)--(\x+2,0);
\foreach \x in {1,8} \draw (\x,1)--(\x+2,2)--(\x+4,1);
\draw (3,2)--(6.5,3)--(10,2);
\foreach \x in {0,4} \draw (\x,0) node [left] {$c=0$};
\foreach \x in {2,6} \draw (\x,0) node [left] {$c=1$};
\foreach \x in {7,11} \draw (\x,0) node [right] {$c=0$};
\foreach \x in {9,13} \draw (\x,0) node [right] {$c=1$};
\foreach \x in {1,8} \draw (\x,1.05) node [left] {$b=0$};
\foreach \x in {5,12} \draw (\x,1.05) node [right] {$b=1$};
\draw (3,2.05) node [left] {$a=0$};
\draw (10.05,2.05) node [right] {$a=1$};

\begin{scope}[shift={+(-20,0)}]
\draw (16,0)--({16+1.75/2},3)--(17.75,0);
\foreach \x in {16,16.25,16.5, 16.75, 17, 17.25, 17.5, 17.75} \draw [fill=black](\x,0) circle(0.06);
\draw ({16+1.75/2},2) node {$a$};
\draw ({16+1.75/2},1.3) node {$b$};
\draw ({16+1.75/2},.6) node {$c$};
\draw (18.4,1.3) node {\Huge{$:=$}};
\end{scope}
\end{tikzpicture}
\end{tabular}
\caption{The labeled binary tree on the right is $S(a,b,c)$. The representation on the left will be used in place of $S(a,b,c)$ in future figures.}
\label{sorting tree}
\end{figure}

\begin{figure}
\centering
\begin{tabular}{c}
\begin{tikzpicture}[every node/.style={scale=0.85}]

\begin{scope}[shift={+(0,0)}]
\draw (-.42,0)--(0,1.5)--(.42,0);
\foreach \x in {1,2,3,4} \draw [fill=black]({.06-\x*.12},0) circle(0.03);
\foreach \x in {4} \draw [fill=black]({-.06+\x*.12},0) circle(0.03);
\foreach \x in {1.3,2,2.8} \draw [fill=black]({-.06+\x*.12},0) circle(0.01);
\draw ({0},.5) node {$\mathcal{T}_i$};
\end{scope}

\begin{scope}[shift={+(1.1,0)}]
\draw (-.42,0)--(0,1.5)--(.42,0);
\foreach \x in {1,2,3,4} \draw [fill=black]({.06-\x*.12},0) circle(0.03);
\foreach \x in {1,2,3,4} \draw [fill=black]({-.06+\x*.12},0) circle(0.03);
\draw ({0},.9) node {$x_1$};
\draw ({0},.6) node {$x_2$};
\draw ({0},.3) node {$x_3$};
\end{scope}

\begin{scope}[shift={+(3.5,0)}]
\draw (-.42,0)--(0,1.5)--(.42,0);
\foreach \x in {1,2,3,4} \draw [fill=black]({.06-\x*.12},0) circle(0.03);
\foreach \x in {1,2,3,4} \draw [fill=black]({-.06+\x*.12},0) circle(0.03);
\draw ({0},.9) node {$x_1$};
\draw ({0},.6) node {$x_2$};
\draw ({0},.3) node {$x_3$};
\end{scope}

\draw (2.3,.7) node {$\cdots$};

\draw (0,1.5)--(1.75,3)--(3.5,1.5);
\draw (1.75,1.2+1.5) node {$y_1$};
\draw (1.75,.8+1.5) node {$y_2$};
\draw (1.75,.4+1.5) node {$y_3$};

\foreach \x in {0,1.1,  1.5, 1.9  ,2.3, 2.7, 3.1 ,3.5} \draw [fill=black] (\x,1.5) circle (0.04);

\begin{scope}[shift={+(5,0)}]
\begin{scope}[shift={+(0,0)}]
\draw (-.42,0)--(0,1.5)--(.42,0);
\foreach \x in {1,2,3,4} \draw [fill=black]({.06-\x*.12},0) circle(0.03);
\foreach \x in {1,2,3,4} \draw [fill=black]({-.06+\x*.12},0) circle(0.03);
\draw ({0},.9) node {$x_1$};
\draw ({0},.6) node {$x_2$};
\draw ({0},.3) node {$x_3$};
\end{scope}

\begin{scope}[shift={+(1.1,0)}]
\draw (-.42,0)--(0,1.5)--(.42,0);
\foreach \x in {1,2,3,4} \draw [fill=black]({.06-\x*.12},0) circle(0.03);
\foreach \x in {1,2,3,4} \draw [fill=black]({-.06+\x*.12},0) circle(0.03);
\draw ({0},.9) node {$x_1$};
\draw ({0},.6) node {$x_2$};
\draw ({0},.3) node {$x_3$};
\end{scope}

\begin{scope}[shift={+(3.5,0)}]
\draw (-.42,0)--(0,1.5)--(.42,0);
\foreach \x in {1,2,3,4} \draw [fill=black]({.06-\x*.12},0) circle(0.03);
\foreach \x in {1,2,3,4} \draw [fill=black]({-.06+\x*.12},0) circle(0.03);
\draw ({0},.9) node {$x_1$};
\draw ({0},.6) node {$x_2$};
\draw ({0},.3) node {$x_3$};
\end{scope}

\draw (2.3,.7) node {$\cdots$};

\draw (0,1.5)--(1.75,3)--(3.5,1.5);
\draw (1.75,1.2+1.5) node {$y_1$};
\draw (1.75,.8+1.5) node {$y_2$};
\draw (1.75,.4+1.5) node {$y_3$};
\foreach \x in {0,1.1,  1.5, 1.9  ,2.3, 2.7, 3.1 ,3.5} \draw [fill=black] (\x,1.5) circle (0.04);
\end{scope}

\foreach \x in {1.75,6.75} \draw [fill=black] (\x,3) circle (0.04);
\draw (1.75,3)--(4.25,4)--(6.75,3);
\draw [fill=black] (4.25,4) circle (0.04);
\draw (1.95,2.9) node [above left] {\begin{tabular}{c}$x_j=y_j=0$\\ $\forall j\in [n], j\geq 4$ \end{tabular}};
\draw (6.55,2.9) node [above right] {\begin{tabular}{c}$x_j=y_j=1$\\ $\forall j\in [n], j\geq 4$ \end{tabular}};
\end{tikzpicture}
\end{tabular}
\caption{The binary tree $\hat{\mathcal{B}_i}$, for $i\in [k]$ created for clause $c_i=x_1\vee x_2 \vee x_3$.}
\label{Bin_Reduct_Tree}
\end{figure}
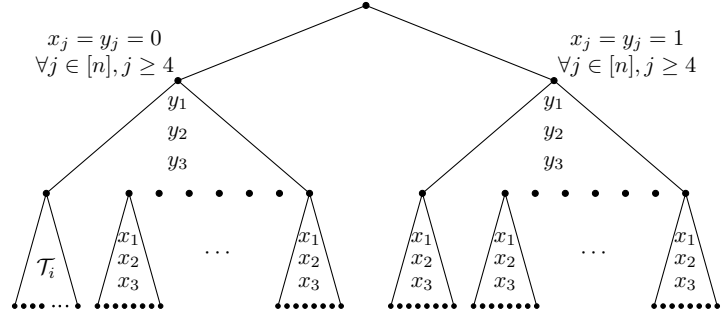

Recall $\mathcal{B}$ is a comb connecting binary trees $\mathcal{B}_i$. The binary tree $\B$ has a leaf labeling $\varphi: L(\mathcal{B}) \rightarrow \{0,1\}^{2n+t}$ where \[t=148(16n^2 + 8kn)(k+4n).\] Each leaf label will have coordinates $(x_1, y_1, \ldots, x_n, y_n, e_1, \ldots, e_t)$. In the next definition, we define $\B_i$ and values of $\varphi$ on the leaves of $\B_i$. 

\begin{defn}
For each $i\in [k+4n]$, the binary tree $\mathcal{B}_i$ will have the same tree structure as $\hat{\mathcal{B}_i}$. We only need to explain the labeling $\varphi: L(\mathcal{B}_i) \rightarrow \{0,1\}^{2n+t}$. 

Partition $[t]$ into classes $E_{ij}$ with $|E_{ij}|=148$ for each $i\in [k+4n]$ and each $j\in [16n^2 + 8kn]$.
Identify the set $E_{ij}$  with the $j^{th}$ copy of $\mathcal{U}_i$ in $\mathcal{T}_i$. Here we define $\varphi(\ell)$ for each leaf $\ell\in L(\mathcal{B}_i)$.

There are two cases: 
\begin{itemize}
\item If leaf $\ell$ is not in subtree $\mathcal{T}_i$, then $\varphi(\ell)[e_s]=0$ for all $s\in [t]$. The value of each $\varphi(\ell)[x_w]$ and $\varphi(\ell)[y_w]$ for $w\in[n]$ is inherited from the labels of the ancestors of $\ell$ as they appeared in $\hat{\mathcal{B}_i}$. 

\item If leaf $\ell$ is in the subtree $\mathcal{T}_i$ within $\hat{\mathcal{B}_i}$, then it is a leaf within the $j^{{th}}$ copy of unit subtree $\mathcal{U}_i$ for some $j\in [16n^2 + 8kn]$. 
Recall $\hat\varphi (\ell) \in \{0,1\}^{151}$.
Identify the coordinates $\{\hat e_1, \hat e_2, \ldots, \hat e_{148}\}$ with the indices in the 148 coordinates of $E_{ij}$ in any order. 
If $\hat e_s$ corresponds to coordinate $e_r$ for $r\in E_{ij}$, then we require $\varphi(\ell)[e_r] = \hat \varphi(\ell)[\hat e_s]$.
If $z$ is one of the three coordinates which correspond to a variable in $c_i$, we also require that $\varphi(\ell)[z]=\hat\varphi(\ell)[\hat z]$. Set $\varphi(\ell)[e_s]= 0$ for $s\not\in E_{ij}$. All other coordinates of $\varphi(\ell)$ will take the value 0 (the value inherited from the labeling of their ancestors in $\hat{\mathcal{B}_i}$). 
\end{itemize}
\label{def: Bi}
\end{defn}

Define $\varphi_{x_j}: L(\mathcal{B}) \rightarrow \{0,1\}$ so that $\varphi_{x_j}(\ell) = \varphi(\ell)[x_j] $. Define $\varphi_{y_j}$ and $\varphi_{e_j}$ similarly. 
We want to examine the Fitch solutions on $\B$ for each $\varphi_{x_j}$, $\varphi_{y_i}$, and $\varphi_{e_j}$. We will first prove that the conditions of Lemma~\ref{Lemma: Fitch} hold and thus Fitch's algorithm find all most parsimonious labelings for $\varphi$ on $\B$. 

We first explore the Fitch solutions for $\varphi_{e_j}$, $j\in [t]$.
\begin{fact}
Fix $j\in[t]$. There is only one $\ell\in L(\B)$ with $\varphi_{e_j}(\ell)=1$.  After running Part 1 of Fitch's algorithm, $B(\ell) =\{1\}$, the parent $v$ of $\ell$ has $B(v)=\{0,1\}$, and $B(u)=\{0\}$ for all other vertices. 
Consequently, Part 2 of Fitch's algorithm will output a most parsimonious labeling $\varphi'_{e_j}$ such that $\varphi'_{e_j}(\ell)=1$ and for all other vertices $u\in V(\mathcal{B})$, $\varphi'_{e_j}(u)=0$.
\label{Fitch ej}
\end{fact}
\begin{proof}
These values of $B$ follow directly from the description of $\varphi(\ell)[e_j]$, for leaf $\ell$, which was given in Definition~\ref{def: Bi}. The conclusion follows from the definition of $\varphi'$ \eqref{Fitch phi}.
\end{proof}

\begin{fact}
For $j\in [t]$, there is only one most parsimonious labeling $\varphi'_{e_j}$ which extends leaf labeling $\varphi_{e_j}$ of $\B$. 
\end{fact}
\begin{proof}
Recall that most parsimonious labelings minimize the sum of Hamming distances between adjacent vertices in the tree. The most parsimonious labeling obtained from Fitch's algorithm has \[\sum_{uv\in E(\B)} H(\varphi'_{e_j}(u), \varphi'_{e_j}(v)) =1.\] Because there is only one leaf $\ell$ with $\varphi_{e_j}(\ell)=1$, the $\varphi'_{e_j}$ obtained from Fitch's algorithm is the only extension of $\varphi_{e_j}$ with the sum of Hamming distances equal 1. 
\end{proof}

Fix $j\in [n]$. Next we consider the most parsimonious labelings for $\varphi_{x_j}$ on $\B$. The same arguments will hold for each $\varphi_{y_j}$. 

Run Part 1 of Fitch's algorithm on $\B$ with leaf labeling $\varphi_{x_j}$. For those clauses $c_i$ which contain variable $v_j$, we have the following result.

\begin{prop}[\cite{miklos}]
Fix a clause $c_i$. Suppose variable $v_j$ is in $c_i$ with coordinate $x_j$ corresponding to variable $v_j$.  Let $r^i$ be the root of unit subtree $\mathcal{U}_i$ for $c_i$.  Run Fitch's algorithm on $\mathcal{U}_i$ with leaf labeling $\varphi_{x_j}$. The following hold:
\begin{enumerate}
\item  $B(r^i)=\{0,1\}$.
\item For $u,v\in V(U_i)$, if $v$ is a child of $u$, then $B(v)=\{0,1\} \Rightarrow B(u)=\{0,1\}$.  
\end{enumerate}
\label{unit Fitch}
\end{prop}

In a single copy of $S(a,b,c)$, all vertices of the same distance from the root either have $B(v)\in \{\{0\}, \{1\}\}$ or all of them have $B(v) = \{0,1\}$. This fact together with Proposition~\ref{unit Fitch} implies that, when Fitch's algorithm is run on $\mathcal{B}$ with leaf-labeling $\varphi(x_i)$, for any $u,v\in V(T)$ with $v$ a child of $u$, \[B(v)=\{0,1\} \Rightarrow B(u)=\{0,1\}.\] 
With this result and the structure of each $S(a,b,c)$, 
by Lemma~\ref{Lemma: Fitch}, we can conclude that Fitch's algorithm finds all most parsimonious labelings of $\B$ that extend $\varphi_{x_j}$.
Further, $B(\rho)=\{0,1\}$ implies there are exactly two such most parsimonious labelings. 

As mentioned earlier, these results also hold for coordinate $y_i$. Fitch's algorithm finds the only two most parsimonious labelings that extend $\varphi_{y_j}$ on $\B$. 

For most parsimonious labeling $\varphi'$ that extends $\varphi$, on each $v\in V(T)$, notate $\varphi'(v)[x_j]$ by $\varphi'_{x_j}$. Likewise, define the notations $\varphi'_{y_j}$ and $\varphi'_{e_s}$.

\begin{lemma}
For leaf labeling $\varphi$ of $\B$, Fitch's algorithm finds all most parsimonious labelings. 
Each is characterized by the string it assigns to the root $\rho$ of $\mathcal{B}$ and there are precisely $2^{2n}$ most parsimonious labelings, one for each root label in $\{0,1\}^{2n}\times \{0\}^t$. 
\end{lemma}
\begin{proof}
Given a most parsimonious labeling $\varphi'$ that extends $\varphi$, each $\varphi'_{x_j}$, $\varphi'_{y_j}$, and $\varphi'_{e_s}$ is a most parsimonious labeling for that coordinate. So it suffices to first find all most parsimonious scenarios for the leaf labelings $\varphi_{x_j}, \varphi_{y_j}, \varphi_{e_s}$ for all $j\in [n]$ and $s\in [t]$ and take combinations of these labelings. 

We have already seen that Fitch's algorithm will find all most parsimonious labelings for $\varphi_{x_j}$ and $\varphi_{y_j}$, and there are exactly 2 of each. Fitch's algorithm will also find the one and only most parsimonious labeling for $\varphi_{e_s}$. Therefore, there are $2^{2n}$ most parsimonious labelings of $\B$ that extend $\varphi$. Part 2 of Fitch's algorithm shows that each most parsimonious labeling is characterized by the string it assigns to $\rho$. 
Since $B(\rho)=\{0,1\}$ for each $\varphi_{x_j}$ and $\varphi_{y_j}$ and $B(\rho)=\{0\}$ for each $\varphi_{e_s}$, the possible strings for $\varphi'(\rho)$ are $\{0,1\}^{2n} \times \{0\}^t$. 
\end{proof}

Set $\mathcal{M}:= \{0,1\}^{2n}\times \{0\}^t$. 

\begin{defn}
There is a bijection between $\mathcal{M}$ and the possible truth assignments for $\Psi(\Gamma)$. In particular, given any $\mu\in \mathcal{M}$, define a truth assignment for variables $\{v_i\}_{i=1}^n \cup \{w_i\}_{i=1}^n$ as follows: 
\begin{itemize}
\item For each $i\in[n]$, let $v_i$ be assigned the value true if $\mu[x_i]=1$ and false otherwise.
\item For each $i\in[n]$, let $w_i$ be assigned the value true if $\mu[y_i]=1$ and false otherwise.
\end{itemize}
Define $\mathcal{M}_{\Psi(\Gamma)}$ to be the set of $\mu\in \mathcal{M}$ which correspond to satisfying truth assignments for $\Psi(\Gamma)$. 
Likewise, for any $\Theta$, a clause or conjunction of clauses from ${\Psi(\Gamma)}$, define $\mathcal{M}_{\Theta}$ to be the set of $\mu\in \mathcal{M}$ which correspond to satisfying truth assignments for $\Theta$. 
\end{defn}

Now we know that each most parsimonious labeling of $\B$ extending $\varphi$ is found using Fitch's algorithm and is characterized by the binary string it assigns to the root. From here, we are interested in the number of scenarios admitted by each of these most parsimonious labelings. Ultimately, we wish to make a distinction between the binary strings in $\mathcal{M}_{\Psi(\Gamma)}$ and those in $\mathcal{M} \setminus \mathcal{M}_{\Psi(\Gamma)}$ by examining the number of scenarios admitted by the corresponding most parsimonious labeling. 

Let $\varphi'$ be a most parsimonious labeling for $\mathcal{B}$. The number of scenarios which are admitted by $\varphi'$ is precisely
\[\mathcal{H}(\varphi'(\rho)):=\prod_{uv\in E(\mathcal{B})} H(\varphi'(u), \varphi'(v))!  .\]
To calculate this, we partition the edges of $\B$ into 4 sets. 

First, consider the edges of the comb which connects $\{\B_i\}_{i=1}^{k+4n}$ to form $\mathcal{B}$. 
Part 2 of Fitch's algorithm will set $\varphi'(\rho) = \varphi'(\rho^{i})$ where $\rho^{i}$ is the root of $\B_i$. So the Hamming distance along each of these edges is 0. 

Next we look within each $\mathcal{B}_i$.

\begin{claim}
Set $\Phi:=\bigwedge \Phi_\iota$ as defined in \eqref{def Phi2}. For $i\in [k+4n]$, let $\rho^i$ be the root of $\B_i$ with children $\rho^i_{\ell}$ and $\rho^i_{r}$. Set $\eta:=\varphi'(\rho^i)$. 
If $\eta\in \M'_{\Phi}$, then 
\[H(\eta,\varphi'(\rho^i_\ell))=H(\eta,\varphi'(\rho^i_r))=n-3.\]
Otherwise 
\[(n-3)! ^2\leq H(\eta,\varphi'(\rho^i_\ell))!\cdot H(\eta,\varphi'(\rho^i_r))!\leq (2n-6)!0!.\]
\end{claim}

\begin{proof}
Suppose $\eta\in \M'_{\Phi}$. Then for each $j\in [n]$ considered in Step 2 of Definition~\ref{def: T_i} (there are $n-3$ such $j$) , if $\eta[x_j]=0$ then we have the following properties: 
\begin{itemize}
\item $\eta[y_j]=1$ because $\eta$ corresponds to a satisfying assignment for ${\Phi}$,
\item $0=\eta[x_j] \neq \varphi'(\rho^i_r)[x_j]=1$,
\item $1=\eta[y_j]\neq \varphi'(\rho^i_\ell)[y_j] =0$. 
\end{itemize}
On the other hand, if $\eta[x_j]=1$ then we have the following properties: 
\begin{itemize}
\item  $\eta[y_j]=0$ because $\eta$ corresponds to a satisfying assignment for ${\Phi}$,
\item $1=\eta[x_j] \neq \varphi'(\rho^i_{\ell})[x_j] =0$,
\item $0=\eta[y_j] \neq \varphi'(\rho^i_r)[y_j] =1$.
\end{itemize}
For each $s\in [t]$, $\eta[e_s] = \varphi'(\rho^i_\ell)[e_s] = \varphi'(\rho^i_r)[e_s] =0$. 
For each $j\in [n]$ which was not considered in Step 2 of Definition~\ref{def: T_i}, $\eta[x_j] = \varphi'(\rho^i_\ell)[x_j] = \varphi'(\rho^i_r)[x_j]$ and 
$\eta[y_j] = \varphi'(\rho^i_\ell)[y_j] = \varphi'(\rho^i_r)[y_j]$ because the $B$ values (from Fitch's algorithm) for these coordinates at these vertices will be $\{0,1\}$.
Thus 
\[H(\eta,\varphi'(\rho^i_\ell))=H(\eta,\varphi'(\rho^i_r))=n-3.\]

Alternatively, if $\eta\not\in \M_{\Phi}$, then $H(\eta,\varphi'(\rho^i_\ell))+H(\eta,\varphi'(\rho^i_r))=2n-6$ because each $x_i$ and each $y_i$ will contribute 1 to one of the Hamming distances. 
Using the convexity of the factorial, this establishes the last line of the claim.
\end{proof}

Based on the construction of $S(a,b,c)$, the Hamming distance $H(\varphi'(u), \varphi'(v))$ for each edge $uv$ in each copy of $S(a,b,c)$ is exactly 1.

The only piece remaining is $\mathcal{T}_i$. We make the following remarks for the clause $c_i = v_1 \vee v_2 \vee v_3$ to make the explanation easier. However, the arguments can be extended for any clause $c_i$ in $\Psi(\Gamma)$.

\begin{fact}
If $t^i$ is the root of $\mathcal{T}_i$ and $r^i$ is the root of one of the copies of $\mathcal{U}_i$ below $\mathcal{T}_i$, then running Fitch's algorithm for each coordinate, we find
\begin{itemize}
\item $B(t^i) = B(r^i) = \{0,1\}$ for each $x_i$, $i\in [3]$, by Proposition~\ref{unit Fitch}. 
\item $B(t^i) = B(r^i)= \{0\}$ for each $x_i$, $i\geq 4$, by the construction of $\mathcal{B}_i$.
\item $B(t^i)  = B(r^i) = \{0\}$ for each $y_i$, $i\in [n]$, by the construction of $\mathcal{B}_i$. 
\item  $B(t^i) = B(r^i) = \{0\}$ for each $e_s$, $s\in [t]$, because there is only one leaf $\ell\in L(\mathcal{B})$ with $\varphi(\ell)[e_s] = 1$. 
\end{itemize}
\end{fact}
Therefore, it is easy to see that, along the edges of the comb which connect the copies of $\mathcal{U}_i$, the Hamming distances will be 0.

Next we turn our attention to a single copy of $\mathcal{U}_i$, say the $j^{th}$ copy. 
\begin{fact}
Fix $\Gamma$ and build binary tree $\mathcal{B}$. Fix a most parsimonious labeling $\varphi'$ which extends leaf labeling $\varphi$. For clause $c_i = v_1 \vee v_2 \vee v_3$, we have the following characteristics for each $v\in \mathcal{U}_i$, 
\begin{itemize}
\item for $s\geq 4$, $\varphi'(v)[x_s] = 0$,
\item for $s\in [n]$,  $\varphi'(v)[y_s] = 0$,
\item for $s\not\in E_{ij}$, $\varphi'(v)[e_s] = 0$.
\end{itemize}
\label{mpl on U}
\end{fact}
Therefore, only the values of $\varphi'(v)$ on the coordinates $x_1, x_2, x_3$ and $e_s$ for $s\in E_{ij}$ will affect the Hamming distances along the edges in $\mathcal{U}_i$. These are precisely the 151 coordinates that appeared in the original labeling $\hat\varphi$ of the leaves of $\mathcal{U}_i$ given by \textcite{miklos}. For each $v\in V(\mathcal{U}_i)$, define $\hat\varphi'(v): V(\mathcal{U}_i) \rightarrow \{0,1\}^{151}$ to be the restriction of $\varphi'(v)$ to these 151 coordinates. In particular, $\hat\varphi'$ is a most parsimonious labeling on $\mathcal{U}_i$ which extends leaf labeling $\hat\varphi$. 

The following fact is a consequence of Fact~\ref{mpl on U}.
\begin{fact}
Let $r^i$ be the root of $\mathcal{U}_i$. If $\varphi'(r^i) = \hat\varphi'(r^i)$, then for each $uv\in E(\mathcal{U}_i)$, 
\[H(\varphi'(u), \varphi'(v)) = H(\hat\varphi'(u), \hat\varphi'(v)).\]
\end{fact}

As a result 
\[\prod_{uv\in \mathcal{U}_i} H(\varphi'(u), \varphi'(v))! = \prod_{uv\in \mathcal{U}_i} H(\hat\varphi'(u), \hat\varphi'(v))!.\]
This is calculated as follows: 
\begin{fact}[\cite{miklos}]
Fix $i\in [k+4n]$, the binary tree $\mathcal{U}_i$ with root $r^i$, and leaf-labeling $\hat\varphi$. Then for any most parsimonious labeling $\hat\varphi'$ which extends $\hat\varphi$:
\begin{enumerate}
\item If $\hat\varphi'(r^i)$ corresponds to a satisfying truth assignment for $c_i$, then 
\[\prod_{uv\in \mathcal{U}_i} H(\hat\varphi'(u), \hat\varphi'(v))! = 2^{156} \times 3^{64}.\] 
\item If $\hat\varphi'(r^i)$ corresponds to a  truth assignment which does not satisfy $c_i$, then \[\prod_{uv\in \mathcal{U}_i} H(\hat\varphi'(u), \hat\varphi'(v))! = 2^{136} \times 3^{76}.\]
\end{enumerate}
\end{fact}

Since $\varphi'(\rho) = \varphi'(r^i) = \hat\varphi'(r^i)$, $\varphi'(\rho)$ corresponds to a satisfying truth assignment for $c_i$ if and only if $\hat\varphi'(r^i)$ also corresponds to a satisfying truth assignment for $c_i$. 

As a result of the above discussion, we have proven the following claim. 
\begin{claim}
Fix $i\in [k+4n]$. If $\varphi'(\rho)$ corresponds to a satisfying truth assignment for clause $c_i$ and $\bigwedge_{\iota\in[n]} \Phi_\iota$, then 
\[\prod_{uv\in E(\B_i)} H(\varphi'(u), \varphi'(v))!
  = (n-3)!^2 \left(2^{156} \times 3^{64}\right)^{16n^2 + 8kn}.\]
If $\varphi'(\rho)$ corresponds to a  truth assignment which does not satisfy $c_i$, then 
\begin{align*}
(n-3)!^2 \left(2^{136} \times 3^{76}\right)^{16n^2 + 8kn} 
&\leq  \prod_{uv\in E(\B_i)} H(\varphi'(u), \varphi'(v))!  \\
 & \leq (2n-6)! \left(2^{136} \times 3^{76}\right)^{16n^2 + 8kn}.
 \end{align*}
If $\varphi'(\rho)$ corresponds to a  truth assignment which satisfies $c_i$ but does not satisfy $\bigwedge_{i\in[n]} \Phi_i$, then
\begin{align*}
(n-3)!^2 \left(2^{156} \times 3^{64}\right)^{16n^2 + 8kn} 
&<    \prod_{uv\in E(\B_i)} H(\varphi'(u), \varphi'(v))! \\
& \leq (2n-6)! \left(2^{156} \times 3^{64}\right)^{16n^2 + 8kn}.
\end{align*}
\end{claim}

Observe,
\begin{align*}
\frac{ (2n-6)!\left[2^{136} \times 3^{76}\right]^{16n^2+8kn}}{(n-3)!^2\left[2^{156}\times 3^{64}\right]^{16n^2+8kn}}
&= \left[\frac{ 3^{12} }{ 2^{20} }\right]^{16n^2+8kn} \binom{2n-6}{n-3} \\
&<  \left[\frac{ 3^{12} }{ 2^{20} }\right]^{16n^2+8kn} 2^{2n} \\
&<  \left[\frac{ 3^{12} }{ 2^{20} }\right]^{16n^2+8kn} 2^{2n+k} \\
&=\left[\frac{ 3^{12} }{ 2^{20-1/(8n) }}\right]^{16n^2+8kn} \\
&< \left[\frac{3^{12}}{2^{19.5}}\right]^{16n^2+8kn} \\
 &< 1.
\end{align*}

Consequently, 
 \begin{eqnarray*} (2n-6)! \left(2^{136} \times 3^{76}\right)^{16n^2+8kn} \,
& < \, (n-3)! ^2  \left(2^{156}\times 3^{64}\right)^{16n^2+8kn}\\
& < (2n-6)! \left(2^{156} \times 3^{64}\right)^{16n^2+8kn}.
  \label{scenario comp}\end{eqnarray*}

\begin{claim}
 If $\varphi'(\rho)$ corresponds to a satisfying truth assignment for $\Psi(\Gamma)$, then
 \[\mathcal{H}(\varphi'(\rho))
  = \left[ (n-3)!^2 \left(2^{136} \times 3^{76}\right)^{16n^2+8kn}\right] ^{k+4n}=: B_{good}.\]
If $\varphi'(\rho)$ corresponds to a  truth assignment which does not satisfy $\Psi(\Gamma)$, then there must be a clause $c_i$ for some $i\in [k+4n]$ which is not satisfied. Therefore, 
\begin{align*}
\mathcal{H}(\varphi'(\rho))
   \leq  &(2n-6)!\left(2^{136} \times 3^{76}\right)^{16n^2+8kn}\\
   &\cdot \left[(2n-6)!\left(2^{156}\times 3^{64}\right)^{16n^2+8kn}\right]^{k+4n-1}\\
   =: & B_{bad}.
   \end{align*}
\end{claim}

Define 
\[B_{total}: = \sum_{\varphi'}  \mathcal{H}(\varphi'(\rho))\]
which is the total number of most parsimonious scenarios for $\mathcal{B}$ which extend leaf labeling $\varphi$, as in Definition~\ref{def: SPSCJ}. 

 Given only $B_{total}$, we would like to determine the number of satisfying truth assignments, $|S|$, for $\Psi(\Gamma)$. 
 \begin{align*}
 B_{total} &= \sum_{\eta\in \M'_{\Psi(\Gamma)}} \mathcal{H}(\eta) + \sum_{\eta'\in \M\setminus \M'_{\Psi(\Gamma)}} \mathcal{H}(\eta') \\
 &= |S|B_{good} +  \sum_{\eta'\in \M\setminus \M'_{\Psi(\Gamma)}} \mathcal{H}(\eta').\end{align*}
 As long as $ \sum_{\eta'\in \M\setminus \M'_{\Psi(\Gamma)}} \mathcal{H}(\eta') < B_{good}$, we can conclude that the number of satisfying truth assignments for $\Psi(\Gamma)$ (and for $\Gamma$) is precisely 
 \[\left\lfloor \frac{B_{total}}{B_{good}} \right\rfloor .\]
Observe, for $n\geq 2$,
\begin{align*}
\frac{2^{2n}B_{bad}}{B_{good}}
&=2^{2n} \left[\frac{ 3^{12} }{ 2^{20} }\right]^{16n^2+8kn} \binom{2n-6}{n-3}^{k+4n} \\
&< 2^{2n} \left[\frac{ 3^{12} }{ 2^{20} }\right]^{16n^2+8kn} 2^{2n(k+4n)}\\
&< 2^{8n^2+2kn+2n} \left[\frac{ 3^{12} }{ 2^{20} }\right]^{16n^2+8kn}\\
&< 2^{8n^2+4kn} \left[\frac{ 3^{12} }{ 2^{20} }\right]^{16n^2+8kn}\\
&=\left[\frac{ 3^{12} }{ 2^{20-1/2 }}\right]^{16n^2+8kn}\\
& < 1.
\end{align*}
  Because there are only $2^{2n}$ truth assignments and $2^{2n}$ most parsimonious labelings, we obtain our desired result: 
  \begin{align*}
   \sum_{\eta'\in U} \mathcal{H}(\eta') 
   \leq \sum_{\eta'\in U} B_{bad}
   \leq 2^{2n} B_{bad}
   < B_{good}.
   \end{align*}

  Therefore, if we could determine the total number of most parsimonious scenarios for this binary tree in polynomial time, then we could obtain the total number of satisfying assignments for $\Psi(\Gamma)$ and for $\Gamma$ in polynomial time. This completes the proof. 
\end{proof}

\section{Conclusion}
We proved that it is \#P-complete to calculate the partition function $Z(B,x!)$. However, the existence of an FPAUS for this quantity has not yet been established. Following a number of results relating to calculating $Z(B,f(x))$ exactly for various functions $f(x)$, we where able to prove that, when $\log f(x)$ is strictly decreasing, under mild conditions an FPAUS exists for $Z(B,f(x))$ only if RP=NP. The question of approximating $Z(B,f(x))$ when $\log f(x)$ is strictly increasing remains unsettled. We concluded with a \#P-complete result for the extension of the partition function to binary trees, a natural extension to the bioinformatics interpretation of $Z(B,x!)$.

\section{Acknowledgements}
Heather Smith acknowledges support from DARPA and AFOSR under contract \#FA9550-12-1-0405, NSF DMS  contracts 1300547 and 1344199, and a SPARC Graduate Research Grant from the Office of the Vice President for Research at the University of South Carolina.

\printbibliography

\end{document}